\documentclass[11pt,notitlepage,tightenlines,nofootinbib,superscriptaddress]{revtex4-2}

\usepackage{algorithm}
\usepackage{algpseudocode}

\floatname{algorithm}{Protocol}
\usepackage[dvipsnames]{xcolor}
\usepackage{color}

\let\coloneqq\relax

\usepackage[latin1]{inputenc}
\usepackage{amsthm}
\usepackage{amssymb}
\usepackage{amsmath}
\usepackage{bbold}
\usepackage{bbm}
\usepackage[pdftex, backref=page]{hyperref}
\usepackage{dsfont}
\usepackage{mathdots}
\usepackage{mathtools}
\usepackage{enumerate}
\usepackage[shortlabels]{enumitem}
\usepackage{csquotes}
\usepackage{stmaryrd}
\usepackage[cal=boondox]{mathalfa}
\usepackage{graphicx}
\usepackage{stackengine}
\usepackage{scalerel}
\usepackage{tensor}       
\usepackage{array}
\usepackage{makecell}
\newcolumntype{x}[1]{>{\centering\arraybackslash}p{#1}}
\usepackage{tikz}
\usepackage{pgfplots}
\usetikzlibrary{shapes.geometric, shapes.misc, positioning, arrows, arrows.meta, decorations.pathreplacing, decorations.pathmorphing, patterns, angles, quotes, calc}
\usepackage{booktabs}
\usepackage{xfrac}
\usepackage{siunitx}
\usepackage{centernot}
\usepackage{comment}
\usepackage{chngcntr}
\usepackage{caption}
\usepackage{subcaption}
\usepackage{physics}
\usepackage{epigraph}

\newtheorem{thm}{Theorem}
\newtheorem*{thm*}{Theorem}
\newtheorem{prop}[thm]{Proposition}
\newtheorem*{prop*}{Proposition}
\newtheorem{lemma}[thm]{Lemma}
\newtheorem*{lemma*}{Lemma}
\newtheorem{cor}[thm]{Corollary}
\newtheorem*{cor*}{Corollary}

\newtheorem*{cj*}{Conjecture}
\newtheorem{Def}[thm]{Definition}
\newtheorem*{Def*}{Definition}

\newtheorem*{question*}{Question}

\newtheorem*{problem*}{Problem}

\makeatletter
\def\thmhead@plain#1#2#3{%
  \thmname{#1}\thmnumber{\@ifnotempty{#1}{ }\@upn{#2}}%
  \thmnote{ {\the\thm@notefont#3}}}
\let\thmhead\thmhead@plain
\makeatother

\theoremstyle{definition}
\newtheorem{rem}[thm]{Remark}
\newtheorem*{note}{Note}

\newenvironment{manuallemma}[1]{%
  \manuallemmainner \it
}{\endmanuallemmainner}

\DeclarePairedDelimiter\ceil{\lceil}{\rceil}
\DeclarePairedDelimiter\floor{\lfloor}{\rfloor}

\newcommand{\bb}{\begin{equation}\begin{aligned}\hspace{0pt}}
\newcommand{\bbb}{\begin{equation*}\begin{aligned}}
\newcommand{\ee}{\end{aligned}\end{equation}}
\newcommand{\eee}{\end{aligned}\end{equation*}}
\newcommand*{\coloneqq}{\mathrel{\vcenter{\baselineskip0.5ex \lineskiplimit0pt \hbox{\scriptsize.}\hbox{\scriptsize.}}} =}

\newcommand{\eqt}[1]{\stackrel{\mathclap{\scriptsize \mbox{#1}}}{=}}
\newcommand{\leqt}[1]{\stackrel{\mathclap{\scriptsize \mbox{#1}}}{\leq}}

\newcommand{\geqt}[1]{\stackrel{\mathclap{\scriptsize \mbox{#1}}}{\geq}}
\renewcommand{\ketbra}[1]{\ket{#1}\!\!\bra{#1}}

\newcommand{\ketbraa}[2]{\ket{#1}\!\!\bra{#2}}

\newcommand{\sumno}{\sum\nolimits}

\newcommand{\e}{\varepsilon}
\renewcommand{\epsilon}{\varepsilon}

\newcommand{\id}{\mathds{1}}

\newcommand{\N}{\mathds{N}}

\newcommand{\C}{\mathds{C}}

\newcommand{\locc}{\mathrm{LOCC}}
\newcommand{\sep}{\mathrm{SEP}}

\newcommand{\SEP}{\pazocal{S}}

\DeclareMathOperator{\rk}{rk}
\DeclareMathOperator{\cl}{cl}
\DeclareMathOperator{\co}{conv}

\DeclareMathAlphabet{\pazocal}{OMS}{zplm}{m}{n}

\DeclareMathOperator{\pr}{Pr}
\DeclareMathOperator{\supp}{supp}
\DeclareMathOperator{\spec}{spec}

\DeclareMathOperator{\dom}{dom}

\newcommand{\HH}{\pazocal{H}}

\newcommand{\EE}{\pazocal{E}}

\newcommand{\XX}{\pazocal{X}}

\newcommand{\lsmatrix}{\left(\begin{smallmatrix}}
\newcommand{\rsmatrix}{\end{smallmatrix}\right)}

\newcommand{\deff}[1]{\textbf{\emph{#1}}}

\stackMath
\newcommand\xxrightarrow[2][]{\mathrel{%
  \setbox2=\hbox{\stackon{\scriptstyle#1}{\scriptstyle#2}}%
  \stackunder[5pt]{%
    \xrightarrow{\makebox[\dimexpr\wd2\relax]{$\scriptstyle#2$}}%
  }{%
   \scriptstyle#1\,%
  }%
}}

\newcommand{\tends}[2]{\xxrightarrow[\! #2 \!]{\mathrm{#1}}}
\newcommand{\tendsn}[1]{\xxrightarrow[\! n\rightarrow \infty\!]{\mathrm{#1}}}

\stackMath

\makeatletter
\newcommand*\rel@kern[1]{\kern#1\dimexpr\macc@kerna}
\newcommand*\widebar[1]{%
  \begingroup
  \def\mathaccent##1##2{%
    \rel@kern{0.8}%
    \overline{\rel@kern{-0.8}\macc@nucleus\rel@kern{0.2}}%
    \rel@kern{-0.2}%
  }%
  \macc@depth\@ne
  \let\math@bgroup\@empty \let\math@egroup\macc@set@skewchar
  \mathsurround\z@ \frozen@everymath{\mathgroup\macc@group\relax}%
  \macc@set@skewchar\relax
  \let\mathaccentV\macc@nested@a
  \macc@nested@a\relax111{#1}%
  \endgroup
}

\counterwithin*{equation}{part}
\counterwithin*{thm}{part}
\counterwithin*{figure}{part}

\tikzset{meter/.append style={draw, inner sep=10, rectangle, font=\vphantom{A}, minimum width=30, line width=.8, path picture={\draw[black] ([shift={(.1,.3)}]path picture bounding box.south west) to[bend left=50] ([shift={(-.1,.3)}]path picture bounding box.south east);\draw[black,-latex] ([shift={(0,.1)}]path picture bounding box.south) -- ([shift={(.3,-.1)}]path picture bounding box.north);}}}
\tikzset{roundnode/.append style={circle, draw=black, fill=gray!20, thick, minimum size=10mm}}
\tikzset{squarenode/.style={rectangle, draw=black, fill=none, thick, minimum size=10mm}}

\definecolor{Blues5seq1}{RGB}{239,243,255}
\definecolor{Blues5seq2}{RGB}{189,215,231}
\definecolor{Blues5seq3}{RGB}{107,174,214}
\definecolor{Blues5seq4}{RGB}{49,130,189}
\definecolor{Blues5seq5}{RGB}{8,81,156}

\definecolor{Greens5seq1}{RGB}{237,248,233}
\definecolor{Greens5seq2}{RGB}{186,228,179}
\definecolor{Greens5seq3}{RGB}{116,196,118}
\definecolor{Greens5seq4}{RGB}{49,163,84}
\definecolor{Greens5seq5}{RGB}{0,109,44}

\definecolor{Reds5seq1}{RGB}{254,229,217}
\definecolor{Reds5seq2}{RGB}{252,174,145}
\definecolor{Reds5seq3}{RGB}{251,106,74}
\definecolor{Reds5seq4}{RGB}{222,45,38}
\definecolor{Reds5seq5}{RGB}{165,15,21}

\allowdisplaybreaks

\hypersetup{
    bookmarksnumbered=true, 
    unicode=false, 
    pdfstartview={FitH}, 
    pdftitle={}, 
    pdfauthor={}, 
    pdfsubject={}, 
    pdfcreator={}, 
    pdfproducer={}, 
    pdfkeywords={}, 
    pdfnewwindow=true, 
    colorlinks=true, 
    linkcolor=NavyBlue, 
    citecolor=NavyBlue, 
    filecolor=NavyBlue, 
    urlcolor=NavyBlue 
}

\setlength{\epigraphwidth}{10cm}
\renewcommand{\deff}[1]{\textbf{\textit{#1}}}
\renewcommand{\HH}{\mathcal{H}}

\setlength{\parskip}{1pt plus6pt}

\begin{document}

\title{Entanglement cost for infinite-dimensional physical systems}

\author{Hayata Yamasaki}
\email{hayata.yamasaki@gmail.com}
\affiliation{Department of Physics, Graduate School of Science, The Univerisity of Tokyo, 7--3--1 Hongo, Bunkyo-ku, Tokyo, 113--0033, Japan}
\affiliation{Department of Computer Science, Graduate School of Information Science and Technology, The University of Tokyo, 7--3--1 Hongo, Bunkyo-ku, Tokyo, 113--8656, Japan}

\author{Kohdai Kuroiwa}
\email{kkuroiwa@uwaterloo.ca}
\affiliation{Institute for Quantum Computing and Department of Combinatorics and Optimization, University of Waterloo, Ontario, Canada, N2L 3G1}
\affiliation{Perimeter Institute for Theoretical Physics, Ontario, Canada, N2L 2Y5}

\author{Patrick Hayden}
\email{phayden@stanford.edu}
\affiliation{Stanford Institute for Theoretical Physics, Stanford University, Stanford, California 94305, USA}

\author{Ludovico Lami}
\email{ludovico.lami@gmail.com}
\affiliation{Scuola Normale Superiore, Piazza dei Cavalieri 7, 56126 Pisa, Italy}
\affiliation{QuSoft, Science Park 123, 1098 XG Amsterdam, The Netherlands}
\affiliation{Korteweg--de Vries Institute for Mathematics, University of Amsterdam, Science Park 105--107, 1098 XG Amsterdam, The Netherlands}
\affiliation{Institute for Theoretical Physics, University of Amsterdam, Science Park 904, 1098 XH Amsterdam, The Netherlands}

\begin{abstract}
We prove that the entanglement cost equals the regularized entanglement of formation for any infinite-dimensional quantum state $\rho_{AB}$ with finite quantum entropy on at least one of the subsystems $A$ or $B$. This generalizes a foundational result in quantum information theory that was previously formulated only for operations and states on finite-dimensional systems. The extension to infinite-dimensional systems is nontrivial because the conventional tools for establishing both the direct and converse bounds, i.e., strong typicality, monotonicity, and asymptotic continuity, are no longer directly applicable. To address this problem, we construct a new entanglement dilution protocol for infinite-dimensional states implementable by local operations and a finite amount of one-way classical communication (one-way LOCC), using weak and strong typicality multiple times. We also prove the optimality of this protocol among all protocols, even under infinite-dimensional separable operations, by developing an argument based on alternative forms of monotonicity and asymptotic continuity of the entanglement of formation for infinite-dimensional states. Along the way, we derive a new integral representation for the quantum entropy of infinite-dimensional states, which we believe to be of independent interest. Our results allow us to fully characterize an important operational entanglement measure --- the entanglement cost --- for all infinite-dimensional physical systems.
\end{abstract}

\maketitle

\tableofcontents
\newpage

\section{Introduction: the asymptotic continuity catastrophe}

\epigraph{
\emph{Midway upon the journey of our gates,} \\
\emph{we found ourselves within a system large,} \\
\emph{for the dimension finite had been lost.}
}{Quantum Inferno, Canto~1}

\subsection{Entanglement dilution}

A fundamental discovery of quantum information science is that quantum entanglement is not only one of the most striking features of quantum mechanics~\cite{schr}, but it can be considered from the operational side as a \textit{resource}, namely, as a fuel that powers quantum technology. Among its many applications, we find quantum teleportation~\cite{teleportation}, dense coding~\cite{dense-coding}, the violation of Bell inequalities~\cite{Brunner-review}, and quantum key distribution~\cite{Ekert91, RennerPhD}.
As any other resource, quantum entanglement can be manipulated, i.e., transformed into different forms. The canonical way to do so is by employing \textit{local operations and classical communication} (LOCC), which constitute the most general operations that distant parties bound to the rule of quantum mechanics and allowed to exchange only classical messages can implement~\cite{LOCC}.

The most important entanglement manipulation protocol is perhaps \textit{entanglement distillation}~\cite{Bennett-distillation, Bennett-distillation-mixed, Bennett-error-correction}, whose goal is to transform many identical and ideally distributed (IID) copies of a quantum state $\rho_{AB}$ into as many copies as possible of the \textit{ebit}, i.e., the two-qubit maximally entangled state shared between $A$ and $B$ while making an asymptotically vanishing error. The converse task is, however, equally important from the conceptual~\cite{Bennett-error-correction} as well as possibly practical~\cite{Miller2022} point of view: it consists of using LOCC assisted by pure entanglement, in the form of ebits, to prepare as many copies of $\rho_{AB}$ as possible, again with asymptotically vanishing error. It is therefore called \textit{entanglement dilution}. The duality between entanglement distillation and entanglement dilution offers a striking parallel with classical thermodynamics, where heat can be transferred from hot to cold baths to obtain work, or work can be invested to transfer heat in the opposite direction. Entanglement is, however, more complex than what this intriguing similarity might suggest. In fact, while in classical thermodynamics work and heat can be interconverted reversibly by means of Carnot cycles, entanglement theory harbors the fascinating phenomenon of irreversibility, which shows that different forms of entanglement are in general inequivalent~\cite{Vidal-irreversibility, faithful-EC, irreversibility-PPT, irreversibility, irreversibility-channels, probabilistic-reversibility}.

One of the pillars of finite-dimensional entanglement theory is the characterization of the optimal rate of ebits
that must be consumed per output copy of $\rho = \rho_{AB}$ in entanglement dilution under LOCC\@. This rate is called the \textit{entanglement cost} of $\rho_{AB}$, denoted by $E_c\qty(\rho)$. 
As proved in Ref.~\cite{Hayden-EC} in the case of finite-dimensional quantum systems (see also~\cite{Buscemi2011}), it is given by 
\bb
E_c\qty(\rho) = E_f^\infty\qty(\rho) \coloneqq \lim_{n\to\infty} \frac1n\, E_f\big(\rho^{\otimes n}\big)\, ,
\label{HHT}
\ee
where $E_f$, called the \textit{entanglement of formation}~\cite{Bennett-error-correction}, is defined as the convex-roof extension of the entanglement entropy~\cite{Bennett-distillation}, and $E_f^\infty$ is its regularized form.

\subsection{Enter infinite-dimensional systems} 

So far so good. But, problematically, the quantum physics in our world is not necessarily as straightforward as the analysis in the finite-dimensional case might suggest. To see why one needs to consider an infinite-dimensional system, consider that even within the paradigm of current quantum technology, we can easily step out of the reassuring finite-dimensional setting. 
In actual quantum experiments, a ``perfect'' qubit, i.e., an intrinsically two-dimensional quantum system without any additional degrees of freedom, is hardly ever encountered.
Rather, we make qubits by embedding them into an infinite-dimensional system in various ways.
This is what happens, e.g., for superconducting qubits, trapped ions, and single photons.
In these experimental platforms, a subspace of the overall system spanned by the ground state and the first excited state can be used as a qubit; at the same time, the overall Hilbert space is intrinsically infinite-dimensional.

Regardless of whether quantum information processing is designed to be qubit-based, infinite-dimensional Hilbert space is bound to lurk just above the finite-dimensional subspace where we carry out our computations.
Indeed, the effects of state transitions leaking into higher energy levels are common in practical implementations of quantum protocols. 
With more precise control, in principle, we can potentially take advantage of the higher dimensions of Hilbert space to enhance the performance of quantum information processing.

Also from the practical standpoint, one of the most important applications of entanglement theory is the theory of entanglement distribution in quantum optical networks, which are intrinsically infinite-dimensional objects~\cite{Pirandola2020, Pirandola2021}. The many results involving the calculation or the estimation of capacities of continuous-variable channels, such as the quantum capacities of the pure loss channel~\cite{holwer, Wolf2007, Pirandola2009, Mark2012, TGW, PLOB, MMMM}, of the thermal attenuator~\cite{Pirandola2009, PLOB, Rosati2018, Noh2020, FKG, lower-bound, KFG}, and of non-Gaussian bosonic dephasing channels~\cite{exact-solution}, as well as the calculation of the classical capacity of phase-insensitive single-mode Gaussian channels~\cite{Giovadd, Giovadd-CMP}, are exemplary of this widespread interest. Recently, also the entanglement cost of continuous-variable quantum channels has been thoroughly investigated~\cite{Wilde2018}.

Finally, from a more fundamental standpoint, quantum field theory suggests that the elementary constituents of matter are represented by infinite-dimensional quantum systems. Accordingly, entanglement of quantum fields has become the subject of increasing interest~\cite{HOLLANDS}. 

Whether the motivation lies in practical applications or in the fundamental logical consistency of the theory, the question is the same: \textit{what are the limitations on entanglement manipulation in infinite-dimensional systems?} To answer this question, in recent years, there has been a growing interest in the generalization of results in entanglement theory from finite- to infinite-dimensional quantum systems. 

\subsection{Main contribution} \label{subsec_main_contribution}

The present paper aims to generalize the formula~\eqref{HHT} for the entanglement cost to all infinite-dimensional quantum systems with an underlying separable Hilbert space.

Several issues arise when one attempts to extend the original proof, which works only in finite dimension, to the infinite-dimensional case.
Let us leave aside for a second the fact that even the definition of entanglement of formation presents some subtleties in the latter case~\cite{Shirokov2010}, as those can be avoided by restricting to sufficiently well-behaved, e.g., finite-energy, quantum states.
The proof of~\eqref{HHT} in Ref.~\cite{Hayden-EC} is composed of two main parts: the direct part, which consists of constructing an entanglement dilution protocol that achieves an asymptotic rate close to the right-hand side of~\eqref{HHT}, and the converse part, whose goal is to show that this is the optimal rate, i.e., that no other protocol can achieve a lower rate.

Neither the direct nor the converse part of the argument in Ref.~\cite{Hayden-EC} work as stated in infinite-dimensional systems. Regarding the direct part, the key technique in constructing the entanglement dilution protocol in~\cite{Hayden-EC} was strong typicality, which means that if a sequence $\qty(x_1,\ldots,x_N)\in\mathcal{X}^N$ is sampled in an independent and identically distributed (IID) way from a probability distribution $p\qty(x)$ over a finite set $\mathcal{X}$, then each element $x\in\mathcal{X}$ appears in this sequence approximately $N p\qty(x)$ times.
The problem here lies in the fact that strong typicality holds only for finite sets $\mathcal{X}$.
It is necessary to take into account infinite sets if one wants to generalize the result to infinite-dimensional systems.

An equally serious issue arises in the converse part of the proof. Remarkably, the issue arises even if the target state $\rho_{AB}$ itself is supported on the tensor product of finite-dimensional local spaces. 
Indeed, it is in principle possible that some LOCC protocols could aim to obtain an approximation to several copies of $\rho_{AB}$, i.e., to the state $\rho_{AB}^{\otimes n}$, by actually preparing an infinite-dimensional state that acts on the full Hilbert space but is nonetheless close to $\rho_{AB}^{\otimes n}$ in trace distance. Such protocols could have, in principle, a lower entanglement cost than any finite-dimensional protocol. 
The key technical tool that is employed in the finite-dimensional proof to control the approximation to the target state is the \textit{asymptotic continuity} of the entanglement of formation $E_f$~\cite{Nielsen2000, tightuniform,Mark2020}.
This is a particularly strong form of continuity in which the dimension of the underlying space makes an explicit appearance. For infinite-dimensional spaces, however, the statement trivializes.
Due to this phenomenon, dubbed the ``asymptotic continuity catastrophe'' in Ref.~\cite{nonclassicality}, it was impossible, prior to our work, to establish a converse bound for entanglement dilution in terms of the regularized entanglement of formation. Some converse bounds in infinite-dimensional systems were available~\cite{nonclassicality, Kuroiwa2021}, but they did not coincide with the regularized entanglement of formation, and thus were not sufficient to establish~\eqref{HHT}.

A somewhat different case is that of Ref.~\cite[Proposition~5]{Wilde2018}, which proves that the entanglement cost of certain bosonic Gaussian channels is lower bounded by their regularized entanglement of formation. This is an example of a converse statement for entanglement dilution obtained in an infinite-dimensional setting. So, why is this proof not affected by the asymptotic continuity catastrophe? In short, the reason has to do with the fact that Ref.~\cite{Wilde2018} studies channels rather than states. In finding lower bounds on the cost in this setting, one can simply choose to input a state $RA$ with local finite energy on $R$. Since $R$ is not touched by the channel nor by its simulation, the two relevant states coming into play can be guaranteed to have finite local energy, which enables Ref.~\cite{Wilde2018} to apply the energy-constrained continuity of the entanglement of formation in Ref.~\cite{Shirokov-sq}.
By contrast, in the case of states we are dealing with here, we may impose some energy constraint on the target state $\rho_{AB}^{\otimes n}$, but not on the infinite-dimensional state that is actually prepared by the protocol; thus, we cannot use the energy-constrained continuity in Ref.~\cite{Shirokov-sq} for these states.

In this work, we prove that the entanglement cost of infinite-dimensional states $\rho_{AB}$ is given by the same formula~\eqref{HHT} as in the finite-dimensional case, provided that the quantum entropy of the reduced state of $\rho_{AB}$ on either $A$ or $B$ is finite. For the direct part, we construct an entanglement dilution protocol that works for both finite- and infinite-dimensional systems and achieves a rate as close to the regularized entanglement of formation as desired. To prove that this protocol works as intended, we develop a technique for approximating multiple copies of infinite-dimensional mixed states, which goes beyond a mere application of typicality. The protocol uses one-way LOCC as in the finite-dimensional case, and remarkably, even if $\rho_{AB}$ is infinite-dimensional, it is implementable with only \textit{finitely many} bits of one-way classical communication.

To demonstrate the optimality of this protocol (converse part), we prove that even with the use of separable operations (SEP) one cannot achieve a lower entanglement dilution rate. 
To circumvent the obstacles arising from infinite dimensionality, we show a generalized result on the monotonicity of the entanglement of formation under infinite-dimensional SEP operations, progressing beyond the existing monotonicity results~\cite{PhysRevLett.83.1455,PhysRevLett.84.4781,Vidal2000,Gheorghiu2008}. 
Then, to circumvent the asymptotic continuity catastrophe, we develop an alternative argument based on the semi-continuity bound on the entanglement of formation of infinite-dimensional states recently introduced by Ref.~\cite{Shirokov2022}, rather than the conventional asymptotic continuity in Refs.~\cite{Nielsen2000, tightuniform, Mark2020}, which is applicable only to finite-dimensional states, or the continuity relation in Ref.~\cite{Shirokov-sq}, which instead applies only to two energy-constrained states. 

The rest of this paper is organized as follows.
In Sec.~\ref{sec:notation} we introduce our terminology and establish some notation. In Sec.~\ref{sec:main_results} we present our main results. To prove them, in Sec.~\ref{sec:achievability} we construct the asymptotically optimal one-way LOCC protocol for entanglement dilution to infinite-dimensional states, and in Sec.~\ref{sec:converse} we establish the converse statement, demonstrating that not even SEP operations can do better. 
We conclude by summarizing our findings and describing some open problems in Sec.~\ref{sec:conclusion}.

\section{Preliminaries}
\label{sec:notation}

In this section, we fix our terminology and explain our notation. In Sec.~\ref{sec:quantum_mechanics}, we summarize the formulation of quantum mechanics in infinite-dimensional systems. In Sec.~\ref{sec:topologies}, we discuss various notions of topologies that are needed to define the convergence of infinite-dimensional quantum operations.
In Sec.~\ref{sec:operations}, we introduce several classes of operations considered in our analysis, clearing up some subtleties related to infinite-dimensional systems. In Sec.~\ref{sec:entanglement_cost} and Sec.~\ref{sec:eof}, we formalize the notions of entanglement cost and entanglement of formation. Finally, in Sec.~\ref{subsec:what_goes_wrong}, we discuss why exactly the finite-dimensional analysis of entanglement cost in Ref.~\cite{Hayden-EC} cannot be directly extended to the infinite-dimensional case. Throughout this paper, we let $\N$ denote the set of non-negative integers and $\N_+$ the set of positive integers.

\subsection{Quantum mechanics}
\label{sec:quantum_mechanics}

Quantum systems are represented by complex separable Hilbert spaces $\mathcal{H}$. 
A linear operator $B$ on $\mathcal{H}$ is said to be \deff{bounded} if the \deff{operator norm}
\begin{equation}
\label{eq:operator_norm}
    \left\|B\right\|_\infty\coloneqq\sup\qty{\left\|B\ket{v}\right\|:\ket{v}\in\mathcal{H},\,\left\|\ket{v}\right\|\leq 1}
\end{equation}
is finite, where $\left\|\ket{v}\right\|$ for $\ket{v}\in\mathcal{H}$ is the norm induced by the inner product of $\mathcal{H}$. In what follows, we will denote by $\mathcal{B}\qty(\mathcal{H})$ the Banach space of bounded operators on $\mathcal{H}$ equipped with the operator norm $\|\cdot\|_\infty$. We let $\mathds{1}\in\mathcal{B}\qty(\mathcal{H})$ denote the identity operator. A bounded operator $B\in\mathcal{B}\qty(\mathcal{H})$ is said to be compact if $B$ maps the unit ball of $\mathcal{H}$ into a relatively compact set, i.e., if the closure of $B\qty{\ket{v}\in\mathcal{H}:\left\|\ket{v}\right\|\leq 1}$ is compact in $\mathcal{H}$. The Banach space of compact operators on $\mathcal{H}$ equipped with the operator norm will be denoted by $\mathcal{K}(\mathcal{H})$.

A bounded operator $B\in\mathcal{B}\qty(\mathcal{H})$ is said to be positive semidefinite if $\bra{v} B \ket{v} \geq 0$ for any $\ket{v}\in\mathcal{H}$, which we write as $B\geq 0$.
Let $\{\ket{n}\}_n$ be any orthonormal basis of $\mathcal{H}$.
Given any positive semidefinite bounded operator $B$, the convergence of the (possibly infinite) sum $\sum_n\bra{n} B \ket{n}$, which contains only non-negative terms, as well as its value if it converges, is independent of the choice of $\{\ket{n}\}_n$~\cite[Sec.~18]{conway2000course}. If the sum converges, the trace of the positive semidefinite operator $B$ is defined as $\Tr\qty[B]\coloneqq\sum_n\bra{n} B \ket{n}$.
A bounded operator $T\in \mathcal{B}\qty(\mathcal{H})$ is said to be of \deff{trace class} if $\|T\|_1 \coloneqq \Tr\qty[|T|]=\sum_n\bra{n} |T| \ket{n} < \infty$, where $|T| \coloneqq \sqrt{T^\dag T}\geq 0$, and $T^\dag$ is the adjoint operator of $T$.
Although a trace-class operator is not necessarily positive semidefinite, the above definition of the trace extends naturally to any trace-class operator $T$, by setting
\bb
\label{eq:definition_trace}
    \Tr\qty[T]\coloneqq\sum_n\bra{n} T \ket{n},
\ee
which turns out to converge and is independent of the choice of orthonormal basis $\{\ket{n}\}_n$~\cite[Sec.~18]{conway2000course}.
In the finite-dimensional case, the trace defined in~\eqref{eq:definition_trace} coincides with the sum of the diagonal elements in the matrix representation of the finite-dimensional operator.
The set of trace-class operators is necessarily compact. We will let $\mathcal{T}\qty(\mathcal{H})$ denote the Banach space of trace class operators equipped with the \deff{trace norm} $\|\cdot\|_1$. For two trace-class operators $T,T^\prime\in\mathcal{T}\qty(\mathcal{H})$, we will call the quantity $\frac{1}{2}\left\|T-T^\prime\right\|_1$ the \deff{trace distance} between $T$ and $T^\prime$, and the corresponding topology the \deff{trace-norm topology} on $\mathcal{T}\qty(\mathcal{H})$. We will use the trace-norm topology for $\mathcal{T}\qty(\mathcal{H})$ throughout the paper.

Quantum states are then represented by \deff{density operators}, i.e., positive semidefinite trace-class operators on $\mathcal{H}$ with trace $1$. We let $\mathcal{B}_+(\mathcal{H})$, $\mathcal{K}_+(\mathcal{H})$, and $\mathcal{T}_+(\mathcal{H})$ denote the cones of positive semidefinite bounded, compact, and trace-class operators, respectively. The set of density operators will be denoted instead by $\mathcal{D}\qty(\mathcal{H})\coloneqq\qty{\rho\in\mathcal{T}_+\qty(\mathcal{H}):\,\Tr\qty[\rho]=1}$. Given a state $\rho\in \mathcal{D}\qty(\mathcal{H})$ with spectral decomposition $\rho = \sum_x p(x) \ketbra{x}$, the \deff{quantum entropy} of $\rho$ is defined by
\bb
S\qty(\rho) \coloneqq -\Tr\qty[ \rho \log_2 \rho ] = - \sum_x p(x) \log_2 p(x)\, ,
\label{entropy}
\ee
where we set $0\log_2 0\coloneqq 0$ and $S\qty(\rho) = \infty$ if the series on the right-hand side diverges. 
The entropy function $S$ on $\mathcal{D}\qty(\mathcal{H})$ is: concave, meaning that $S\qty(p\rho+\qty(1-p)\sigma)\geq pS\qty(\rho)+\qty(1-p)S\qty(\sigma)$ for all $\rho,\sigma\in\mathcal{D}\qty(\mathcal{H})$ and all $p\in[0,1]$; and lower semi-continuous, i.e., $\liminf_{\widetilde{\rho}\to\rho} S\qty(\widetilde{\rho})\geq S\qty(\rho)$, where the limit is taken in the trace-norm topology.

The Hilbert space describing a composite quantum system $AB$ is the tensor product of the Hilbert spaces representing its subsystems, in formula $\mathcal{H}_{AB} \coloneqq \mathcal{H}_A\otimes \mathcal{H}_B$.
We may use an operator with subscript $AB$ to represent the state of $AB$, e.g., $\rho_{AB}\in\mathcal{D}\qty(\mathcal{H}_A\otimes \mathcal{H}_B)$. The reduced state of $A$ obtained from the partial trace over $B$ is denoted by $\rho_{A} = \Tr_B\qty[\rho_{AB}]\in\mathcal{D}\qty(\mathcal{H}_A)$, and similarly for $B$. The composite system obtained by joining $n$ copies of $AB$ is indicated by $A^nB^n$.

A quantum state $\sigma_{AB}$ of $AB$ is said to be \deff{separable} if it is in the closed convex hull (in the trace-norm topology) of the set of all tensor-product states $\omega_A\otimes\tau_B$ for $\omega_A\in\mathcal{D}\qty(\mathcal{H}_A)$ and $\tau_B\in\mathcal{D}\qty(\mathcal{H}_B)$. The set of separable states will be denoted by $\SEP(A\!:\!B)$; it is formally defined by
\bb
\SEP(A\!:\!B) \coloneqq \cl \co \left\{ \omega_A\otimes\tau_B:\ \omega_A\in\mathcal{D}\qty(\mathcal{H}_A),\ \tau_B\in\mathcal{D}\qty(\mathcal{H}_B) \right\} ,
\label{separable_states}
\ee
where $\cl$ denotes the closure with respect to the trace-norm topology. Rather intuitively, $\SEP(A\!:\!B)$ can alternatively be defined as the set of barycenters of all Borel probability measure $\mu$ on $\mathcal{D}\qty(\mathcal{H}_A)\times\mathcal{D}\qty(\mathcal{H}_B)$; that is, any separable state $\sigma_{AB}$ can be written in terms of the (Bochner) integral as $\sigma_{AB} = \int \dd\mu \big(\omega_A,\tau_B\big)\ \omega_A\otimes\tau_B$~\cite{Holevo2005,Holevo-CJ-arXiv}. If there exists a representation of the form $\sigma_{AB} = \sumno_{x=1}^{\infty} p(x)\, \omega_A^{(x)} \otimes\tau_B^{(x)}$ then $\sigma_{AB}$ is said to be \deff{countably separable}. In the infinite-dimensional case, it is known that separable states that are not countably separable do exist~\cite{Holevo2005,Holevo-CJ-arXiv}.

A state that is not separable is said to be \deff{entangled}~\cite{Werner, Holevo2005}. The simplest example of an entangled state is the two-qubit maximally entangled state, called an \deff{entanglement bit} or \deff{ebit}. This is defined by $\Phi \coloneqq \ketbra{\Phi}$, where 
\begin{equation}
\label{eq:ebit}
    \ket{\Phi}_{AB}\coloneqq \frac{1}{\sqrt2} \left(\ket{0}_A\otimes\ket{0}_B + \ket{1}_A\otimes\ket{1}_B\right)\in\mathcal{H}_A\otimes\mathcal{H}_B\, .
\end{equation}

\subsection{Topologies on the set of quantum channels} \label{sec:topologies}

We use a subset of linear maps $\pazocal{E}:\mathcal{T}\qty(\mathcal{H})\to\mathcal{T}\qty(\mathcal{H}^\prime)$ to represent transformations between quantum states. Here, $\mathcal{H}$ and $\mathcal{H}^\prime$ are the input and output Hilbert spaces, respectively. The linear map $\pazocal{E}$ is called: \deff{positive}, if $\pazocal{E}\qty(T)\geq 0$ for all $T\geq 0$; \deff{completely positive} (CP), if $I_k\otimes \pazocal{E}$ is positive for all $k\in \N_+$, where $I_k$ denotes the identity map on the set of $k\times k$ complex matrices; and \deff{trace-preserving} (TP), if $\Tr\qty[\pazocal{E}\qty(T)]=\Tr\qty[T]$ for all $T\in\mathcal{T}\qty(\mathcal{H})$. A completely positive trace-preserving (CPTP) map is also called a \deff{quantum channel}. Quantum channels describe the transformations between quantum states allowed within the law of quantum mechanics. For a finite or countably infinite set $\mathcal{J}$, a family of CP maps $\qty(\pazocal{E}_j)_{j\in\mathcal{J}}$ summing up a TP map $\sum_{j\in\mathcal{J}} \pazocal{E}_j$ is called an instrument. Instruments describe physical measurements on quantum systems: the output state corresponding to input state $\rho$ and measurement outcome $j$ is simply $\frac{\pazocal{E}_j(\rho)}{\Tr \pazocal{E}_j(\rho)}$.

Since we are dealing with possibly infinite-dimensional systems, we need to choose a topology on the space of quantum channels. A possible choice is the topology of uniform convergence, induced by the diamond norm~\cite{Aharonov1998}. However, it has been observed that this topology is often too fine, and as a consequence, it exhibits some pathological behaviors---for instance, it places physically close channels at nonzero distance~\cite[Sec.~II]{VV-diamond}. Two alternatives that are more physically sound are the topology induced by the \emph{energy-constrained diamond norm}~\cite{Shirokov2018, VV-diamond, PLOB} and the topology of \emph{strong convergence}. Fortunately, Ref.~\cite[Proposition~3]{Shirokov2018} has shown that for many physically motivated Hamiltonians, these two alternatives are equivalent, so we adopt the latter without loss of generality. We then say that a sequence\footnote{We should talk more generally about convergences of nets rather than sequences. However, since we are working with separable Hilbert spaces, the topology of strong convergence turns out to be \emph{metrizable}, i.e., induced by a metric. This simple fact, stated in Ref.~\cite{Shirokov2018}, can be proved by considering a dense set $(\rho_n)_{n\in \N}$ of states, and then defining the distance $d\big(\EE,\pazocal{E}'\big) \coloneqq \sum_{n\in \N} 2^{-n} \left\|\left(\EE - \EE' \right)(\rho_n)\right\|_1$. Then, in a metric space, topologies can be defined by specifying the converging sequences instead of the converging nets. The details are left to the reader.} $\big(\EE_n\big)_n$ of CP maps $\EE_n: \mathcal{T}\qty(\mathcal{H}) \to \mathcal{T}\qty(\mathcal{H}^\prime)$ converges \deff{strongly} to a map $\EE: \mathcal{T}\qty(\mathcal{H}) \to \mathcal{T}\qty(\mathcal{H}^\prime)$, and we write $\EE_n \tendsn{s} \EE$, if
\bb
\lim_{n\to\infty} \left\| \EE_n(T) - \EE(T)\right\|_1 = 0 \qquad \forall\ T\in \mathcal{T}\qty(\mathcal{H})\, .
\ee
Under these assumptions, this map $\EE$ must be CP\@. If the $\EE_n$ are quantum channels, then so must be $\EE$. 
The closure with respect to the topology of strong convergence is denoted by $\cl_{\mathrm{s}}(\cdot)$.

So far we have discussed the strong convergence of quantum channels. However, for what follows, we will also care about strong convergence of \emph{instruments}, i.e., countable collections $\mathcal{I} = (\EE_i)_i$ of CP maps $\EE_i:\mathcal{T}\qty(\mathcal{H}) \to \mathcal{T}\qty(\mathcal{H}^\prime)$ that sum up to a CPTP map $\sum_i \EE_i$.
One can easily think of instruments as quantum channels by appending an ancillary output system that records the outcome of the measurement, i.e., by constructing the quantum channel $\rho \mapsto \sum_i \EE_i(\rho) \otimes \ketbra{i}$. In this way, we can extend the notion of strong convergence from channels to instruments.
We let  $\mathcal{I}_\nu \tends{s}{\nu\to\infty} \mathcal{I}$ denote the strong convergence of the sequence of instruments $\big(\mathcal{I}_\nu\big)_\nu$ to $\mathcal{I}$. 

\subsection{LOCC for infinite-dimensional systems}
\label{sec:operations}

To talk about entanglement manipulation, we need to first discuss a few subtleties associated with the definition of local operations and classical communication (LOCC) in the case of infinite-dimensional systems. The good news is that most of the work done by Ref.~\cite[Sec.~2.2]{LOCC} to define finite-dimensional LOCC carries over straight away to the infinite-dimensional case. Let us summarise the main steps of Ref.~\cite[Sec.~2.2]{LOCC} here. For some positive integer $N$, let $\mathcal{H}=\mathcal{H}_{A_1}\otimes\cdots\otimes\mathcal{H}_{A_N}$ represent an $N$-partite quantum system, where $\mathcal{H}_{A_n}$ for each $n\in\qty{1,\ldots,N}$ is the separable Hilbert space associated with the $n^{\text{th}}$ party. 
For simplicity of presentation, here we use the same input and output systems $\mathcal{H}$ in the following definitions of maps, but the general case with different input and output systems follows similarly. 
We allow for either finite or countably infinite sets as index sets of the instruments in the following definitions.
For two instruments $\mathcal{I}=\qty(\pazocal{E}_i)_{i\in I}$ and $\mathcal{I}^\prime=\big(\pazocal{E}_{j}^\prime\big)_{j\in J}$, we call $\mathcal{I}^\prime$ a coarse-graining of $\mathcal{I}$ if there exists a $J$-indexed partition of $I$, i.e., $I = \coprod_{j\in J} I_j$, such that for all $j\in J$, it holds that $\pazocal{E}_{j}^\prime = \sum_{i\in I_{j}} \pazocal{E}_i$.
Despite the somewhat cumbersome definition, the physical meaning is clear: we can implement $\mathcal{I}^\prime$ by first measuring $\mathcal{I}$, and then ``binning'' the outcome into one of the subsets $I_{j}$. 
For each party $n$, an instrument $\mathcal{I}=\qty(\pazocal{E}_i)_{i}$ is called one-way local with respect to party $n$ if each of its CP maps has the form
\bb
\pazocal{E}_i=\qty(\bigotimes\nolimits_{n^\prime\neq n}\pazocal{F}_i^{\qty(n^\prime)})\otimes \pazocal{E}_i^{\qty(n)}\, , 
\ee
where $\pazocal{E}_i^{\qty(n)}$ is a CP map on $\mathcal{T}\qty(\mathcal{H}_{A_n})$, and $\pazocal{F}_i^{\qty(n^\prime)}$ is a CPTP map on $\mathcal{T}\qty(\mathcal{H}_{A_{n^\prime}})$. The interpretation here is that party $n$ applies an instrument $\big(\pazocal{E}_i^{\qty(n)}\big)_i$ on their system, communicates the outcome $i$ to all the other parties, and each party $n'\neq n$ post-processes their system with some quantum channel $\pazocal{F}_i^{\qty(n^\prime)}$ that may depend on $i$.

An instrument $\mathcal{I}^\prime$ is LOCC-linked to $\qty(\pazocal{E}_i)_i$ if, for all $i$, there exists a one-way local instrument $\big(\pazocal{F}_{j|i}\big)_{j}$ with respect to some party $n_i\in\qty{1,\ldots,N}$ such that $\mathcal{I}^\prime$ is a coarse-graining of the instrument $\big(\pazocal{F}_{j|i}\circ\pazocal{E}_i\big)_{i,j}$. Physically, after the outcome $i$ has been obtained and broadcast to all parties, one of them, say party $n_i$, depending on $i$, can, in turn, apply another instrument on their system. We say that:
\begin{enumerate}
    \item $\mathcal{I}\in\locc_1$, if $\mathcal{I}$ is the coarse-graining of a one-way local instrument with respect to some party;
    \item $\mathcal{I}\in\locc_r$ ($r\geq 2$), if $\mathcal{I}$ is LOCC-linked to some $\mathcal{I}\in\locc_{r-1}$;
    \item $\mathcal{I}\in\locc_\mathbb{N}$, if $\mathcal{I}\in\locc_r$ for some finite $r$.
\end{enumerate}

\begin{note}
Given an instrument in $\locc_1$, we may be interested in specifying with respect to which party it is one-way local up to coarse-graining. If the underlying system is bipartite ($N=2$), we identify the two parties with Alice and Bob, and we use the notation $\locc_\to$ and $\locc_\leftarrow$ to denote the sets of instruments that are (up to coarse-graining) one-way local with respect to Alice and Bob, respectively. Clearly,
\bb
\locc_1 = \locc_\to \cup \locc_\leftarrow\, .
\ee
\end{note}

We are now ready to extend the definition of LOCC in Ref.~\cite[Sec.~2.2]{LOCC} from finite- to infinite-dimensional systems.

\begin{Def}[(LOCC)] \label{def:LOCC}
An instrument $\mathcal{I} = \big(\pazocal{E}_i\big)_i$ on some $N$-partite separable Hilbert space $\mathcal{H}=\mathcal{H}_{A_1}\otimes\cdots\otimes\mathcal{H}_{A_N}$ is said to be \deff{LOCC} if there exists a sequence of instruments $\mathcal{I}_1, \mathcal{I}_2,\ldots \in \locc_\mathbb{N}$ such that $\mathcal{I}_{\nu+1}$ is LOCC-linked to $\mathcal{I}_{\nu}$ for all $\nu\geq 1$, and each $\mathcal{I}_\nu$ has a coarse-graining $\mathcal{I}'_\nu = \big(\pazocal{E}^{(\nu)}_i\big)_i$ with the property that
\bb
\mathcal{I}'_\nu \tends{s}{\nu\to\infty} \mathcal{I}
\ee
strongly, in the sense of Sec.~\ref{sec:topologies}.
\end{Def}

To prove the converse part of our main result, we need to extend also the notion of \emph{separable operations}, originally introduced in Refs.~\cite{Rains1997, Vedral1998} for the finite-dimensional case, to infinite-dimensional systems.

\begin{Def}[(Countably separable operations and separable operations)] \label{def:separable_channels}
A quantum channel $\EE$ acting on an $N$-partite separable Hilbert space $\mathcal{H}=\mathcal{H}_{A_1}\otimes\cdots\otimes\mathcal{H}_{A_N}$ is said to be \deff{countably separable}, and we write $\EE\in \sep_\N$, if there exists a Kraus representation $\pazocal{E}\qty(\rho)=\sum_{\alpha=1}^\infty K_{\alpha}^{\vphantom{\dag}} \rho K_{\alpha}^\dag$, where all Kraus operators $K_{\alpha}$ can be written as tensor products $K_{\alpha} = \bigotimes_{n=1}^{N} K_{\alpha}^{(n)}$, with each $K_{\alpha}^{(n)}$ acting on the corresponding local space $\mathcal{H}_{A_n}$. The set of \deff{separable channels}, denoted $\sep$, is the closure of $\sep_\N$ with respect to the topology of strong convergence, i.e.,
\bb
\sep \coloneqq \cl_{\mathrm{s}}\left(\sep_\N\right) .
\label{separable_channels}
\ee
We say that an instrument $(\EE_i)_i$ on $\mathcal{H}$ is separable (respectively, countably separable) if the corresponding channel $\rho \mapsto \sum_i \EE_i(\rho) \otimes \ketbra{i}_{X}$ is separable (respectively, countably separable), where $X$ is a classical system assigned to one of the parties, without loss of generality $A_1$. (Since classical information can be copied, it is immaterial which party gets $X$.)
\end{Def}

The rationale behind the above definition reflects the phenomenology observed for separable states in infinite-dimensional systems. According to~\eqref{separable_states}, $\SEP(A\!:\!B)$ is defined as the closure of the set of countably separable states with respect to an appropriate topology for states; in the same way, we now construct $\sep$ as the closure of $\sep_\N$ with respect to an appropriate quantum channel topology. The fact that taking the closure is a necessary step will be apparent from the discussion below.

There is a simple hierarchy among the above classes of operations (or instruments), namely,
\bb
\locc_\to,\,\locc_\leftarrow \subset \locc_1\subset \locc_2 \subset \ldots \subset \locc_\N \subset \locc \subset \overline{\locc} \subset \sep\, ,
\label{eq:infinite_operations}
\ee
where $\overline{\locc}\coloneqq\cl_{\mathrm{s}} \left(\locc_\N\right)$.
Note that $\cl_{\mathrm{s}} \left(\locc_\N\right)=\cl_{\mathrm{s}} \left(\locc\right)$ by definition of LOCC\@.
The inclusion relation~\eqref{eq:infinite_operations} mirrors the finite-dimensional one in Ref.~\cite[Eq.~(3)]{LOCC}. One of the results of Ref.~\cite{LOCC} is that all inclusions in~\eqref{eq:infinite_operations} are strict already in finite dimension. In general systems, we have
\bb
\locc_\N \subset \sep_\N \subset \sep\, .
\label{eq:inclusions_operations_infinite_dim}
\ee
Already in finite dimension $\locc_\N\neq\sep_\N=\sep$, so that the first inclusion in~\eqref{eq:inclusions_operations_infinite_dim} is in general strict. As for the second, it is also strict in the case of infinite-dimensional systems. A simple example that suffices to show this claim is as follows: consider a ``replacer channel'' acting on a bipartite system $AB$
\bb
\EE_{\sigma} (\rho_{AB}) = \sigma_{AB} \Tr\qty[\rho_{AB}]\,,
\label{replacer_channel}
\ee
where $\sigma=\sigma_{AB}$ is a separable but not countably separable state. As it turns out, 
\bb
\EE_{\sigma} \in \locc \setminus \sep_\N \subseteq \sep \setminus \sep_\N\, .
\label{LOCC_SEPN_gap_replacer_channel}
\ee
The fact that $\EE_\sigma \notin \sep_\N$ can be seen by observing that if $\rho$ is a product state, the output of any operation in $\sep_\N$ acting on $\rho$ must be a countably separable state, but $\sigma$ is not countably separable. The fact that $\EE_\sigma \in \locc$ is proved as follows. Pick a sequence of countably separable states $\sigma_\nu$ with $\lim_{\nu\to\infty} \|\sigma - \sigma_\nu\|_1 = 0$. This is possible because the set of separable states is the closure of that of countably separable states with respect to the trace-norm topology. Then, construct the replacer channels
\bb
\EE_\nu :\ \rho \mapsto \sigma_\nu \Tr\qty[\rho]\, .
\ee
Note that $\EE_\nu$ is one-way local with respect to any party. Moreover, $\EE_{\nu+1}\circ \EE_\nu = \EE_{\nu+1}$. Hence, $\EE_{\nu+1}$ is LOCC-linked to $\EE_\nu$ for all $\nu$. Since $\EE_\nu \tends{s}{\nu\to\infty} \EE_\sigma$, we have $\EE_\sigma\in \locc$.

In view of~\eqref{LOCC_SEPN_gap_replacer_channel}, we see that the strong closure in~\eqref{separable_channels} is needed to include in the definition of SEP some natural channels that we would want to call separable. As it happens for states, it is natural to conjecture that SEP channels that are not in $\sep_\N$ still admit a continuous Kraus decomposition with tensor-product Kraus operators. We leave the investigation of this question, which is not essential for our problem, for future work.

\subsection{Entanglement cost}
\label{sec:entanglement_cost}

The two fundamental ingredients of entanglement manipulation are entanglement distillation and entanglement dilution, two essentially opposite tasks~\cite{Bennett-distillation, Bennett-distillation-mixed, Bennett-error-correction}.
The former consists of transforming many IID copies of a given state $\rho_{AB}$ into as many ebits as possible while employing only LOCC (or their variants as discussed in Sec.~\ref{sec:operations}) and achieving an asymptotically vanishing error.
The converse task of entanglement dilution, which will be of interest here, consists of preparing many copies of $\rho_{AB}$ with LOCC while consuming as few ebits as possible, and again with vanishing error. 

More formally, the task of entanglement dilution involves two parties, Alice and Bob, who can perform a restricted family $\mathcal{O}$ of CPTP maps, assisted by initially shared entanglement in the form of ebits. For the sake of this paper, it suffices to consider the cases
\bb
\mathcal{O}\in \left\{ \locc_\to,\, \locc_\leftarrow,\,  \locc_\N,\, \locc,\, \overline{\locc},\, \sep_\N,\, \sep \right\} .
\label{eq:classes}
\ee
Let $\rho = \rho_{AB}\in\mathcal{D}\qty(\mathcal{H}_A\otimes\mathcal{H}_B)$ be a target quantum state on a (either finite- or infinite-dimensional) bipartite quantum system $AB$. The \deff{entanglement cost} of $\rho$ under operations in $\mathcal{O}$ is defined by
\bb
\label{eq:entanglement_cost}
E_{c,\, \mathcal{O}} (\rho) \coloneqq \inf\left\{ r >0 :\ \lim_{n\to\infty} \inf_{\EE_n \in \mathcal{O}}\frac12\left\| \EE_n\big(\Phi^{\otimes \floor*{rn}}\big) - \rho^{\otimes n} \right\|_1 = 0\right\} ,
\ee
where $\Phi = \ketbra{\Phi}$ is the ebit defined by~\eqref{eq:ebit}, and for a fixed $n$, the operation $\EE_n$ acts on a bipartite system composed of $\floor*{rn}+\floor*{rn}$ qubits and outputs a state in $A^nB^n$. In~\eqref{eq:entanglement_cost}, $r$ should be interpreted as an \deff{achievable rate}, i.e., as an achievable ebit cost per copy of $\rho$ to be prepared. If the set on the right-hand side of~\eqref{eq:entanglement_cost} is empty, then we set $E_{c,\,\mathcal{O}}(\rho) = +\infty$.

\begin{rem} \label{inclusion_hierarchy_cost_rem}
Given two classes of operations  $\mathcal{O}$ and $\mathcal{O}'$, if $\mathcal{O}\subseteq \mathcal{O}'$ then
\bb
E_{c,\, \mathcal{O}} (\rho) \geq E_{c,\, \mathcal{O}'} (\rho)
\ee
for all bipartite states $\rho$. 
\end{rem}

\begin{rem} \label{common_randomness_rem}
All the classes of free operations in~\eqref{eq:classes} are closed under composition, implying that
\bb
E_{c,\,\mathcal{O}}\big(\EE(\rho)\big) \leq E_{c,\,\mathcal{O}}(\rho) \qquad \forall\ \EE\in \mathcal{O}\, .
\label{data_processing_cost}
\ee
Also, Alice and Bob can create arbitrary countably separable states with even the smallest of the classes in~\eqref{eq:classes}, namely, either $\locc_\to$ or $\locc_\leftarrow$\@. It then follows that
\bb
E_{c,\, \mathcal{O}}(\rho) = E_{c,\, \mathcal{O}}(\rho\otimes \sigma)
\ee
for all countably separable states $\sigma$ and for all $\mathcal{O}$ in~\eqref{eq:classes}. In particular, granting Alice and Bob access to a source of (discrete) shared randomness does not alter the entanglement cost.
\end{rem}

\begin{rem} \label{closed_under_parallel_composition_rem}
All the classes of free operations in~\eqref{eq:classes} are closed under parallel use represented by tensor product, meaning that for all $\mathcal{O}$ as in~\eqref{eq:classes}, if $\EE_1 \in \mathcal{O}_{A_1:B_1\to A'_1:B'_1}$ and $\EE_2 \in \mathcal{O}_{A_2:B_2\to A'_2:B'_2}$, where we specified as subscripts the input and output systems, then $\EE_1\otimes \EE_2 \in \mathcal{O}_{A_1A_2:B_1B_2\to A'_1A'_2:B'_1B'_2}$. As an immediate consequence, for all pairs of states $\rho = \rho_{AB}$ and $\omega = \omega_{A'B'}$ it holds that
\bb
E_{c,\,\mathcal{O}}(\rho \otimes \omega) \leq E_{c,\,\mathcal{O}}(\rho) + E_{c,\,\mathcal{O}}(\omega)\, .
\ee
\end{rem}

In the definition of entanglement cost~\eqref{eq:entanglement_cost}, the preparation of $n$ copies of $\rho$ can incur a nonzero error, however such error must vanish asymptotically as $n\to\infty$. Because of the fact that manipulation is not required to be zero-error, the entanglement cost under any class of operations and under its strong closure turns out to be the same. In particular, the cost under $\locc$ and that under $\locc_\N$ coincide. This simple fact is proved below. 

\begin{lemma}[(Entanglement cost under a class of operations and under its closure)] \label{same_cost_lemma}
For all classes of operations $\mathcal{O}$ (not necessarily as in~\eqref{eq:classes}) and all bipartite states $\rho = \rho_{AB}$, it holds that
\bb
E_{c,\,\mathcal{O}}(\rho) = E_{c,\,\widebar{\mathcal{O}}}(\rho)\, ,
\label{same_cost_O_and_O_closure}
\ee
where $\widebar{\mathcal{O}} \coloneqq \cl_{\mathrm{s}} \left( \mathcal{O} \right)$ denotes the closure of $\mathcal{O}$ with respect to the strong operator topology. In particular,
\bb
E_{c,\, \locc_\N}(\rho) = E_{c,\, \locc}(\rho)\, ,\qquad E_{c,\, \sep_\N}(\rho) = E_{c,\, \sep}(\rho)\, .
\label{same_cost_finite_rounds}
\ee
\end{lemma}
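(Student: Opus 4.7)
The plan is to show the nontrivial inequality $E_{c,\,\mathcal{O}}(\rho) \leq E_{c,\,\widebar{\mathcal{O}}}(\rho)$, since the reverse direction follows immediately from $\mathcal{O}\subseteq \widebar{\mathcal{O}}$ by Remark~\ref{inclusion_hierarchy_cost_rem}. The strategy is a straightforward diagonal argument exploiting the defining property of strong convergence, namely that strong limits preserve the action on any fixed trace-class input, which in our case will be the maximally entangled state $\Phi^{\otimes \floor{rn}}$.

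Fix any achievable rate $r > E_{c,\,\widebar{\mathcal{O}}}(\rho)$. By definition of the entanglement cost, there exists a sequence of channels $\EE_n \in \widebar{\mathcal{O}} = \cl_{\mathrm{s}}(\mathcal{O})$ such that
\begin{equation}
\tfrac{1}{2}\left\| \EE_n\big(\Phi^{\otimes \floor{rn}}\big) - \rho^{\otimes n}\right\|_1 \xrightarrow[n\to\infty]{} 0.
\end{equation}
For each fixed $n$, since $\EE_n$ lies in the strong closure of $\mathcal{O}$, there exists a sequence $\big(\EE_n^{(k)}\big)_{k}$ in $\mathcal{O}$ with $\EE_n^{(k)} \tends{s}{k\to\infty} \EE_n$. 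Applied to the specific input $\Phi^{\otimes \floor{rn}} \in \mathcal{T}(\HH)$, this yields
\begin{equation}
\lim_{k\to\infty} \left\| \EE_n^{(k)}\big(\Phi^{\otimes \floor{rn}}\big) - \EE_n\big(\Phi^{\otimes \floor{rn}}\big)\right\|_1 = 0.
\end{equation}
Hence for every $n$ we can select an index $k_n$ such that the above quantity is at most $1/n$. The triangle inequality then yields
\begin{equation}
\tfrac{1}{2}\left\| \EE_n^{(k_n)}\big(\Phi^{\otimes \floor{rn}}\big) - \rho^{\otimes n}\right\|_1 \leq \tfrac{1}{2n} + \tfrac{1}{2}\left\| \EE_n\big(\Phi^{\otimes \floor{rn}}\big) - \rho^{\otimes n}\right\|_1 \xrightarrow[n\to\infty]{} 0,
\end{equation}
showing that $r$ is also achievable with the class $\mathcal{O}$, so that $E_{c,\,\mathcal{O}}(\rho)\leq r$. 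Since $r > E_{c,\,\widebar{\mathcal{O}}}(\rho)$ was arbitrary, this proves~\eqref{same_cost_O_and_O_closure}.

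The two identities in~\eqref{same_cost_finite_rounds} then follow by choosing $\mathcal{O}$ appropriately: applying~\eqref{same_cost_O_and_O_closure} with $\mathcal{O}=\locc_\N$ and using $\cl_{\mathrm{s}}(\locc_\N) = \cl_{\mathrm{s}}(\locc) = \widebar{\locc}$ (noted in the text above Definition~\ref{def:LOCC}, together with the observation that $\locc \subseteq \widebar{\locc}$ and two more invocations of~\eqref{same_cost_O_and_O_closure} if needed) gives $E_{c,\,\locc_\N}(\rho) = E_{c,\,\locc}(\rho)$; similarly, applying~\eqref{same_cost_O_and_O_closure} with $\mathcal{O}=\sep_\N$ and using the definition $\sep = \cl_{\mathrm{s}}(\sep_\N)$ from~\eqref{separable_channels} gives $E_{c,\,\sep_\N}(\rho) = E_{c,\,\sep}(\rho)$.

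There is no genuine obstacle here: the argument relies only on the metrizability of strong convergence for each fixed pair of (separable) input and output spaces, which the footnote in Sec.~\ref{sec:topologies} already established. The only point to be careful about is that one must apply strong convergence to the specific trace-class operator $\Phi^{\otimes \floor{rn}}$, rather than invoking uniform (diamond-norm) convergence, which need not hold.
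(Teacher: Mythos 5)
Your proof is correct and follows essentially the same route as the paper's: the easy direction from $\mathcal{O}\subseteq\widebar{\mathcal{O}}$ via Remark~\ref{inclusion_hierarchy_cost_rem}, and for the nontrivial direction an approximation of each $\EE_n\in\widebar{\mathcal{O}}$ by some element of $\mathcal{O}$ on the single input $\Phi^{\otimes\floor{rn}}$ up to error $1/n$, followed by the triangle inequality; the derivation of~\eqref{same_cost_finite_rounds} is likewise identical. The only cosmetic difference is that you extract the approximant from an explicit strongly convergent sequence (justified by metrizability), whereas the paper appeals directly to the definition of the strong topology; both are fine.
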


\begin{proof}
Since $\mathcal{O}\subseteq \widebar{\mathcal{O}}$, by Remark~\ref{inclusion_hierarchy_cost_rem}, we have that $E_{c,\,\mathcal{O}}(\rho) \geq E_{c,\,\widebar{\mathcal{O}}}(\rho)$. For the opposite inequality, assume that $r>0$ is an achievable rate for entanglement dilution to $\rho$ under operations in $\widebar{\mathcal{O}}$. This means that there exists a sequence of operations $\pazocal{E}_n\in \widebar{\mathcal{O}}$ such that 
\bb
\lim_{n\to\infty} \frac12 \left\| \pazocal{E}_n\big(\Phi^{\otimes \floor*{rn}}\big) - \rho^{\otimes n} \right\|_1 = 0\, .
\ee
Now, because of the definition of strong topology (Sec.~\ref{sec:topologies}), for all $n$, we can find some $\pazocal{E}'_n\in \mathcal{O}$ such that
\bb
\frac12\left\| \pazocal{E}'_n\big(\Phi^{\otimes \floor*{rn}}\big) - \pazocal{E}_n\big(\Phi^{\otimes \floor*{rn}}\big) \right\|_1 \leq \frac1n\, .
\ee
Consequently, we also have that
\bb
\frac12 \left\| \pazocal{E}'_n\big(\Phi^{\otimes \floor*{rn}}\big) - \rho^{\otimes n} \right\|_1 \leq \frac12\left\| \pazocal{E}'_n\big(\Phi^{\otimes \floor*{rn}}\big) - \pazocal{E}_n\big(\Phi^{\otimes \floor*{rn}}\big) \right\|_1 + \frac12 \left\| \pazocal{E}_n\big(\Phi^{\otimes \floor*{rn}}\big) - \rho^{\otimes n} \right\|_1 \tendsn{} 0\, .
\ee
This implies that $r$ is also an achievable rate for entanglement dilution under $\mathcal{O}$, i.e., $r\geq E_{c,\,\mathcal{O}}(\rho)$. Taking the infimum over $r$ shows that $E_{c,\,\widebar{\mathcal{O}}}(\rho) \geq E_{c,\,\mathcal{O}}(\rho)$, completing the proof of~\eqref{same_cost_O_and_O_closure}. The second identity in~\eqref{same_cost_finite_rounds} follows immediately by taking $\mathcal{O} = \sep_\N$. To deduce the first, simply write
\bb
E_{c,\, \locc}(\rho) \leq E_{c,\, \locc_\N}(\rho) = E_{c,\, \overline{\locc}\,}(\rho) \leq E_{c,\, \locc}(\rho)\, ,
\ee
where the first and last inequality follow from Remark~\ref{inclusion_hierarchy_cost_rem} due to~\eqref{eq:infinite_operations}, while the middle equality is simply~\eqref{same_cost_O_and_O_closure} applied to $\mathcal{O} = \locc_\N$.
\end{proof}

Putting together~\eqref{eq:infinite_operations}, Remark~\ref{inclusion_hierarchy_cost_rem}, and Lemma~\ref{same_cost_lemma}, we deduce that
\begin{equation}
\label{eq:different_entanglement_costs}
    E_{c,\, \locc_1} (\rho)\geq E_{c,\,\locc_2} (\rho)\geq \cdots\geq E_{c,\,\locc_\N}(\rho) = E_{c,\,\locc}(\rho) \geq  E_{c,\, \sep_\N}(\rho) = E_{c,\, \sep}(\rho)
\end{equation}
for arbitrary bipartite states $\rho=\rho_{AB}$. In finite dimension, it is known that all of the above notions of cost, which are a priori different, actually coincide. Our results will show that, perhaps surprisingly, the same happens in infinite-dimensional systems.

\subsection{Entanglement of formation}
\label{sec:eof}

A foundational result in entanglement theory is the characterization of entanglement cost under finite-dimensional LOCC as the regularization of the entanglement of formation~\cite{Hayden-EC, Buscemi2011}, one of the most important entanglement measures. 
In a finite- or infinite-dimensional quantum system, the \deff{entanglement of formation} is constructed as the continuous convex-roof extension of the entanglement entropy, i.e.~\cite{Bennett-error-correction, Adam_W_Majewski_2002, Shirokov2010}
\bb
E_f\qty(\rho) \coloneqq \inf_{\rho_{AB} = \int \dd\mu(\psi)\, \psi_{AB}} \int \dd\mu(\psi)\, S\big(\psi_A \big)\, ,
\label{eof}
\ee
where the infimum on the right-hand side is taken over all Borel probability measures $\mu$ on the set of pure states $\psi = \psi_{AB}=\ketbra{\psi}_{AB}$ of $\mathcal{H}_A\otimes \mathcal{H}_B$ having as barycenter $\rho_{AB}$. 
Since pure states on $AB$ have the same local entropy on $A$ and on $B$, in~\eqref{eof}, we can equivalently replace $S\qty(\psi_A)$ with $S\qty(\psi_B)$. Note that Refs.~\cite{Adam_W_Majewski_2002, Shirokov2008-b, Shirokov2010} (see also Ref.~\cite[Sec.~4.4.2]{Shirokov-review} for a recent review) have also introduced another possible definition of entanglement of formation, where one replaces the integral in~\eqref{eof} with a discrete, countably infinite sum:
\bb
\widetilde{E}_f\qty(\rho) \coloneqq \inf_{\rho_{AB} = \sum_x p(x)\, \psi_x^{AB}} \sum_x p(x)\, S\big(\psi_x^A \big)\, .
\label{EoF_tilde}
\ee
Clearly, 
\bb
E_f(\rho) \leq \widetilde{E}_f\qty(\rho)
\label{EoF_smaller_EoF_tilde}
\ee
holds in general, by restricting the measure $\mu$ in~\eqref{eof} to be a countable convex combination of Dirac measures. If $AB$ is finite dimensional, it follows from Carath\'eodory's theorem~\cite[Lemma~A.2]{Uhlmann1998} that in fact $E_f = \widetilde{E}_f$. The same is true for infinite-dimensional systems, provided that one considers only states with finite local entropy; that is, as shown in Ref.~\cite{Shirokov2008-b},
\begin{equation}
\label{eq:min_S_A_B}
E_f(\rho_{AB}) = \widetilde{E}_f(\rho_{AB})\qquad \forall\ \rho_{AB}\in \mathcal{D}\qty(\mathcal{H}_A\otimes \mathcal{H}_B)\quad\text{with}\ \ \min\qty{S\qty(\rho_{A}), S\qty(\rho_{B})}<\infty\, .
\end{equation}
It is presently unknown whether the above two versions of the entanglement of formation, $E_f$ and $\widetilde{E}_f$, coincide in general or not~\cite[Sec.~4.4.2]{Shirokov-review}. However, the continuous version $E_f$ has superior proven analytical properties, which is why it may be regarded as the most appropriate definition.\footnote{We thank Maksim E.\ Shirokov for bringing this point to our attention.} In particular, in Ref.~\cite[Sec.~5]{Shirokov2010} it is proved that:
\begin{itemize}
\item $E_f$ is everywhere lower semi-continuous, meaning that
\bb
\liminf_{n\to\infty} E_f(\rho_n) \geq E_f(\rho)
\label{lsc_EoF}
\ee
for all converging sequences $(\rho_n)_n$ of states $\rho_n \in \mathcal{D}\qty(\mathcal{H}_A \otimes \mathcal{H}_B)$ with limit $\rho\in \mathcal{D}\qty(\mathcal{H}_A \otimes \mathcal{H}_B)$, i.e., such that $\left\|\rho_n - \rho\right\|_1 \tendsn{} 0$. The same property for $\widetilde{E}_f$ is known to hold on sufficiently regular sequences but has not been established in general.
\item $E_f$ is monotonic under operations in $\overline{\locc}$, meaning that
\bb
E_f\big(\EE(\rho)\big) \leq E_f(\rho)\qquad \forall\ \rho = \rho_{AB} \in \mathcal{D}\qty(\mathcal{H}_A \otimes \mathcal{H}_B)\, , \qquad \forall\ \EE\in \overline{\locc}\, .
\label{monotonicity_locc_EoF}
\ee
Note that Ref.~\cite[Sec.~5]{Shirokov2010} only states~\eqref{monotonicity_locc_EoF} explicitly for selective local instruments, but by composing them, taking limits, and using the lower semi-continuity~\eqref{lsc_EoF}, one can establish~\eqref{monotonicity_locc_EoF}. In this work, we will extend~\eqref{monotonicity_locc_EoF} to all separable operations (see Proposition~\ref{prop:monotonicity_entanglement_formation_under_sep} below).
\end{itemize}
Additionally, the restriction in~\eqref{eq:min_S_A_B} immediately implies that the entanglement of formation itself is finite, i.e., $E_f\qty(\rho)<\infty$, due to the concavity of the entropy. In fact, 
$\int \dd\mu(\psi)\, S\big(\psi_A \big) \leq S\left(\int \dd\mu(\psi)\, S\big(\psi_A \big)\right) = S\qty(\rho_{A})$, and analogously for $B$. In this work, we will always deal with states having finite local entropy, hence the distinction between $E_f$ and $\widetilde{E}_f$ is not crucial for us. 

Another important property of the entanglement of formation is subadditivity; for every two states $\rho_{AB}$ and $\sigma_{A'B'}$, we have that
\begin{equation}
    E_f\big(\rho_{AB} \otimes \sigma_{A'B'}\big)\leq E_f\big(\rho_{AB}\big) + E_f\big(\sigma_{A'B'}\big),
\end{equation}
where it is understood that on the left-hand side, the entanglement of formation is computed with respect to the bipartition $AA'$ versus $BB'$.
Note that the additivity $E_f\big(\rho_{AB} \otimes \sigma_{A'B'}\big)= E_f\big(\rho_{AB}\big) + E_f\big(\sigma_{A'B'}\big)$ does not hold in general~\cite{Hastings2008, Shor2004}. When combined with Fekete's lemma~\cite{Fekete1923}, subadditivity implies that the \deff{regularized entanglement of formation} exists and can be evaluated as
\bb
E_f^\infty\qty(\rho) \coloneqq \lim_{n\to\infty} \frac1n\, E_f\big(\rho^{\otimes n}\big) = \inf_{n\in \N_+} \frac1n\, E_f\big(\rho^{\otimes n}\big) .
\label{regularized_eof}
\ee
Again, it is understood that the bipartition used on the right-hand sides of~\eqref{regularized_eof} is $A^n$ versus $B^n$. The main result of Ref.~\cite{Hayden-EC} is precisely that
\bb
E_{c,\,\locc}(\rho) = E_f^\infty\qty(\rho)
\label{Hayden_et_al_result}
\ee
for all states $\rho_{AB}\in \mathcal{D}\big(\mathcal{H}_A\otimes \mathcal{H}_B\big)$ with both $\dim \mathcal{H}_A < \infty$ and $\dim \mathcal{H}_B < \infty$, provided that one restricts also the LOCC operations to be finite-dimensional.

In the finite dimension, the entanglement of formation is not only lower semi-continuous but, in fact, asymptotically continuous. Namely, for any two states $\rho_{AB}$ and $\omega_{AB}$ of a bipartite system $AB$ with finite minimum local dimension $d \coloneqq \min\qty{\dim\mathcal{H}_A,\dim\mathcal{H}_B}$, it holds that~\cite[Corollary~2]{Mark2020}
\bb
\left| E_f\qty(\rho) - E_f\qty(\omega) \right| \leq \sqrt{\e(2-\e)} \log_2 (d-1) + h_2\qty(\sqrt{\e(2-\e)})\, ,
\label{asymptotic_continuity_EoF}
\ee
where $h_2(x)\coloneqq -x\log_2 x - (1-x) \log_2(1-x)$ is the binary entropy function, and 
\begin{equation}
\e\coloneqq \frac12 \left\|\rho_{AB} - \omega_{AB}\right\|_1
\end{equation}
is the trace distance between $\rho$ and $\omega$ (cf.\ Refs.~\cite{Nielsen2000, tightuniform}), assumed to satisfy the inequality $0\leq \e \leq 1-\frac{\sqrt{2d-1}}{d}$. The above inequality implies not only the continuity of $E_f$ but also a special form of uniform continuity that features a linear dependence on $\log_2 d$ --- this is precisely what is commonly referred to as asymptotic continuity. 

\subsection{What exactly goes wrong in the infinite-dimensional case?} \label{subsec:what_goes_wrong}

Before presenting our main result, it is worth discussing why exactly the finite-dimensional proof of~\eqref{Hayden_et_al_result} in Ref.~\cite{Hayden-EC} cannot be directly extended to infinite-dimensional systems. We already gave a broad overview of the problems in Sec.~\ref{subsec_main_contribution}, but now we can be more technically precise and pinpoint where the issues lie. Essentially, there are four main problems, two in the direct part and two in the converse.
\begin{enumerate}[(1)]

\item 
\label{issue:1}
The direct part of the argument in Ref.~\cite{Hayden-EC} relies crucially on the existence of an entanglement dilution protocol that prepares any pure state $\psi_{AB}$ by consuming ebits at a rate equal to $S(\psi_A)$. The first problem lies precisely in how to construct such a protocol in the infinite-dimensional case. In fact, in order for the cost to be identified with the local quantum entropy, we need to apply some notion of typicality to the locally reduced state $\psi_A$. How does this work beyond finite dimension? We will solve this issue by resorting to the fact that the notion of weak typicality is still applicable to countably infinite ensembles.

\item 
\label{issue:2}
Another more serious problem that arises in attempting to generalize the achievability (direct) part of the argument in Ref.~\cite{Hayden-EC} is as follows: given any pure-state decomposition $\rho_{AB} = \sum_x p(x)\, \psi^{(x)}_{AB}$, instead of $\rho_{AB}^{\otimes n}$, we aim to prepare its typical part, given, up to normalization,
by $\vphantom{\Big|}\sum_{x^n\in T_{p,n}^\delta} p(x^n)\, \psi^{(x^n)}_{A^nB^n}$, where $x^n\coloneqq\qty(x_1,\ldots,x_n)$, $p(x^n) \coloneqq \prod_{i=1}^n p(x_i)$, $\psi^{(x^n)}_{A^nB^n} \coloneqq \bigotimes_{i=1}^n \psi^{(x_i)}_{A_iB_i}$,
and $T_{p,n}^\delta$ is the strongly typical set~\cite[Definition~14.7.2]{MARK}.
Loosely speaking, $T_{p,n}^\delta$ comprises all sequences that contain each symbol $x\in \mathcal{X}$ approximately $N p\qty(x)$ times, with the approximation being controlled by $\delta$. Now, the key observation is that in finite dimension, when $\mathcal{X}$ can be assumed to be finite without loss of generality~\cite{Uhlmann1998}, each of the numbers $N p\qty(x)$ is either $0$ (if $p(x)=0$) or asymptotically large (if $p(x)>0$). In the latter, nontrivial case, we can thus apply the \emph{asymptotic} entanglement dilution protocol for pure states, with associated cost $S\big(\psi_x^A\big)$.

If the alphabet $\mathcal{X}$ is infinite, which will be the case for a generic infinite-dimensional state, then the strongly typical set~\cite[Definition~2]{Ho_2010_unified_typicality} (see also Ref.~\cite{YEUNG-BOOK}) will include sequences in which some of the symbols, namely those with very small probabilities, appear a \emph{non-asymptotic number of times}. We could call this the ``rare'' part of the state $\rho_{AB}^{\otimes n}$. What entanglement dilution protocol do we use for this rare part, given that we cannot rely anymore on the asymptotic protocol?
We will solve this problem by devising a more ``wasteful'' protocol that is to be applied specifically to the rare part of the state. The crux of our argument is to show that the corresponding waste of resources can be controlled and made arbitrarily small asymptotically.

\item
\label{issue:3}
At the same time, the converse part of the argument needed to establish~\eqref{Hayden_et_al_result} also fails in for infinite-dimensional systems. 
One reason for this failure is because in this case the asymptotic continuity inequality~\eqref{asymptotic_continuity_EoF}, which plays a key role in proving~\eqref{Hayden_et_al_result}, trivializes in the limit $d\to\infty$. And this is for a good reason: the entanglement of formation, like most entropic quantities, ceases to be continuous in the infinite-dimensional case. In practice, the well-known fact that quantum entropy of is discontinuous everywhere if the underlying Hilbert space is infinite dimensional~\cite[Sec.~II.D]{Wehrl} already implies that the entanglement of formation is discontinuous at all pure states. This problem has been identified before in Ref.~\cite{nonclassicality}, where it was termed ``asymptotic continuity catastrophe.'' We will solve it by means of the semi-continuity bound recently established in Ref.~\cite{Shirokov2022}.

\item
\label{issue:4}
Finally, another property used in the converse part of the finite-dimensional argument in Ref.~\cite{Hayden-EC} is the monotonicity of the entanglement of formation under LOCC. It was later shown in Ref.~\cite{Gheorghiu2008} that in finite dimension the monotonicity of the entanglement of formation holds even under separable operations. Indeed, combining this result with the argument in Ref.~\cite{Hayden-EC} leads to the conclusion that in finite dimension the entanglement cost under separable operations is given by the same formula as the right-hand side of~\eqref{Hayden_et_al_result}.
It is then natural to wonder whether an analogous formula holds true for infinite-dimensional systems, but this has been challenging to prove due to the lack of analysis of the monotonicity of the entanglement of formation under infinite-dimensional separable operations.
With our rigorous formulation of $\sep_\N$ and $\sep$ in Sec.~\ref{sec:operations}, we will solve this problem by showing that we can use an infinite-dimensional version of the Schur--Horn Theorem~\cite{KAFTAL20103115} to extend the result of Ref.~\cite{Gheorghiu2008}, first to countably separable and then to all separable operations. This will allow us to establish the converse bound for the entanglement cost under separable operations.
\end{enumerate}

\section{Main results} \label{sec:main_results}

In this section, we present our main result, that is, the general equality between the entanglement cost and the regularized entanglement of formation, whose validity we extend from finite- to infinite-dimensional systems. In fact, our theorem below shows something even stronger, i.e., that all the different notions of entanglement cost corresponding to the classes of operations in~\eqref{eq:different_entanglement_costs} are quantitatively equivalent. Our results can be seen as a complete generalization of the finite-dimensional result in Ref.~\cite{Hayden-EC}.

\begin{thm}[(Entanglement cost for infinite-dimensional physical systems)] \label{cost_thm}
Let $\HH_A$, $\HH_B$ be separable (possibly infinite-dimensional) Hilbert spaces. For any quantum state $\rho_{AB}\in \mathcal{D}\qty(\HH_A\otimes \HH_B)$ with finite local quantum entropy
\bb
\min\qty{S\qty(\rho_{A}),\, S\qty(\rho_{B})} < \infty
\label{cost_thm_finite_local_entropy}
\ee
it holds that
\bb
\label{eq:cost_thm}
E_{c,\,\locc_\to}\qty(\rho) = E_{c,\,\locc_\leftarrow}\qty(\rho) =
E_{c,\, \locc_\N}\qty(\rho) = E_{c,\,\locc}\qty(\rho) = E_{c,\,\sep_\N}\qty(\rho) = E_{c,\,\sep}\qty(\rho) = E_f^\infty\qty(\rho)\, ,
\ee
where $E_{c,\, \mathcal{O}}$ is the entanglement cost under a family of operations $\mathcal{O}$ as defined by~\eqref{eq:entanglement_cost} (see also Sec.~\ref{sec:operations}), and $E_f^\infty$ is the regularized entanglement of formation given by~\eqref{regularized_eof}.
\end{thm}

\begin{proof}
The conclusion follows by combining the direct part (Proposition~\ref{prop:achievabllity}) and the converse part (Proposition~\ref{cost_prop_under_sep}).
\end{proof}

\begin{rem}
In the above Theorem~\ref{cost_thm}, we restrict our attention to states with finite local entropy --- i.e., such that~\eqref{cost_thm_finite_local_entropy} holds. Due to~\eqref{eq:min_S_A_B}, for these states, the continuous and the discrete versions of the entanglement of formation (namely, $E_f$ and $\widetilde{E}_f$, respectively) coincide. Since the same consideration holds for their regularized versions, on the rightmost side of~\eqref{eq:cost_thm} we could equivalently replace $E_f^\infty(\rho)$ with $\widetilde{E}_f^\infty(\rho)\coloneqq \lim_{n\to\infty} \frac1n\, \widetilde{E}_f\big(\rho^{\otimes n}\big)$.
\end{rem}

The next two sections are devoted to a complete proof of the above Theorem~\ref{cost_thm}. In particular, Sec.~\ref{sec:achievability} deals with the achievability statement, while Sec.~\ref{sec:converse} contains the converse part.

\section{Direct part: Construction of a one-way LOCC protocol for entanglement dilution} \label{sec:achievability}

In this section, we construct a family of one-way LOCC protocols for entanglement dilution achieving~\eqref{eq:cost_thm}, proving the direct part of Theorem~\ref{cost_thm}. 
After a brief introduction on typicality in Sec.~\ref{subsec:typicality}, we show properties of the entanglement cost in Sec.~\ref{subsec:properties}, which are to be used for our analysis.
Then, we give the proof of the direct part in two main steps: first, in Sec.~\ref{subsec:pure_states}, we deal with the case of pure states, solving the issue~\ref{issue:1} highlighted in Sec.~\ref{subsec:what_goes_wrong}; second, in Sec.~\ref{subsec:general_dilution}, we present our general one-way LOCC protocol for entanglement dilution in the case of mixed states, thus addressing the issue~\ref{issue:2} in Sec.~\ref{subsec:what_goes_wrong} and completing the conceptual developments needed to establish a full achievability proof.

\subsection{Notions of typicality} \label{subsec:typicality}

As in the finite-dimensional case, the crucial notion for constructing an asymptotically optimal entanglement dilution protocol is that of typicality~\cite{CT, YEUNG-BOOK}, which we summarize here. Let $\mathcal{X}=\qty{1,2,\ldots}$ denote a finite or countably infinite set, $X$ a random variable taking on values in $\mathcal{X}$, and $p(x)$ the corresponding probability distribution. A sequence of $n$ independent and identically distributed (IID) copies $X_i$ of $X$ will be denoted by $X^n=\qty(X_1,X_2,\ldots,X_n)$. We write realizations of $X^n$ as $x^n=\qty(x_1,x_2,\ldots,x_n)\in \mathcal{X}^n$; the corresponding probability thus is $p_{n}\qty(x^n)=p\qty(x_1)p\qty(x_2)\cdots p\qty(x_n)$. Define $q\qty(x|x^n)$ as the empirical distribution of symbols $x\in \mathcal{X}$ in the string $x^n$, i.e.,
\bb
q\qty(x|x^n) \coloneqq \frac{N\qty(x|x^n)}{n}\, ,
\ee
where $N\qty(x|x^n)$ is the number of occurrences of $x\in\mathcal{X}$ in the sequence $x^n$. 
We are now ready to define the main notions of typicality employed here, namely, strong typicality, which can be applied (and will be applied) to finite alphabets only, and weak typicality, which is more generally applicable albeit less powerful. For details, we refer the reader to Refs.~\cite{CT, YEUNG-BOOK, WILDE}. A succinct discussion can be found in Ref.~\cite{Ho_2010_unified_typicality}.

If $\mathcal{X}$ is finite, then for any positive integer $n$ and any $\delta>0$, we can define the \deff{strongly typical set} $T_\delta^{X^n}$ as the set of sequences $x^n\in \mathcal{X}^n$ such that, for all $x\in \mathcal{X}$, if $p(x)=0$, then $q(x|x^n) = 0$, and otherwise $\left|p(x) - q(x|x^n)\right| \leq \delta$. The probability that a sequence is strongly typical approaches one asymptotically, i.e.,
\bb
\lim_{n\to\infty} p_n\big(T^{X^n}_\delta\big) = 1 \qquad \forall\ \delta>0\, .
\label{unit_prob_strongly_typical_set}
\ee

An alternative notion that can be used for a countably infinite set is that of weak typicality. If $\mathcal{X}$ is either a finite or countably infinite set, and the random variable $X$ on $\mathcal{X}$ has finite entropy $H(X)\coloneqq-\sum_{x\in\mathcal{X}} p(x)\log_2 p(x)<\infty$, then for any positive integer $n$ and any $\delta>0$, we can define the \deff{weakly typical set} $W_\delta^{X^n}$ as the set of sequences $x^n\in \mathcal{X}^n$ such that
\bb
\left|-\frac1n\, \log_2 p_n(x^n) -H(X)\right|\leq\delta\, .
\ee

Weakly typical sets enjoy a wealth of properties, presented in detail in the aforementioned textbooks. We summarize some of them below:
\begin{enumerate}[(a)]
\item \emph{Asymptotic equipartition property:} 
\bb
2^{-n\left( H(X) + \delta \right)} \leq p_n(x^n) \leq 2^{-n\left( H(X) - \delta \right)} \qquad \forall\ x^n\in W^{X^n}_\delta\, .
\label{AEP_weakly_typical_set}
\ee
\item \emph{Asymptotically unit probability:}
\bb
\lim_{n\to\infty} p_n\big(W^{X^n}_\delta\big) = 1\, ,
\label{unit_prob_weakly_typical_set}
\ee
where $p_n\big(W^{X^n}_\delta\big) = \sum_{x^n\in W^{X^n}_\delta} p_n(x^n)$.
\item \emph{Exponentially small cardinality:}
\bb
\left| W^{X^n}_\delta \right| \leq 2^{n\left(H(X) + \delta\right)} ,\qquad \lim_{n\to\infty} \frac{\left| W^{X^n}_\delta \right|}{2^{n\left(H(X) + \delta\right)}} = 1\, .
\label{exp_smaller_weakly_typical_set}
\ee
\end{enumerate}

\subsection{Properties of the entanglement cost}
\label{subsec:properties}

In this section, we show the properties of the entanglement cost relevant to our analysis.
In particular, after discussing the additivity of the entanglement cost, we show the convexity of the entanglement cost.

First, we show the additivity of entanglement cost as follows.
\begin{lemma}[(Additivity of entanglement cost)] \label{lemma:additivity}
Let $A$ and $B$ be quantum systems represented by separable Hilbert spaces. For any bipartite state $\rho_{AB}\in\mathcal{D}\qty(\HH_A \otimes \HH_B)$, for all classes of operations $\mathcal{O}$ in~\eqref{eq:classes}, and for all $n\in\mathbb{N}_+$, it holds that
\bb
E_{c,\,\mathcal{O}}\qty(\rho)=\frac{1}{n}E_{c,\,\mathcal{O}}\qty(\rho^{\otimes n}).
\ee
\end{lemma}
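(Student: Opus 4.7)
The plan is to prove the two inequalities $E_{c,\mathcal{O}}(\rho^{\otimes n}) \leq n E_{c,\mathcal{O}}(\rho)$ and $E_{c,\mathcal{O}}(\rho^{\otimes n}) \geq n E_{c,\mathcal{O}}(\rho)$ separately, by unpacking the rate-based definition~\eqref{eq:entanglement_cost} and carefully manipulating the floor functions. Throughout, I will use that every class $\mathcal{O}$ listed in~\eqref{eq:classes} is closed under composition (so that appending partial traces to a protocol keeps it in $\mathcal{O}$, cf.\ Remark~\ref{common_randomness_rem}) and contains local partial-trace channels.

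For the inequality $E_{c,\mathcal{O}}(\rho^{\otimes n}) \leq n E_{c,\mathcal{O}}(\rho)$, fix any $r > E_{c,\mathcal{O}}(\rho)$ and pick protocols $\mathcal{E}_k \in \mathcal{O}$ witnessing achievability, so that $\frac12\|\mathcal{E}_k(\Phi^{\otimes \lfloor rk\rfloor}) - \rho^{\otimes k}\|_1 \to 0$ as $k\to\infty$. Restricting to the subsequence $k=nm$ and defining $\mathcal{F}_m \coloneqq \mathcal{E}_{nm}$, I get protocols using $\lfloor rnm\rfloor = \lfloor (rn)m\rfloor$ ebits and producing an asymptotically vanishing trace-distance error relative to $\rho^{\otimes nm} = (\rho^{\otimes n})^{\otimes m}$. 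Hence $rn$ is achievable for $\rho^{\otimes n}$, and taking the infimum over $r>E_{c,\mathcal{O}}(\rho)$ yields the bound.

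For the converse inequality $E_{c,\mathcal{O}}(\rho^{\otimes n}) \geq n E_{c,\mathcal{O}}(\rho)$, fix $r' > E_{c,\mathcal{O}}(\rho^{\otimes n})$ and associated protocols $\mathcal{F}_m \in \mathcal{O}$ with $\frac12\|\mathcal{F}_m(\Phi^{\otimes \lfloor r'm\rfloor}) - \rho^{\otimes nm}\|_1 \to 0$. Fix $\tilde r > r'/n$. For each $N$, write $N = nm+s$ with $m = \lfloor N/n\rfloor$ and $0\leq s<n$, and build a rate-$\tilde r$ protocol for $\rho$ by running $\mathcal{F}_{m+1}$ on $\lfloor r'(m+1)\rfloor$ of the available $\lfloor \tilde r N\rfloor$ ebits, then tracing out the $n(m+1)-N = n-s$ extra output copies of $AB$ (and discarding any unused ebits). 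Since $m+1 \leq N/n + 1$, one has
\begin{equation*}
\tilde r N - r'(m+1) \;\geq\; \bigl(\tilde r - r'/n\bigr) N - r' \;\xrightarrow[N\to\infty]{}\;+\infty\, ,
\end{equation*}
so $\lfloor r'(m+1)\rfloor \leq \lfloor \tilde r N\rfloor$ for all sufficiently large $N$, guaranteeing that enough ebits are indeed available. Monotonicity of the trace distance under partial trace gives
\begin{equation*}
\tfrac12\bigl\|\mathcal{E}_N\bigl(\Phi^{\otimes \lfloor \tilde r N\rfloor}\bigr) - \rho^{\otimes N}\bigr\|_1 \;\leq\; \tfrac12\bigl\|\mathcal{F}_{m+1}\bigl(\Phi^{\otimes \lfloor r'(m+1)\rfloor}\bigr) - \rho^{\otimes n(m+1)}\bigr\|_1 \;\xrightarrow[N\to\infty]{}\; 0\, ,
\end{equation*}
so $\tilde r$ is achievable for $\rho$. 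Sending $\tilde r \searrow r'/n$ and then $r' \searrow E_{c,\mathcal{O}}(\rho^{\otimes n})$ completes the proof.

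I expect no real obstacle here: the argument is entirely formal and the only care required is the bookkeeping with the floor functions to ensure that in the converse direction the protocol fits within the nominal ebit budget for sufficiently large $N$. The closure of each $\mathcal{O}$ under composition with partial traces (on both the ebit and the $AB$ side) is the sole structural ingredient beyond the definition, and it holds uniformly for all classes in~\eqref{eq:classes}.
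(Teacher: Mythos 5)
Your proof is correct, and it is essentially the argument the paper relies on: the paper simply cites Ref.~\cite[Lemma~1]{Hayden-EC} and notes that its proof (precisely this blocking argument with floor-function bookkeeping, discarding of surplus ebits, and partial traces over surplus output copies) carries over verbatim to infinite dimensions. You have just written out explicitly what the paper delegates to the reference, using the same structural ingredients (closure of $\mathcal{O}$ under composition and under local partial traces) that the paper flags in its one-line proof.
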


\begin{proof}
The additivity of $E_{c,\,\mathcal{O}}$ on multiple copies of $\rho$ has been shown in Ref.~\cite[Lemma~1]{Hayden-EC}. Importantly, the proof there works just as well in infinite-dimensional systems. Note that here we are surreptitiously using the fact that the set $\mathcal{O}$ is closed under parallel use of operations (Remark~\ref{closed_under_parallel_composition_rem}).
\end{proof}

An important step in generalizing the finite-dimensional result in Ref.~\cite{Hayden-EC} is to realize that \emph{the entanglement cost is a convex function of the state}. This key fact has been first established in Ref.~\cite{Donald2002}; however, a closer look at the justification given after Definition~17 on p.14 therein reveals that the argument of Ref.~\cite{Donald2002} relies on the expression of the entanglement cost as the regularized entanglement formation, the very result that we are trying to generalize. To avoid circular reasoning, we give here a separate simple proof of the convexity of the entanglement cost, which we consider part of the folklore.

\begin{lemma}[(Convexity of entanglement cost)] \label{lemma:convexity_cost}
Let $A$ and $B$ be quantum systems represented by separable Hilbert spaces $\mathcal{H}_A$ and $\mathcal{H}_B$, respectively, and let $X$ be a random variable with finite alphabet $\XX$ distributed according to some probability distribution $p$. Given a collection of bipartite states $\big(\rho_x\in\mathcal{D}\qty(\mathcal{H})\big)_{x\in \XX}$ with $\mathcal{H}=\mathcal{H}_A\otimes\mathcal{H}_B$ and a class of free operations $\mathcal{O}$ as in~\eqref{eq:classes},
it holds that
\bb
E_{c,\,\mathcal{O}}\left(\sumno_x p(x)\, \rho_x\right) \leq \sumno_x p(x)\, E_{c,\,\mathcal{O}}(\rho_x)\, .
\label{eq:convexity_cost}
\ee
\end{lemma}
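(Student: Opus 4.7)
The plan is to construct an explicit one-way LOCC protocol realising the rate $\sum_x p(x) E_{c,\,\mathcal{O}}(\rho_x)$ for $\sigma \coloneqq \sum_x p(x) \rho_x$; since every class in~\eqref{eq:classes} contains $\locc_\to$ and is closed under parallel composition (Remark~\ref{closed_under_parallel_composition_rem}), such a protocol automatically witnesses the bound for every $\mathcal{O}$ in~\eqref{eq:classes}. Fix $\epsilon>0$, choose rates $r_x>E_{c,\,\mathcal{O}}(\rho_x)$ with $\sum_x p(x) r_x$ within $\epsilon/2$ of the right-hand side of~\eqref{eq:convexity_cost}, and, via~\eqref{eq:entanglement_cost}, pick dilution protocols $\mathcal{E}_n^{(x)}\in \mathcal{O}$ preparing $\rho_x^{\otimes n}$ from $\lfloor r_x n\rfloor$ ebits up to vanishing trace-norm error as $n\to\infty$ (the case where $E_{c,\,\mathcal{O}}(\rho_x)=+\infty$ for some $x$ with $p(x)>0$ makes the bound trivial). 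The protocol then runs as follows: Alice samples $x^N=(x_1,\dots,x_N)\in\XX^N$ i.i.d.\ from $p$ and transmits the outcome to Bob classically; writing $n_x$ for the number of indices $i$ with $x_i=x$, the parties split the ebit supply into disjoint blocks of sizes $\lfloor r_x n_x \rfloor$, apply $\mathcal{E}_{n_x}^{(x)}$ in parallel to each block, and finally permute the $N$ outputs into the order dictated by $x^N$. Averaged over $x^N$, this ideally produces $\sigma^{\otimes N}$.

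The main obstacle is that the random cost $\sum_x \lfloor r_x n_x\rfloor$ has to fit inside the deterministic ebit budget $\lfloor rN\rfloor$ required by~\eqref{eq:entanglement_cost}. My fix is to set $r \coloneqq \sum_x p(x) r_x + \epsilon/2$ and to append an abort branch that outputs a fixed product state whenever the cost would exceed $\lfloor rN\rfloor$. Because $\XX$ is finite, strong typicality (Sec.~\ref{subsec:typicality}) is available: for $\delta$ small enough, every $x^N$ in the strongly typical set $T_\delta^{X^N}$ satisfies $n_x \leq N(p(x)+\delta)$ for all $x$, so $\sum_x \lfloor r_x n_x\rfloor \leq \lfloor rN\rfloor$ for sufficiently large $N$, and the abort probability vanishes by~\eqref{unit_prob_strongly_typical_set}. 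Simultaneously, on the typical set $n_x \geq N(p(x)-\delta)\to\infty$ for every $x$ in the (finite) support of $p$, so each individual $\mathcal{E}_{n_x}^{(x)}$ converges to the ideal output.

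Assembling these pieces via the triangle inequality against $\sigma^{\otimes N}=\E_{x^N}\big[\bigotimes_i \rho_{x_i}\big]$ will bound the trace-norm error by the atypicality probability $\Pr\!\big[X^N\notin T_\delta^{X^N}\big]$ plus a finite sum of per-symbol protocol errors, both of which tend to zero as $N\to\infty$; hence $r$ is an achievable rate for $\sigma$ under $\mathcal{O}$, giving $E_{c,\,\mathcal{O}}(\sigma)\leq r$, and letting $\epsilon\downarrow 0$ establishes~\eqref{eq:convexity_cost}. The only delicate step I foresee is the coupling of the random ebit cost to the fixed budget in~\eqref{eq:entanglement_cost}, but the abort trick combined with strong typicality on the finite alphabet $\XX$ resolves this cleanly without any infinite-dimensional subtlety and, crucially, without invoking the identity $E_{c,\,\locc}=E_f^\infty$ that the paper is about to prove (thereby sidestepping the circularity flagged in the discussion preceding the lemma).
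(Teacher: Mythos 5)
Your proposal is correct and follows essentially the same route as the paper's proof: shared randomness to sample the composition of the string, strong typicality on the finite alphabet $\XX$ to fit the random per-symbol copy counts inside a deterministic ebit budget of rate $\sum_x p(x)\,r_x + O(\epsilon)$, the per-symbol dilution protocols run in parallel, and a final permutation --- the only bookkeeping differences being that the paper reduces to a binary alphabet and prepares deterministic, slightly oversized blocks of $\lfloor n(p(x)+\xi)\rfloor$ copies of each $\rho_x$ before discarding the excess, whereas you run the subroutines on the random block sizes $n_x$ and abort on atypical draws. One small correction: $\locc_\to\not\subseteq\locc_\leftarrow$, so instead of claiming that every class contains $\locc_\to$ you should justify the sampling/permutation wrapper as local operations plus shared randomness (Remark~\ref{common_randomness_rem}), which is free for every $\mathcal{O}$ in~\eqref{eq:classes} and is exactly how the paper phrases its channel $\Lambda_n$.
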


\begin{proof}
Up to iterations, we can assume that $X$ is a binary random variable with alphabet $\XX = \{0,1\}$. Also, without loss of generality, we can posit that $p(x)>0$ and $E_{c,\, \mathcal{O}}(\rho_x) < \infty$ for all $x$.
Defining
$\widebar{\rho} = p(0)\, \rho_0 + p(1)\,\rho_1\in\mathcal{D}\qty(\mathcal{H})$,
we need to prove that
\bb
E_{c,\,\mathcal{O}}\big(\widebar{\rho}\big) \leq p(0)\, E_{c,\,\mathcal{O}}(\rho_0) + p(1)\, E_{c,\,\mathcal{O}}(\rho_1)\, .
\label{eq:convexity_cost_binary}
\ee
To this end, it suffices to construct, for all $\eta>0$, an entanglement dilution protocol under $\mathcal{O}$ that prepares the state $\widebar{\rho}$ by consuming $r = p(0)\, E_{c,\,\mathcal{O}}(\rho_0) + p(1)\, E_{c,\,\mathcal{O}}(\rho_1) +\eta$ ebits per output copy. We can assume without loss of generality that $p(0),\, p(1)\in (0,1)$.
We start by expanding
$\widebar{\rho}^{\,\otimes n}\in\mathcal{D}\qty(\mathcal{H}^{\,\otimes n})$,
where $n$ is asymptotically large.
Up to an exponentially small remainder, the typical term will have about $n p(0)$ copies of $\rho_0$ and $n p(1)$ copies of $\rho_1$. More rigorously, for some small $\xi > 0$ such that $0<\xi<\min\{p(0),\,p(1)\}$ --- the precise value will be determined later --- we can define
\bb
\omega_n \coloneqq&\ \frac{1}{p_n\big(T_\xi^{X^n}\big)} \sum_{x^n \in T_\xi^{X^n}} p_n(x^n)\, \bigotimes_{i=1}^n \rho_{x_i} \\
=&\ \frac{1}{p_n\big(T_\xi^{X^n}\big)} \sum_{\substack{k\in \{0,\ldots, n\}:\\ \left|k/n - p(0) \right| \,\leq\, \xi}} \binom{n}{k} \,p(0)^k p(1)^{n-k} \mathcal{S}_n\left( \rho_0^{\otimes k} \otimes \rho_1^{\otimes (n-k)} \right) ,
\label{convexity_cost_proof_eq1}
\ee
where
\bb
\mathcal{S}_n (\cdot) \coloneqq \frac{1}{n!} \sum_{\pi\in S_n} U_\pi^{A^n}\otimes U_\pi^{B^n} \,(\cdot)\, \big(U_\pi^{A^n}\otimes U_\pi^{B^n}\big)^\dag
\label{convexity_cost_proof_eq2}
\ee
is the map that symmetrizes over the $n$ copies of $AB$ --- here, $S_n$ is the symmetric group over $n$ elements,
and $U_\pi^{A^n}$ (respectively $U_\pi^{B^n}$) denotes the unitary operator that permutes the $n$ subsystems of $A^n$ (respectively $B^n$) according to the permutation $\pi\in S_n$.
In going from the first to the second line of~\eqref{convexity_cost_proof_eq1}, we reorganized the sequences $x^n$ by introducing $k \coloneqq N(0|x^n)$, which parametrizes the number of times $0$ appears in the sequence $x^n$.

Strong typicality applied to $X$ shows that
\bb
\left\| \omega_n - \widebar{\rho}^{\,\otimes n} \right\|_1 \tendsn{} 0\, .
\label{convexity_cost_proof_eq3}
\ee
In fact, due to~\eqref{unit_prob_strongly_typical_set}, it holds that
\bb
\left\| \omega_n - \widebar{\rho}^{\,\otimes n} \right\|_1 &\leq \left\| \omega_n - p_n\big(T_\xi^{X^n}\big)\, \omega_n\right\|_1 + \left\| p_n\big(T_\xi^{X^n}\big)\, \omega_n - \widebar{\rho}^{\,\otimes n} \right\|_1  \leq 2 \left(1 - p_n\big(T_\xi^{X^n}\big)\right) \tendsn{} 0\, .
\ee

Now, in light of~\eqref{convexity_cost_proof_eq1} and~\eqref{convexity_cost_proof_eq3}, and taking into account that the symmetrization in~\eqref{convexity_cost_proof_eq2} can be realized with local operations and a source of shared randomness, it is intuitively clear that the entanglement cost of $\widebar{\rho}$ cannot be larger than the rate at which we need to consume entanglement to generate $\rho_0^k \otimes \rho_1^{\otimes (n-k)}$. Since $k\approx n p(0)$, this rate is approximately equal to $p(0)\, E_{c,\,\mathcal{O}}(\rho_0) + p(1)\, E_{c,\,\mathcal{O}}(\rho_1)$, which would conclude the proof.

More rigorously, there exists an operation $\Lambda_n\in \mathcal{O}$ such that
$\Lambda_n$ is a CPTP map from $\mathcal{B}\qty(\mathcal{H}^{\,\otimes \qty(\floor*{n\left(p(0) + \xi\right)}+\floor*{n\left( p(1) + \xi \right)})})$ to  $\mathcal{B}\qty(\mathcal{H}^{\,\otimes n})$ with the property that
\bb
\omega_n = \Lambda_n \left( \rho_0^{\otimes \floor*{n\left(p(0) + \xi\right)}} \otimes \rho_1^{\otimes \floor*{n\left( p(1) + \xi \right)}} \right) .
\label{convexity_cost_proof_eq4}
\ee
To provide a detailed description of $\Lambda_n$, recalling that $\mathcal{H}=\mathcal{H}_A\otimes\mathcal{H}_B$, we write the space for its input system of $\Lambda_n$ as
\bb
\mathcal{H}^{\,\otimes \qty(\floor*{n\left(p(0) + \xi\right)}+\floor*{n\left( p(1) + \xi \right)})}&=\mathcal{H}_{A_1}\otimes\cdots\otimes\mathcal{H}_{A_{\qty(\floor*{n\left(p(0) + \xi\right)}+\floor*{n\left( p(1) + \xi \right)})}} \\
&\qquad\otimes \mathcal{H}_{B_1}\otimes\cdots\otimes\mathcal{H}_{B_{\qty(\floor*{n\left(p(0) + \xi\right)}+\floor*{n\left( p(1) + \xi \right)})}},
\ee
where $\rho_0^{\otimes \floor*{n\left(p(0) + \xi\right)}}$ in~\eqref{convexity_cost_proof_eq4} belongs to
\bb
\label{eq:systems_rho_0}
\rho_0^{\otimes \floor*{n\left(p(0) + \xi\right)}}\in\mathcal{D}\qty(\qty(\mathcal{H}_{A_1}\otimes\mathcal{H}_{B_1})\otimes\cdots\otimes\qty(\mathcal{H}_{A_{\floor*{n\left(p(0) + \xi\right)}}}\otimes\mathcal{H}_{B_{\floor*{n\left(p(0) + \xi\right)}}})),
\ee
and $\rho_1^{\otimes \floor*{n\left(p(1) + \xi\right)}}$ belongs to the remaining part of the systems.
The operation $\Lambda_n$ consists of the following steps:
\begin{enumerate}[(i)]
\item Alice and Bob draw a uniformly distributed permutation $\pi\in S_n$ and an integer-valued random variable $K\in \{0,\ldots, n\}$ that is distributed according to the ``curtailed'' binomial distribution
\bb
\pr \{ K = k\} = \left\{ \begin{array}{ll} \frac{1}{p_n\left(T_\xi^{X^n}\right)}\, \binom{n}{k} p(0)^k p(1)^{n-k} & \quad\text{if $\left|\frac{k}{n} - p(0)\right| \leq \xi$,} \\[1.5ex] 0 & \quad\text{otherwise}; \end{array} \right.
\ee

\item noting that $k \leq \floor*{n (p(0) + \xi)}$ and $n-k \leq \floor*{n(p(1) + \xi)}$, Alice and Bob discard $\floor*{n (p(0) + \xi)} - k$ pairs of subsystems $A_i B_i$ from those appearing in~\eqref{eq:systems_rho_0}, and also $\floor*{n(p(1) + \xi)} - n+k$ pairs of subsystems $A_iB_i$ from the remaining ones pertaining to the state $\rho_1^{\otimes \floor*{n\left(p(1) + \xi\right)}}$; note that taking partial traces over local subsystems is an operation in $\mathcal{O}$; after this, they are left with exactly $n$ pairs of subsystems, with Alice's local system written as $A^n$ and Bob's as $B^n$;

\item both Alice and Bob then permute their local systems $A^n$ and $B^n$ by applying the unitaries $U_\pi^{A^n}$ and $U_\pi^{B^n}$;

\item finally, they discard the classical system on which they have stored the values of $\pi$ and $k$, effectively ``forgetting'' them; since discarding system is in $\mathcal{O}$, so is this operation; observe that forgetting $\pi$ after step~(iii) results in the effective application of the symmetrization operator $\mathcal{S}_n$ defined by~\eqref{convexity_cost_proof_eq2}.

\end{enumerate}

By construction, $\Lambda_n$ makes use of local operations and shared randomness only. Remembering also the discussion in Remark~\ref{common_randomness_rem}, this proves that $\Lambda_n\in \mathcal{O}$. Also, upon acting on the input state $\rho_0^{\otimes \floor*{n\left(p(0) + \xi\right)}} \otimes \rho_1^{\otimes \floor*{n\left( p(1) + \xi \right)}}$, the channel $\Lambda_n$ clearly outputs the state $\omega_n$ defined by~\eqref{convexity_cost_proof_eq1}; it thus satisfies~\eqref{convexity_cost_proof_eq4}.

The above argument already suffices to conclude the proof. In fact, set $r_x \coloneqq E_{c,\,\mathcal{O}}(\rho_x) + \zeta$, where $x=0,1$, and $\zeta>0$ is a small parameter to be fixed later. Since $r_x$ is an achievable rate of entanglement dilution for $\rho_x$, we can consider sequences of operations $\EE_{x,m} \in \mathcal{O}$ such that
\bb
\tau_{x,m} \coloneqq \EE_{x,m} \Big(\Phi_2^{\otimes \floor*{r_x m}}\Big)\, ,\qquad \left\| \tau_{x,m} - \rho_x^{\otimes m} \right\|_1 \tends{}{m\to\infty} 0\, .
\label{convexity_cost_proof_eq5}
\ee
We then see that the state
\bb
\widebar{\omega}_n \coloneqq&\ \left(\Lambda_n \circ \big( \EE_{0,\floor*{n\left(p(0) + \xi\right)}} \otimes \EE_{1,\floor*{n\left(p(1) + \xi\right)}} \big)\right) \Big( \Phi_2^{\otimes \sum_x \floor*{r_x \floor*{n\left(p(x) + \xi\right)}}} \Big) \\
=&\ \left(\Lambda_n \circ \big( \EE_{0,\floor*{n\left(p(0) + \xi\right)}} \otimes \EE_{1,\floor*{n\left(p(1) + \xi\right)}} \big)\right) \Big( \Phi_2^{\otimes \floor*{r_0 \floor*{n\left(p(0) + \xi\right)}}} \otimes \Phi_2^{\otimes \floor*{r_1 \floor*{n\left(p(1) + \xi\right)}}} \Big) \\
=&\ \,\Lambda_n \left( \tau_{0,\, \floor*{n\left(p(0) + \xi\right)}} \otimes \tau_{1,\, \floor*{n\left(p(1) + \xi\right)}} \right) ,
\ee
satisfies that
\bb
\left\| \widebar{\omega}_n - \widebar{\rho}^{\,\otimes n} \right\|_1 &\leq \left\| \widebar{\omega}_n - \omega_n \right\|_1 + \left\| \omega_n - \widebar{\rho}^{\,\otimes n} \right\|_1 \\
&\leq \left\| \tau_{0,\, \floor*{n\left(p(0) + \xi\right)}} \otimes \tau_{1,\, \floor*{n\left(p(1) + \xi\right)}} - \rho_0^{\otimes \floor*{n\left(p(0) + \xi\right)}} \otimes \rho_1^{\otimes \floor*{n\left( p(1) + \xi \right)}} \right\|_1 + \left\| \omega_n - \widebar{\rho}^{\,\otimes n} \right\|_1 \\
&\leq \left\| \tau_{0,\, \floor*{n\left(p(0) + \xi\right)}} - \rho_0^{\otimes \floor*{n\left(p(0) + \xi\right)}} \right\|_1 + \left\| \tau_{1,\, \floor*{n\left(p(1) + \xi\right)}} - \rho_1^{\otimes \floor*{n\left( p(1) + \xi \right)}} \right\|_1 + \left\| \omega_n - \widebar{\rho}^{\,\otimes n} \right\|_1 \\
&\tendsn{} 0\, .
\label{convexity_cost_proof_eq6}
\ee
Here, the first and third lines are simple applications of the triangle inequality, while in the second line, we used the data processing inequality for the trace norm to eliminate the channel $\Lambda_n$. To take the limit on the last line, instead, we leveraged~\eqref{convexity_cost_proof_eq3} and~\eqref{convexity_cost_proof_eq5}.

Since $\mathcal{O}$ is closed under parallel use by Remark~\ref{closed_under_parallel_composition_rem}, it holds that $\EE_{0,m} \otimes \EE_{1,m^\prime} \in \mathcal{O}$ for all $m$ and $m^\prime$.
Thanks to the facts that we also have $\Lambda_n \in \mathcal{O}$ and that $\mathcal{O}$ is closed under composition by Remark~\ref{common_randomness_rem}, we deduce that $\Lambda_n \circ \big( \EE_{0,m} \otimes \EE_{1,m^\prime} \big)\in \mathcal{O}$.
Then,~\eqref{convexity_cost_proof_eq6} and the definition~\eqref{eq:entanglement_cost} of entanglement cost show that
\bb
E_{c,\,\mathcal{O}}\big(\widebar{\rho}\big) &\leq \limsup_{n\to\infty} \frac1n \sum_x \floor*{r_x \floor*{n\left(p(x) + \xi\right)}} \\
&= \sum_x r_x \left(p(x) + \xi\right) \\
&= \sum_x \left(p(x) + \xi\right) \left(E_{c,\,\mathcal{O}}(\rho_x) + \zeta\right) \\
&= \sum_x p(x)\, E_{c,\,\mathcal{O}}(\rho_x) + \zeta + \xi \sum_x E_{c,\,\mathcal{O}}(\rho_x) + 2 \zeta \xi\, .
\ee
Since we assumed that $E_{c,\,\mathcal{O}}(\rho_x) < \infty$ without loss of generality, we can now take $\xi$ and $\zeta$ small enough so that the right-hand side of the above expression is no larger than $\eta$, thereby concluding the proof.
\end{proof}

\subsection{Achievability of entanglement cost for infinite-dimensional pure states} \label{subsec:pure_states}

We are now ready to address the first of the issues listed in 
Sec.~\ref{subsec:what_goes_wrong}. To do this, we have to design an entanglement dilution protocol that can prepare any (finite- or infinite-dimensional) pure state $\psi_{AB}$ with finite local quantum entropy $S(\psi_A)<\infty$ consuming about $S(\psi_A)$ ebits per output copy. The main idea of this protocol is the same as in the finite-dimensional case: we simply prepare the locally typical part of the state. The only difference is that we have to use the notion of weak typicality for countably infinite alphabets instead of finite ones. A similar approach has appeared before in Ref.~\cite[proof of Proposition~3]{Wilde2018}.

\begin{lemma}[(Achievability of entanglement cost for pure states)] \label{lemma:approximate_bipartite}
Let $A$ and $B$ be quantum systems represented by separable Hilbert spaces. Let $\psi_{AB} = \ketbra{\psi}_{AB}$ be a bipartite pure state on $AB$ with finite entanglement entropy $S(\psi_A) = S(\psi_B) < \infty$. Then, the entanglement entropy is an achievable rate for entanglement dilution to $\psi_{AB}$ under one-way LOCC, i.e.
\bb
E_{c,\, \locc_\to}(\psi_{AB}) \leq S(\psi_A) = S(\psi_B)\, .
\ee
\end{lemma}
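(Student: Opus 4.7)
The plan is to mimic the finite-dimensional Schumacher-type dilution protocol for pure states: approximate $\psi^{\otimes n}$ by its projection onto the locally typical subspace of a Schmidt decomposition, and then prepare the resulting low-Schmidt-rank pure state by ordinary qudit teleportation assisted by $\approx nS(\psi_A)$ ebits. Relative to the finite-dimensional case the only change is that the Schmidt spectrum may be supported on a countably infinite alphabet, so one must invoke weak typicality (which only requires $H(p)<\infty$) rather than strong typicality (which requires a finite alphabet).

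More concretely, I would first fix a Schmidt decomposition $\ket{\psi}_{AB} = \sum_{x\in\mathcal{X}} \sqrt{p(x)}\,\ket{x}_A\otimes\ket{x}_B$, with $\mathcal{X}$ at most countable and $H(p) = S(\psi_A)<\infty$. For every $\delta>0$ and every positive integer $n$, let $W^{X^n}_\delta$ be the associated weakly typical set and set $P_n \coloneqq \sum_{x^n\in W^{X^n}_\delta} \ketbra{x^n}$ on each side; by~\eqref{exp_smaller_weakly_typical_set}, $\operatorname{rank} P_n \leq 2^{n(S(\psi_A)+\delta)}$. For $n$ large enough that $p_n(W^{X^n}_\delta)>0$, define
\[
\ket{\tilde\psi_n} \coloneqq \frac{1}{\sqrt{p_n(W^{X^n}_\delta)}}\,(P_n\otimes P_n)\ket{\psi}^{\otimes n} = \frac{1}{\sqrt{p_n(W^{X^n}_\delta)}}\sum_{x^n\in W^{X^n}_\delta}\sqrt{p_n(x^n)}\,\ket{x^n}_{A^n}\otimes\ket{x^n}_{B^n}.
\]
This is a bipartite pure state of Schmidt rank at most $|W^{X^n}_\delta|\leq 2^{n(S(\psi_A)+\delta)}$, whose squared overlap with $\ket{\psi}^{\otimes n}$ equals $p_n(W^{X^n}_\delta)$; by~\eqref{unit_prob_weakly_typical_set} and the Fuchs--van~de~Graaf equality for pure states, $\tilde\psi_n$ converges to $\psi^{\otimes n}$ in trace norm.

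The one-way LOCC dilution protocol is then the expected one. Alice locally prepares $\ket{\tilde\psi_n}$ in her lab, using $A^n$ together with an auxiliary register $B'^n$ destined for Bob. Since $\ket{\tilde\psi_n}$ has Schmidt rank at most $D_n \coloneqq 2^{\lceil n(S(\psi_A)+\delta)\rceil}$, the reduced state on $B'^n$ is supported on a finite-dimensional subspace of dimension $\leq D_n$, and Alice teleports it to Bob using the standard qudit-teleportation protocol based on a $D_n$-dimensional maximally entangled state, consuming $\lceil n(S(\psi_A)+\delta)\rceil$ ebits and only one-way classical communication from Alice to Bob. The resulting shared state on $A^nB^n$ is exactly $\tilde\psi_n$, which approximates $\psi^{\otimes n}$ in trace norm as $n\to\infty$. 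The achieved rate is $\lceil n(S(\psi_A)+\delta)\rceil/n \to S(\psi_A)+\delta$; letting $\delta\downarrow 0$ gives $E_{c,\,\locc_\to}(\psi_{AB})\leq S(\psi_A)$.

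The main obstacle relative to finite dimensions is the unavailability of strong typicality, but this is essentially cosmetic: weak typicality remains effective when the Schmidt alphabet is countably infinite, provided $H(p)<\infty$. Consequently, at every finite $n$ the state we actually teleport lives in a finite-dimensional subspace, so the teleportation phase reduces to a standard finite-dimensional one-way LOCC routine and no genuinely infinite-dimensional protocol needs to be engineered at this stage.
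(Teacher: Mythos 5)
Your proposal is correct and follows essentially the same route as the paper's proof: the same Schmidt decomposition, the same weakly typical truncation $\ket{\tilde\psi_n}$ (identical to the paper's $\ket{\psi^{(n)}}$), the same rank bound from~\eqref{exp_smaller_weakly_typical_set}, and the same teleportation-based one-way LOCC preparation consuming $\ceil{n(S(\psi_A)+\delta)}$ ebits. The observation that weak typicality substitutes for strong typicality on a countable Schmidt alphabet with $H(p)<\infty$ is exactly the point the paper makes.
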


\begin{proof}
Let the Schmidt decomposition of $\ket{\psi}_{AB}$ be\footnote{The Schmidt decomposition relies exclusively on the singular value decomposition for compact operators applied to the matrix of coefficients of $\ket{\psi}_{AB}$ with respect to a product of orthonormal bases. This works in all separable Hilbert spaces, finite- or infinite-dimensional.}
\bb
\ket{\psi}_{AB}=\sum_{x=0}^\infty \sqrt{p\qty(x)}\, \ket{e_x}_A\otimes\ket{f_x}_B\, ,
\ee
where $\left\{\ket{e_x}_A\right\}_x$ and $\left\{\ket{f_x}_B\right\}_x$ are local orthonormal bases, and $p$ is a probability distribution. Let $X$ denote the associated random variable. Observe that
\bb
H(X) = -\sum_{x=0}^\infty p(x) \log_2 p(x) = S(\psi_A) < \infty\, .
\ee
For an arbitrary $\delta>0$ and any positive integer $n$, set
\bb
\ket{\psi^{(n)}}_{A^nB^n} \coloneqq \frac{1}{\sqrt{p_n\big(W_\delta^{X^n}\big)}} \sum_{x^n \in W_\delta^{X^n}} \sqrt{p_n(x^n)}\, \ket{e_{x^n}}_{A^n} \otimes\ket{f_{x^n}}_{B^n}\, ,
\ee
where for a sequence $x^n = x_1\ldots x_n$, we set
\bb
\ket{e_{x^n}}_{A^n} \coloneqq \bigotimes_{i=1}^n \ket{e_{x_i}}_{A_i}\, ,\qquad \ket{f_{x^n}}_{B^n} \coloneqq \bigotimes_{i=1}^n \ket{f_{x_i}}_{B_i}\, .
\ee
By~\eqref{unit_prob_weakly_typical_set}, we have that $\braket{\psi^{(n)}}{\psi^{\otimes n}} = \sqrt{p_n\big(W_\delta^{X^n}\big)} \tendsn{} 1$, implying that also
\bb
\frac12 \left\|\psi^{(n)} - \psi^{\otimes n}\right\|_1 = \sqrt{1- \left|\braket{\psi^{(n)}}{\psi^{\otimes n}}\right|^2} = \sqrt{1- p_n\big(W_\delta^{X^n}\big)} \tendsn{} 0\, .
\label{approximation_pure}
\ee

Now, there exists a simple one-way LOCC protocol that prepares any pure state by consuming $\ceil{\log_2 r}$ ebits, where $r$ is the local rank, i.e., the rank of the reduced state. This protocol, as described in the finite-dimensional case by Ref.~\cite{Bennett-distillation}, consists of teleporting one share of the pure state from Alice to Bob (this is the direction of classical communication in $\locc_\to$). In our case, the local rank of $\psi^{(n)}$ satisfies that
\bb
\rk \left(\psi^{(n)}_{A^n}\right) = \left| W_\delta^{X^n} \right| \leq 2^{n\left(H(X) + \delta\right)} = 2^{n\left(S(\psi_A) + \delta\right)} ,
\ee
where we made use of~\eqref{exp_smaller_weakly_typical_set}. Hence, we can prepare $\psi^{(n)}$ by consuming at most $\ceil{n\left(S(\psi_A) + \delta\right)}$ ebits. Due to~\eqref{approximation_pure}, we therefore see that $S(\psi_A) + \delta$ is an achievable rate of entanglement dilution, i.e., $E_{c,\,\locc_\to}(\psi_{AB})\leq S(\psi_A) + \delta$. Since $\delta>0$ was arbitrary, this shows that $E_{c,\,\locc_\to}(\psi_{AB})\leq S(\psi_A)$, concluding the proof.
\end{proof}

Before moving on, it is instructive to present a simple corollary of the analysis of entanglement dilution for pure states in combination with the convexity of the entanglement cost. This corollary provides a bound on the entanglement cost for finite-rank infinite-dimensional mixed states, but it still does not suffice to fully characterize the entanglement cost of general infinite-dimensional mixed states.

\begin{cor}[(Achievable rate of entanglement dilution for finite-rank mixed states)] \label{cor:cost_finite_rank}
Let $A$ and $B$ be quantum systems represented by separable Hilbert spaces. For some integer $N\in \N$, let 
\bb
\rho_{AB} = \sum_{x=0}^N p(x)\, \psi_{x}^{AB}
\ee
be a finite-rank bipartite state. Here, $p$ is a probability distribution over $\{0,\ldots, N\}$, and for each $x$ $\psi_x^{AB} = \ketbra{\psi_x}_{AB}$ is a pure state. Then
\bb
E_{c,\,\locc_\to}(\rho_{AB}) \leq \sum_{x=0}^N p(x)\, S\big(\psi_{x}^{A}\big)\, .
\ee
\end{cor}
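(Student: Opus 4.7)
The plan is to combine two previously established results: the convexity of the entanglement cost (Lemma \ref{lemma:convexity_cost}) and the achievability of the entanglement entropy as a dilution rate for pure states under one-way LOCC (Lemma \ref{lemma:approximate_bipartite}). Since the statement concerns a \emph{finite} convex combination of pure states, the finiteness assumption on the alphabet $\XX$ in Lemma \ref{lemma:convexity_cost} is satisfied directly with $\XX = \{0,1,\ldots,N\}$, so no extension is needed and the two lemmas slot together cleanly.

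First, I would invoke Lemma \ref{lemma:convexity_cost} with the class of free operations $\mathcal{O} = \locc_\to$ and the collection of bipartite states $\big(\psi_x^{AB}\big)_{x=0}^N$ distributed according to $p$, to obtain
\bb
E_{c,\,\locc_\to}\big(\rho_{AB}\big) \;=\; E_{c,\,\locc_\to}\!\left( \sum_{x=0}^N p(x)\, \psi_x^{AB} \right) \;\leq\; \sum_{x=0}^N p(x)\, E_{c,\,\locc_\to}\big(\psi_x^{AB}\big)\, .
\ee
Note that $\locc_\to$ is one of the classes appearing in~\eqref{eq:classes}, so Lemma \ref{lemma:convexity_cost} applies verbatim.

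Second, for each $x \in \{0,\ldots,N\}$ I would apply Lemma \ref{lemma:approximate_bipartite} to the pure state $\psi_x^{AB}$. This requires that $S(\psi_x^A) < \infty$; if any term has $S(\psi_x^A) = +\infty$, the claimed inequality is trivially true, so we may assume finiteness without loss of generality. Lemma \ref{lemma:approximate_bipartite} then yields $E_{c,\,\locc_\to}(\psi_x^{AB}) \leq S(\psi_x^A)$ for every $x$. Substituting this bound into the previous display finishes the proof.

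There is essentially no obstacle here beyond a careful bookkeeping of the hypotheses of the two input lemmas; the only subtle point worth flagging is that the argument relies crucially on $N$ being finite (so that Lemma \ref{lemma:convexity_cost} is directly applicable), and that the weak-typicality-based construction of Lemma \ref{lemma:approximate_bipartite} is what makes the pure-state bound valid even when the individual $\psi_x^{AB}$ have infinite Schmidt rank. Extending the same bound to a \emph{countably infinite} mixture would be genuinely harder and is precisely the issue labeled~(2) in Sec.~\ref{subsec:what_goes_wrong}, to be dealt with separately in Sec.~\ref{subsec:general_dilution}.
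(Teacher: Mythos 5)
Your proposal is correct and is exactly the paper's argument: the paper's proof of this corollary is a one-line combination of Lemma~\ref{lemma:convexity_cost} (with $\mathcal{O}=\locc_\to$) and Lemma~\ref{lemma:approximate_bipartite}, just as you describe. Your extra remarks on handling the case $S(\psi_x^A)=\infty$ and on why finiteness of $N$ matters are sensible bookkeeping but do not change the route.
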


\begin{proof}
The conclusion follows by combining Lemmas~\ref{lemma:convexity_cost} and~\ref{lemma:approximate_bipartite}.
\end{proof}

\subsection{Achievability of entanglement cost for infinite-dimensional mixed states} 
 \label{subsec:general_dilution}

We now show how to construct a protocol that performs entanglement dilution on an arbitrary mixed state with finite local quantum entropy at a rate equal to the regularized entanglement of formation. We will also see later in Sec.~\ref{sec:converse} that such a rate is, in fact, optimal. To arrive at the result, we have to solve issue~(2) listed in Sec.~\ref{subsec:what_goes_wrong}.
As we have seen with Corollary~\ref{cor:cost_finite_rank}, a mere combination of the result on pure states with the convexity of entanglement cost is insufficient for addressing this issue.
One of our main original contributions is precisely an idea to address this issue, which is based on dividing the state into a ``common'' and a ``rare'' part, applying the standard entanglement dilution protocol to the former, and a different ``wasteful'' protocol to the latter. The standard protocol is based on the subroutine presented in Sec.~\ref{subsec:pure_states}, while the wasteful protocol will be explained below. See Remark~\ref{explicit_protocol_rem} below for more details about how these sub-protocols constitute the overall protocol for the entanglement dilution of infinite-dimensional mixed states.

\begin{prop}[(Achievability of entanglement cost for mixed states)] \label{prop:achievabllity}
Let $A$ and $B$ be quantum systems represented by separable Hilbert spaces. For any bipartite state $\rho_{AB}\in\mathcal{D}\qty(\HH_A \otimes \HH_B)$ with finite local quantum entropy $\min\qty{S\qty(\rho_{A}),S\qty(\rho_{B})}<\infty$ and for all classes of operations $\mathcal{O}$ in~\eqref{eq:classes}, it holds that
\bb
E_{c,\,\mathcal{O}}\qty(\rho)\leq E_f^\infty\qty(\rho)\, ,
\label{regularized_EoF_achievable}
\ee
where $E_f^\infty$ is the regularized entanglement of formation given by~\eqref{regularized_eof}.
\end{prop}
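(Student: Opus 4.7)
The plan is to establish the per-copy bound $E_{c,\,\locc_\to}(\tau) \leq E_f(\tau)$ for every bipartite state $\tau$ with $\min\qty{S(\tau_A), S(\tau_B)} < \infty$; the proposition then follows by applying this bound to $\tau = \rho^{\otimes n}$ (which has finite local entropy $n\min\qty{S(\rho_A), S(\rho_B)}$), invoking the additivity from Lemma~\ref{lemma:additivity}, the inclusion $\locc_\to \subseteq \mathcal{O}$ for every $\mathcal{O}$ in~\eqref{eq:classes}, and the definition~\eqref{regularized_eof} of $E_f^\infty$ as $n \to \infty$.

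To prove the per-copy bound, fix $\epsilon > 0$ and pick a countable pure-state decomposition $\tau = \sum_{x=1}^\infty p(x)\, \psi_x$ with $\sum_x p(x) S(\psi_x^A) \leq E_f(\tau) + \epsilon$, which exists because $E_f(\tau) \leq \min\qty{S(\tau_A), S(\tau_B)} < \infty$. Set $P_N \coloneqq \sum_{x\leq N} p(x)$ and $Q_N \coloneqq 1 - P_N$, and form the normalized ``common'' and ``rare'' parts
\begin{equation*}
\tau_N \coloneqq \frac{1}{P_N} \sum_{x \leq N} p(x)\, \psi_x, \qquad \sigma_N \coloneqq \frac{1}{Q_N} \sum_{x > N} p(x)\, \psi_x,
\end{equation*}
so that $\tau = P_N \tau_N + Q_N \sigma_N$ with $P_N \to 1$ and $Q_N \to 0$ as $N \to \infty$. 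Because $\tau_N$ is a finite convex combination of pure states, Corollary~\ref{cor:cost_finite_rank} gives $P_N E_{c,\,\locc_\to}(\tau_N) \leq \sum_{x \leq N} p(x) S(\psi_x^A)$, and convexity of the entanglement cost (Lemma~\ref{lemma:convexity_cost}) yields
\begin{equation*}
E_{c,\,\locc_\to}(\tau) \leq \sum_{x \leq N} p(x) S(\psi_x^A) + Q_N E_{c,\,\locc_\to}(\sigma_N).
\end{equation*}

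For the rare part $\sigma_N$ I use a wasteful purification-and-teleportation protocol. Assume without loss of generality that $S(\tau_A) < \infty$ (the case $S(\tau_B) < \infty$ is handled symmetrically by placing the purifying register on Alice's side). Introducing a purifying register $R$ on Bob's side produces a pure state $\ket{\Phi}$ of $\sigma_N$ bipartitioned as $A : BR$, with local entropy $S(\Phi_A) = S(\sigma_N^A) \leq S(\tau_A)/Q_N < \infty$. Lemma~\ref{lemma:approximate_bipartite} then provides a one-way LOCC dilution protocol for $\ket{\Phi}$ at rate $S(\sigma_N^A)$, and since tracing out $R$ at Bob is a free local operation we obtain $E_{c,\,\locc_\to}(\sigma_N) \leq S(\sigma_N^A)$.

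The main obstacle is to show that this wasteful contribution vanishes asymptotically, i.e.\ $Q_N S(\sigma_N^A) \to 0$ as $N \to \infty$. Concavity of the von Neumann entropy applied to $\tau_A = P_N \tau_N^A + Q_N \sigma_N^A$ gives both $P_N S(\tau_N^A) \leq S(\tau_A)$ and $Q_N S(\sigma_N^A) \leq S(\tau_A) - P_N S(\tau_N^A)$. Now $\|\tau_N - \tau\|_1 \leq 2 Q_N \to 0$, so $\tau_N^A \to \tau_A$ in trace norm, and by lower semicontinuity of the entropy $\liminf_N S(\tau_N^A) \geq S(\tau_A)$; combined with $P_N \to 1$ and the pointwise bound $P_N S(\tau_N^A) \leq S(\tau_A)$ this forces $\lim_N P_N S(\tau_N^A) = S(\tau_A)$, hence $Q_N S(\sigma_N^A) \to 0$. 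Letting $N \to \infty$ in the displayed upper bound and then $\epsilon \to 0$ yields $E_{c,\,\locc_\to}(\tau) \leq E_f(\tau)$, completing the argument.
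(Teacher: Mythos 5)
Your proposal is correct and follows essentially the same route as the paper's proof: reduce to the single-copy bound $E_{c}\leq E_f$ via additivity, split the state into a finite-rank ``common'' part handled by Corollary~\ref{cor:cost_finite_rank} plus convexity and a ``rare'' part handled by purification and Lemma~\ref{lemma:approximate_bipartite}, and kill the rare contribution using concavity together with lower semicontinuity of the entropy. The only cosmetic difference is that you fix a near-optimal decomposition up to $\epsilon$ at the outset, whereas the paper runs the argument for an arbitrary decomposition and optimizes at the end.
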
 

\begin{proof}
The first observation is that, precisely as in the finite-dimensional case, it suffices to prove that 
\bb
E_{c,\,\mathcal{O}}\qty(\rho)\leq E_f\qty(\rho)
\label{EoF_achievable}
\ee
(without regularization), due to the additivity of the entanglement cost shown in Lemma~\ref{lemma:additivity}. 

We now set out to prove~\eqref{EoF_achievable}. To this end, start by observing that, due to the finiteness of the local entropy, it follows from~\eqref{eq:min_S_A_B} that $E_f(\rho) = \widetilde{E}_f(\rho)$; thus, it suffices to consider discrete decompositions of $\rho$ into pure states. Let
\bb
\rho = \rho_{AB} = \sum_{x=0}^\infty p(x)\, \psi_x^{AB}\, .
\ee
be an arbitrary such decomposition. 
Without loss of generality, up to exchanging the roles of $A$ and $B$, here we assume that $S(\rho_A) <\infty$. 
For an arbitrary integer $N\in \N_+$, set 
\bb
\delta_N \coloneqq \sum_{x=N+1}^\infty p(x)\, ,
\ee
and construct the states
\bb
\rho_N = \rho_N^{AB} \coloneqq \frac{1}{1 - \delta_N} \sum_{x=0}^N p(x)\, \psi_x^{AB}\, , \qquad \omega_N = \omega_N^{AB} \coloneqq \frac{1}{\delta_N} \sum_{x=N+1}^\infty p(x)\, \psi_x^{AB}\, ,
\ee
so that 
\bb
\rho = \left(1 - \delta_N\right) \rho_N + \delta_N \omega_N\, .
\label{achievabllity_proof_eq2}
\ee
Here, $\rho_N$ can be thought of as the ``common'' part of $\rho$, while $\omega_N$ represents its ``rare'' part. Since $\sum_{x=0}^\infty p(x) = 1$ converges, we have that 
\bb
\delta_N \tends{}{N\to\infty} 0\, , 
\label{achievabllity_proof_eq2.5}
\ee
and thus
\bb
\left\| \rho - \rho_N \right\|_1 = \delta_N \left\|\rho_N - \omega_N\right\|_1 \leq 2 \delta_N \tends{}{N\to\infty} 0\, .
\ee
Naturally, the reduced states on the system $A$ also satisfy that 
\bb
\left\| \rho_A - \rho_N^A \right\|_1 \leq \left\| \rho - \rho_N \right\|_1 \tends{}{N\to\infty} 0\, ,
\ee
and by the lower semi-continuity of the quantum entropy\footnote{
Indeed, by applying the dominated convergence theorem for entropy, specifically, Ref.~\cite[Theorem~A3]{10.1063/1.1666274} with the substitution $n=N$, $A_n=\rho_N^A$, $A=\rho_A$, and $B=\rho_A$ from its notation to ours, one can obtain a stronger relation than~\eqref{achievabllity_proof_eq3}: $\lim_{N\to\infty}S\qty(\rho_N^A)=S\qty(\rho_A)$.
This application is justified by $S\qty(\rho_A)<\infty$, $\left\| \rho - \rho_N \right\|_1\tends{}{N\to\infty}0$, and $\rho_N^A=\qty(\rho_A-\delta_N\omega_N^A)/\qty(1-\delta_N)\leq \rho_A$, which together satisfy the assumptions of Ref.~\cite[Theorem~A3]{10.1063/1.1666274}. We thank an anonymous reviewer for bringing this refinement to our attention. Nonetheless, the weaker bound given in~\eqref{achievabllity_proof_eq3} suffices for our purposes.
}
\bb
\liminf_{N\to\infty} S\big(\rho_N^A\big) \geq S(\rho_A)\, .
\label{achievabllity_proof_eq3}
\ee
But by the concavity of the quantum entropy, it also holds that
\bb
S(\rho_A) \geq \left(1 - \delta_N\right) S\big(\rho_N^A\big) + \delta_N\, S\big(\omega_N^A\big)\, ,
\ee
entailing that 
\bb
S(\rho_A) &\geq \limsup_{N\to\infty} \left(\left(1 - \delta_N\right) S\big(\rho_N^A\big) + \delta_N\, S\big(\omega_N^A\big)\right) \\
&\geq \liminf_{N\to\infty} \left(1 - \delta_N\right) S\big(\rho_N^A\big) + \limsup_{N\to\infty} \delta_N\, S\big(\omega_N^A\big) \\
&= \liminf_{N\to\infty} S\big(\rho_N^A\big) + \limsup_{N\to\infty} \delta_N\, S\big(\omega_N^A\big)\, .
\label{achievabllity_proof_eq4}
\ee
Putting~\eqref{achievabllity_proof_eq3} and~\eqref{achievabllity_proof_eq4} together shows that
\bb
\delta_N S\big(\omega_N^A\big) \tends{}{N\to\infty} 0\, .
\label{achievabllity_proof_eq5}
\ee
This key fact will be used later for bounding the cost of the ``wasteful'' part of our protocol.

We now come to the analysis of the entanglement cost of $\rho$. Applying first Lemma~\ref{lemma:convexity_cost} to $\rho$ decomposed as in~\eqref{achievabllity_proof_eq2}, and then Corollary~\ref{cor:cost_finite_rank} to $\rho_N$, we deduce that
\bb
E_{c,\,\mathcal{O}}(\rho) &\leq \left(1 - \delta_N\right) E_{c,\,\mathcal{O}}(\rho_N) + \delta_N E_{c,\,\mathcal{O}}(\omega_N) \\
&= \left(1 - \delta_N\right) \sum_{x=0}^N p(x)\, S\big(\psi_x^A\big) + \delta_N E_{c,\,\mathcal{O}}(\omega_N)\, .
\label{achievabllity_proof_eq6}
\ee
To proceed we need to find a suitable upper bound on $E_{c,\,\mathcal{O}}(\omega_N)$. Here is where our ``wasteful'' protocol comes into play. Consider a purification $\phi_N^{ABE}$ of $\omega_N = \omega_N^{AB}$. Since tracing away the system $E$ is a local operation, we can imagine to upper bound the cost of $\omega_N$ by instead aiming to prepare $\phi_N^{A:BE}$ (the purifying system $E$ is assigned to Bob) and then tracing out $E$. More formally, applying~\eqref{data_processing_cost} to the local operation $\Tr_E \in \mathcal{O}_{A:BE\to A:B}$, we see that
\bb
E_{c,\,\mathcal{O}}(\omega_N) \leq E_{c,\,\mathcal{O}}\Big(\phi_N^{A:BE}\Big) \leq S\big(\phi_N^A\big) = S\big(\omega_N^A\big)\, ,
\ee
where the second inequality descends from Lemma~\ref{lemma:approximate_bipartite}. Plugging this into~\eqref{achievabllity_proof_eq6} yields
\bb
E_{c,\,\mathcal{O}}(\rho) &\leq \left(1 - \delta_N\right) \sum_{x=0}^N p(x)\, S\big(\psi_x^A\big) + \delta_N S\big(\omega_N^A\big)\, .
\ee
We can now take the limit $N\to\infty$. Due to~\eqref{achievabllity_proof_eq2.5} and~\eqref{achievabllity_proof_eq5}, we obtain that
\bb
E_{c,\,\mathcal{O}}(\rho) &\leq \sum_{x=0}^\infty p(x)\, S\big(\psi_x^A\big)\, .
\ee
Since the decomposition of $\rho$ into the pure states is arbitrary, we can now optimize over it. Due to~\eqref{eq:min_S_A_B}, doing so yields precisely~\eqref{EoF_achievable} and concludes the proof.
\end{proof}

\begin{rem} \label{explicit_protocol_rem}
It is instructive to pause for a second and consider what entanglement dilution protocol we have implicitly described in the proof of the above Proposition~\ref{prop:achievabllity}. Essentially, in~\eqref{achievabllity_proof_eq2}, we divided the state $\rho$ into a ``common'' part $\rho_N$ and a ``rare'' part $\omega_N$. Expanding $\rho^{\otimes n}$ therefore produces a sum of strings of tensor products of $\rho_N$ and $\omega_N$. Each string contains a majority of states $\rho_N$, and a small fraction of states $\omega_N$. The procedure we have implicitly outlined involves preparing these two states separately: $\rho_N$, by using its pure state decomposition, and $\omega_N$, by first obtaining its purification and then tracing away the purifying system. The crux of the analysis is to show that this suboptimal, ``wasteful'' protocol, however, does not impact the rate significantly because relatively few copies of $\omega_N$ need to be prepared in the asymptotic limit. At the same time, since the decomposition of $\rho_N$ itself has only finitely many pure states, we have successfully bypassed issue~(2) listed in Sec.~\ref{subsec:what_goes_wrong} on the use of the strong typicality.
\end{rem}

\section{Converse part: Optimality among all entanglement dilution protocols under separable operations} \label{sec:converse}

In this section, we show that no protocol for entanglement dilution, even under separable operations, can achieve a rate that is lower than the regularized entanglement of formation in~\eqref{regularized_eof}. In this way, we will establish the converse part of Theorem~\ref{cost_thm} and, therefore, conclude the proof of our main result.
The main original contribution of this section is an argument for the converse part based on two results. One resolves issue~(3) presented in Sec.~\ref{subsec:what_goes_wrong} by identifying an appropriate replacement for asymptotic continuity in the infinite-dimensional case, i.e., the semi-continuity bound for the entanglement of formation recently introduced in Ref.~\cite{Shirokov2022}. The other is the resolution of issue~(4) in Sec.~\ref{subsec:what_goes_wrong}, i.e., the generalization of the proof of the monotonicity of the entanglement of formation to infinite-dimensional separable operations. We present the former in Sec.~\ref{subsec:one_sided_continuity_EoF} and the latter in Sec.~\ref{subsec:monotonicity_EoF_SEP}. Using these results, we finally prove the converse part in Sec.~\ref{subsec:proof_of_converse}.

\subsection{Semi-continuity of the entanglement of formation} \label{subsec:one_sided_continuity_EoF}

To discuss the continuity of entropic quantities defined for infinite-dimensional quantum systems, a conventional technique is to impose an energy constraint on the states. To this end, one needs to introduce a Hamiltonian, represented by a (possibly unbounded) self-adjoint operator $H$ on the underlying Hilbert space $\mathcal{H}$. For physical reasons, and up to redefining the ground state energy, we will assume that $H$ is positive semidefinite, meaning that $\bra{\psi} H \ket{\psi} \geq 0$ for all $\ket{\psi}\in \dom\qty(H)$, where $\dom(X)$ denotes the domain of a self-adjoint operator $X$. The average energy with respect to $H$ of a state $\rho$ on $\mathcal{H}$ with spectral decomposition $\rho = \sum_{x=0}^\infty p_x \ketbra{\psi_x}$ can then be defined as
\bb
\Tr \rho H \coloneqq \left\{ \begin{array}{ll} \sum_{x=0}^\infty p_x \left\| H^{1/2} \ket{\psi_x} \right\|^2 & \quad \text{if $\ket{\psi_x}\in \dom\qty(H^{1/2})$ whenever $p_x>0$,} \\[1.5ex] \infty &\quad \text{otherwise.} \end{array} \right.
\ee
Note that the above infinite sum contains only non-negative terms and is therefore always well defined, although possibly infinite. Also, a subtle point that is important to take into consideration is that $\Tr \rho H$ may be well defined and finite even if $\supp(\rho)\not\subseteq \dom\qty(H)$ (and thus even if $\rho H$ is not a trace class operator itself).\footnote{Consider, for example, the case where $\rho = \ketbra{\psi}$ with $\ket{\psi} \propto \sum_{n=1}^\infty \frac{1}{n^{3/2}} \ket{n}$, while $H = \sum_{n=1}^\infty n \ketbra{n}$.} 

With that in mind, we can consider a state $\rho$ whose average energy with respect to a Hamiltonian $H$ is bounded by some constant $E\geq 0$, in the sense that
\bb \label{eq:energy_constraint}
\Tr\qty[\rho H]\leq E.
\ee
Intuitively, for such an energy constraint to work as expected, we also need to avoid pathological Hamiltonians $H$. A mathematically convenient condition to avoid such a pathological behavior is the so-called Gibbs hypothesis, which guarantees that $H$ is associated with a well-defined thermal state at all temperatures.

\begin{Def}[(Gibbs hypothesis)] \label{Gibbs_hypothesis_def}
A self-adjoint (possibly unbounded) operator $H$ on a separable Hilbert space $\mathcal{H}$ is called a \deff{grounded Hamiltonian} if the minimum of its spectrum is $0$. We say that $H$ satisfies the \deff{Gibbs hypothesis} if
\bb
\Tr\qty[e^{-\beta H}] < \infty \qquad \forall\ \beta > 0\, ,
\label{Gibbs_hypothesis}
\ee
i.e., if $e^{-\beta H}$ is a trace-class operator for all $\beta>0$.
\end{Def}

For a grounded Hamiltonian that satisfies the Gibbs hypothesis, the Gibbs state
\bb
\gamma_{H,\,\beta} \coloneqq \frac{e^{-\beta H}}{\Tr e^{-\beta H}}
\ee
is well defined for all inverse temperatures $\beta>0$; hence, it also has well-defined energy
\bb
\Tr\qty[\gamma_{H,\,\beta}H]= E\, . 
\ee
In fact, it is most common to use the above equation to re-parametrize $\beta$ as a function of $E$. In this way, we obtain the function $\beta_H:[0,\infty) \to (0,\infty]$, with $\beta_H(0) = \infty$ and $\lim_{E\to\infty} \beta_H(E) = 0$. A key role in the theory is played by the function
\bb
F_H(E) \coloneqq S\big(\gamma_{H,\,\beta_H(E)}\big)\, ,
\label{FH}
\ee
which records the entropy of the Gibbs state of a given energy. If $H$ is grounded and satisfies the Gibbs hypothesis, $F_H$ is well-defined for all $E\geq 0$ and enjoys a wealth of nice properties, some of which have been proved in Ref.~\cite[Proposition~1]{Shirokov-1}. We are only going to need the following.

\begin{lemma}[{(Properties of the function $F_H$~\cite[Proposition~1]{Shirokov-1})}] \label{F_H_o_of_E_lemma}
Let $H$ be a grounded Hamiltonian that satisfies the Gibbs hypothesis. Then, the function $F_H:[0,\infty) \to [0,\infty)$ defined by~\eqref{FH} is finite on the whole half-line $[0,\infty)$, and moreover
\bb
F_H(E) = o(E)\qquad (E\to\infty)\, ,
\label{F_H_o_of_E}
\ee
i.e., $\lim_{E\to\infty} \frac{F_H(E)}{E} = 0$.
\end{lemma}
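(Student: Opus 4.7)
The plan is to reduce everything to the explicit identity
\[
S(\gamma_{H,\beta}) = \beta\,\Tr[H\gamma_{H,\beta}] + \log Z(\beta), \qquad Z(\beta) \coloneqq \Tr e^{-\beta H},
\]
which comes from computing $-\Tr[\gamma_{H,\beta}\log\gamma_{H,\beta}]$ via $\log\gamma_{H,\beta} = -\beta H - \log Z(\beta)\,\mathds{1}$ in the discrete eigenbasis of $H$. Evaluated at $\beta = \beta_H(E)$, this reads $F_H(E) = \beta_H(E)\,E + \log Z(\beta_H(E))$.

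First I would handle finiteness. The Gibbs hypothesis forces $H$ to have purely discrete spectrum accumulating only at $+\infty$, and $Z(\beta) < \infty$ for every $\beta > 0$. In particular $\ker H$ is finite-dimensional, since $\dim\ker H \le Z(\beta)$ for every $\beta > 0$; at $E = 0$ the Gibbs state is the normalized projector onto $\ker H$, so $F_H(0) = \log\dim\ker H < \infty$. For $E > 0$, both terms in the identity above are finite, giving $F_H(E) < \infty$. Along the way I would record that $E(\beta) \coloneqq \Tr[H\gamma_{H,\beta}] = -\partial_\beta \log Z(\beta)$ is continuous and strictly decreasing by the strict convexity of $\log Z$, with $E(\beta) \to 0$ as $\beta \to \infty$ and $E(\beta) \to \infty$ as $\beta \to 0^+$; this ensures that $\beta_H = E^{-1}$ is a well-defined continuous bijection from $[0,\infty)$ onto $(0,\infty]$, matching the setup preceding the lemma.

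The $o(E)$ bound is the heart of the argument, and I would extract it from the Gibbs variational (Klein) inequality. For any density operator $\sigma$ and any auxiliary $\beta > 0$, the non-negativity of the relative entropy $D(\sigma \,\|\, \gamma_{H,\beta}) = -S(\sigma) + \beta\,\Tr[\sigma H] + \log Z(\beta)$ yields
\[
S(\sigma) \le \beta\,\Tr[\sigma H] + \log Z(\beta).
\]
Applying this to $\sigma = \gamma_{H,\beta_H(E)}$, whose mean energy is exactly $E$, gives the key estimate
\[
F_H(E) \le \beta\,E + \log Z(\beta) \qquad \text{for every } \beta > 0.
\]
Dividing by $E$ and sending $E \to \infty$ at fixed $\beta$, the term $\log Z(\beta)/E$ vanishes, so $\limsup_{E\to\infty} F_H(E)/E \le \beta$. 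Since $\beta > 0$ was arbitrary and $F_H \ge 0$, this forces $\lim_{E\to\infty} F_H(E)/E = 0$, i.e., $F_H(E) = o(E)$.

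The only real obstacle is technical: one must justify the spectral computation of $S(\gamma_{H,\beta})$ and Klein's inequality in the unbounded, infinite-dimensional setting. Both reduce to ordinary scalar sums once one works in the eigenbasis $H = \sum_n \lambda_n \ketbra{n}$; the relevant series $\sum_n \lambda_n e^{-\beta\lambda_n}$ converges because $\lambda\, e^{-\beta\lambda} \le C_\beta\, e^{-\beta\lambda/2}$ for a suitable constant $C_\beta$, and hence $\Tr[H e^{-\beta H}] \le C_\beta\, Z(\beta/2) < \infty$. No further ingredients are required beyond these standard manipulations.
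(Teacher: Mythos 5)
Your proof is correct, and it takes a genuinely different route from the paper: the paper does not prove this lemma at all but imports it wholesale from Shirokov's Proposition~1 (adding only a remark on how to translate notation, namely that the Gibbs hypothesis corresponds to $g(H)=0$ there). Your argument is self-contained and elementary. The two pillars both check out: (i) finiteness of $F_H$ follows from the exact identity $S(\gamma_{H,\beta}) = \tfrac{1}{\ln 2}\left(\beta \Tr[H\gamma_{H,\beta}] + \ln Z(\beta)\right)$ together with $\Tr[H e^{-\beta H}] \leq C_\beta Z(\beta/2) < \infty$, and the endpoint $E=0$ is handled correctly because trace-classness of $e^{-\beta H}$ forces the ground eigenspace to be finite-dimensional (any nonzero spectral value of a compact positive operator is an eigenvalue of finite multiplicity, and $1=e^{-\beta\cdot 0}$ is one such value since $H$ is grounded); (ii) the Gibbs variational inequality $F_H(E) \leq \beta E + \log_2 Z(\beta)$ for every auxiliary $\beta>0$ — which for your particular $\sigma=\gamma_{H,\beta_H(E)}$ is a purely classical convexity statement about $\beta \mapsto \beta E + \ln Z(\beta)$, minimized where $-\partial_\beta \ln Z = E$ — immediately gives $\limsup_{E\to\infty} F_H(E)/E \leq \beta$ for all $\beta>0$, hence the $o(E)$ claim. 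What the citation to Shirokov buys the paper is access to the full package of properties of $F_H$ (concavity, monotonicity, the sharper characterization of the limit $F_H(E)/E$) should they be needed elsewhere; what your approach buys is a short, transparent derivation of exactly the two facts the converse proof actually uses, very much in the spirit of the paper's own self-contained Appendix~A treatment of Lemma~\ref{associated_Hamiltonian_lemma}. The only cosmetic mismatch is the base of the logarithm (the paper's $S$ is in bits, so a factor $1/\ln 2$ should decorate your identity), which of course does not affect either finiteness or the $o(E)$ conclusion.
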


\begin{rem}
    To obtain Lemma~\ref{F_H_o_of_E_lemma} from Ref.~\cite[Proposition~1]{Shirokov-1}, one needs to use the relations proved in point~(ii) there, and observe that, in the notation of Ref.~\cite{Shirokov-1}, the Gibbs hypothesis is equivalent to $g(H)=0$.
\end{rem}

We will now establish a semi-continuity bound on the entanglement of formation. This is needed to relate the entanglement of formation of the target state and that of the output state of the entanglement dilution protocol, namely, $\rho^{\otimes n}$ and $\rho_n \coloneqq \EE\big(\Phi^{\otimes \floor*{rn}}\big)$, respectively (cf.~\eqref{eq:entanglement_cost}). We are free to study whatever target states we wish, so it makes perfect sense to impose a mild condition on $\rho$ such as~\eqref{eq:min_S_A_B}. We cannot impose an energy constraint on $\rho_n$: in fact, for the converse part, we need to analyze an arbitrary entanglement dilution protocol, which may inject an arbitrarily high amount of energy into the system to output $\rho_n$. Due to this asymmetry on the constraints between $\rho^{\otimes n}$ and $\rho_n$, bounds with energy constraints on both $\rho$ and $\rho_n$ are inappropriate for our analysis.
By contrast, the characteristic property of the following bound is that it is a \emph{one-side} continuity bound featuring an energy constraint on $\rho$ only: as such, it is suitable for our analysis.

\begin{lemma}[{(Semi-continuity bound for the entanglement of formation~\cite[Proposition~4.B]{Shirokov2022})}] \label{alsc_formation_lemma}
Let $H_A$ be a grounded Hamiltonian on a quantum system $A$ that satisfies the Gibbs hypothesis (Definition~\ref{Gibbs_hypothesis_def}). Then, 
given an arbitrary bipartite state $\rho_{AB}\in \mathcal{D}\qty(\mathcal{H}_A\otimes \mathcal{H}_B)$ such that $E = \Tr\qty[ \rho_{A} H_A] < \infty$, with $\rho_A \coloneqq \Tr_B\qty[\rho_{AB}]$ being the reduced state on $A$, for any state $\tilde{\rho}_{AB}$ we have that
\bb
E_f\qty(\rho) - E_f\qty(\tilde{\rho}) \leq \epsilon^\prime F_H\left(\frac{E}{\epsilon^\prime}\right) + g\qty(\epsilon^\prime),
\ee
where $E_f$ is the entanglement of formation defined by~\eqref{eof}, $F_H$ is the function given by~\eqref{FH} (which is well defined because of the Gibbs hypothesis), $\epsilon^\prime \coloneqq \sqrt{\e\qty(2-\e)}$ with $\e\coloneqq \frac12 \left\|\tilde\rho - \rho\right\|_1$, and $g\qty(x)\coloneqq \qty(x+1)\log_2\qty(x+1) - x \log_2 x$.
\end{lemma}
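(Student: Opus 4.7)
The statement is a one-sided asymptotic-continuity bound for the entanglement of formation in the energy-constrained infinite-dimensional setting, and I would attempt an adaptation of the Alicki--Fannes--Winter (AFW) technique to the convex-roof structure of $E_f$, replacing the dimension ceiling $\log d$ that appears in~\eqref{asymptotic_continuity_EoF} by the Gibbs-state entropy function $F_H$ from Lemma~\ref{F_H_o_of_E_lemma}.

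The first step is to pass from trace distance to purified distance: the Fuchs--van de Graaf inequalities give $F(\rho,\tilde\rho)\ge(1-\e)^2$, so by Uhlmann's theorem one can choose purifications $\ket{\psi}_{ABC}$, $\ket{\tilde\psi}_{ABC}$ of $\rho$ and $\tilde\rho$ on a common ancilla $C$ with $|\braket{\psi}{\tilde\psi}|\ge 1-\e$, and hence pure-state trace distance at most $\e'=\sqrt{\e(2-\e)}$. This passage is essential to obtain the sharp factor $\e'$ in the final bound, rather than the weaker $\e$ that a naive AFW argument on the trace distance would yield. Next I would use the measurement formulation of the convex roof, namely $E_f(\rho_{AB})=\inf_{\{M_x\}}\sum_x p_x S(\phi_x^A)$, the infimum taken over rank-one projective measurements on $C$. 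Fixing a near-optimal measurement for $\tilde\rho$ and applying the same measurement to $\ket{\psi}$ produces a pure-state decomposition of $\rho$ whose classical--quantum extension $\sigma^\rho_{AX}$ is within trace distance $\lesssim\e'$ of the analogous $\sigma^{\tilde\rho}_{AX}$; the problem thus reduces to comparing the conditional entropies $S(A|X)$ of two close c-q states.

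The core AFW-type estimate then proceeds via a decomposition $\sigma^\rho_{AX}=(1-\e')\bar\omega_{AX}+\e'\eta_{AX}$, in which $\bar\omega_{AX}$ is a bulk component common to the corresponding decomposition of $\sigma^{\tilde\rho}_{AX}$ and $\eta_{AX}$ is an error term of weight $\e'$. Combining concavity of $S(A|X)$ with the mixing-entropy upper bound yields an estimate of the form $E_f(\rho)-E_f(\tilde\rho)\le\e'\,S(\eta_A)+g(\e')$, where the $g(\e')=(1+\e')\,h_2\!\bigl(\tfrac{\e'}{1+\e'}\bigr)$ penalty comes from the binary mixing entropy, exactly as in finite-dimensional AFW. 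For the infinite-dimensional closure one observes that $\eta_A$ enters a convex decomposition of $\rho_A$ with weight $\e'$, so $\Tr[\eta_A H_A]\le E/\e'$ by linearity and positivity of $H_A$, and the maximum-entropy characterization of the Gibbs state (guaranteed by the Gibbs hypothesis, via Lemma~\ref{F_H_o_of_E_lemma}) gives $S(\eta_A)\le F_H(E/\e')$. Assembling these ingredients yields the stated inequality.

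The main obstacle is arranging the coupling carefully enough that both the prefactor $\e'$ in front of $F_H$ and the argument $E/\e'$ come out correctly; this is the infinite-dimensional analog of the improvement from $\e\log d$ to $\e'\log(d-1)$ in Mark's tight bound~\cite{Mark2020}, and it is where the passage through Uhlmann purifications really pays off. A secondary subtlety is that the convexity of $E_f$ alone gives inequalities in the wrong direction for this purpose, which is why the argument must be routed through the conditional-entropy formulation $E_f(\rho)=\inf S(A|X)$ and then leverage the genuine concavity of the von Neumann entropy on that object.
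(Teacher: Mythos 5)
This lemma is not proved in the paper at all: it is imported verbatim from Shirokov~\cite[Proposition~4.B]{Shirokov2022}, so there is no internal argument to compare against. Your sketch is, however, a faithful reconstruction of the strategy of the cited source, namely the ``quasi-classical'' Alicki--Fannes--Winter technique applied at the level of classical--quantum extensions of pure-state decompositions, with the dimension factor $\log_2 d$ replaced by the Gibbs-entropy function $F_H$. The key structural points are all present and in the right order: Uhlmann purifications at pure-state trace distance $\e'$, transporting a near-optimal measurement for $\tilde\rho$ to $\rho$'s purification, the common-part decomposition $\sigma = (1-\delta)\bar\omega + \delta\eta$, $\tilde\sigma = (1-\delta)\bar\omega + \delta\tilde\eta$ with $\delta = \frac12\|\sigma-\tilde\sigma\|_1 \le \e'$, and the maximum-entropy bound $S(\eta_A)\le F_H(E/\delta)$.

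For a complete write-up a few steps would need to be made explicit. First, the energy bound on the error state is correct but should be justified by the operator inequality $\delta\,\eta_{AX} = [\sigma-\tilde\sigma]_+ \le \sigma$ (valid because $(1-\delta)\bar\omega\ge 0$), whence $\delta\,\eta_A \le \sigma_A = \rho_A$ and $\Tr[\eta_A H_A]\le E/\delta$; the naive route via $[\sigma-\tilde\sigma]_+ = \Pi(\sigma-\tilde\sigma)\Pi \le \Pi\sigma\Pi$ does \emph{not} suffice, since $\Tr[\Pi\sigma\Pi H]\le\Tr[\sigma H]$ can fail for unbounded $H$. Second, the lower bound $S(A|X)_{\tilde\eta}\ge 0$ used to drop the $\tilde\eta$ term requires that $\tilde\eta$ be itself a c-q state, which holds because $\sigma$ and $\tilde\sigma$ are block-diagonal in the same classical basis on $X$, so the positive part of their difference is too. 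Third, one must check the monotonicity of $x\mapsto x\,F_H(E/x)+g(x)$ on $(0,\infty)$ (which follows from concavity of $F_H$ with $F_H(0)=0$) in order to replace $\delta$ by $\e'$, and this is also why the penalty term is $g(\e')$ rather than $h_2(\e')$. Finally, in infinite dimensions one should note that the countable-decomposition convex roof~\eqref{eof} is indeed exhausted by (countable-outcome) measurements on a purifying system, and dispose of the trivial case $E_f(\tilde\rho)=\infty$ separately. None of these is a conceptual obstacle; they are exactly the technical points Shirokov's proof takes care of.
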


We also have the following lemma to convert the constraint in terms of energy into that in terms of entropy.
Since our condition on $\rho_{AB}$ is given in terms of the entropy as~\eqref{eq:min_S_A_B}, the lemma below will be useful for converting our condition into the energy-constraint condition for the above asymptotic semi-continuity bound.

\begin{lemma}[{(Hamiltonian associated with a state~\cite[Proposition~4]{Shirokov-1})}] \label{associated_Hamiltonian_lemma}
For an arbitrary state $\rho\in \mathcal{D}\qty(\mathcal{H})$, the following are equivalent:
\begin{enumerate}[(a)]
\item there exists a grounded Hamiltonian $H$ that satisfies the Gibbs hypothesis (Definition~\ref{Gibbs_hypothesis_def}) and such that $\Tr\qty[\rho H]<\infty$;
\item $S\qty(\rho)<\infty$.
\end{enumerate}
\end{lemma}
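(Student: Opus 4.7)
The plan is to prove the two implications separately. The direction $(a) \Rightarrow (b)$ is a short application of the non-negativity of the quantum relative entropy, whereas $(b) \Rightarrow (a)$ is the substantive direction and requires building $H$ adaptively from the spectral data of $\rho$.

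For $(a) \Rightarrow (b)$, assume $H$ is grounded, satisfies the Gibbs hypothesis, and let $E \coloneqq \Tr\qty[\rho H] < \infty$. Fix any $\beta > 0$, write $Z_\beta \coloneqq \Tr\qty[e^{-\beta H}] < \infty$, and set $\gamma_{H,\beta} \coloneqq e^{-\beta H}/Z_\beta$. Since $\gamma_{H,\beta}$ has full support on $\mathcal{H}$, one has $-\log \gamma_{H,\beta} = \beta H + (\log Z_\beta)\mathds{1}$, and hence $-\Tr\qty[\rho \log \gamma_{H,\beta}] = \beta E + \log Z_\beta$ is finite. Klein's inequality $D\qty(\rho \,\|\, \gamma_{H,\beta}) \geq 0$, interpreted so as to force $S(\rho)$ to be finite whenever $-\Tr\qty[\rho \log \gamma_{H,\beta}]$ is, then yields
\bbb
S(\rho) \leq \beta E + \log Z_\beta < \infty\, .
\eee

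For $(b) \Rightarrow (a)$, I would work from a spectral decomposition $\rho = \sum_{n \geq 0} p_n \ketbra{n}$ with $p_n > 0$ on $\supp \rho$. The goal is to choose $h_n \geq 0$ and build $H = \sum_n h_n \ketbra{n}$ on $\supp \rho$, extended on $\ker \rho$ (if nontrivial) by a fast-growing auxiliary sequence such as $h'_k = k$ on an orthonormal basis of the kernel; this extension preserves the Gibbs hypothesis and contributes nothing to $\Tr\qty[\rho H]$. The two remaining constraints on $\qty{h_n}_n$ are: (1) $\sum_n p_n h_n < \infty$ and (2) $\sum_n e^{-\beta h_n} < \infty$ for every $\beta > 0$. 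The naive choice $h_n = -\log p_n$ satisfies (1), since $\sum_n p_n(-\log p_n) = S(\rho) < \infty$, but violates (2) at small $\beta$ because $\sum_n p_n^\beta$ may diverge; conversely, any $\rho$-independent fast-growing ansatz (say, $h_n = \log^2 n$) fails (1) for some $\rho$.

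The remedy is to amplify $-\log p_n$ by a slowly growing weight. I would invoke the classical fact that any convergent positive series $\sum_n a_n$ admits a divergent weight $\lambda_n \to \infty$ with $\sum_n a_n \lambda_n < \infty$: for $b_n \coloneqq \sum_{k \geq n} a_k$, the choice $\lambda_n \coloneqq b_n^{-1/2}$, combined with $\sqrt{b_n} - \sqrt{b_{n+1}} \geq (b_n - b_{n+1})/(2\sqrt{b_n})$, gives a telescoping bound $\sum_n a_n \lambda_n \leq 2\sqrt{b_0}$. Applying this to $a_n = -p_n \log p_n$ (summable by hypothesis) produces $\lambda_n \to \infty$ with $\sum_n p_n \lambda_n(-\log p_n) < \infty$. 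Setting $h_n \coloneqq \lambda_n(-\log p_n)$ verifies (1) directly. For (2), observe that for any fixed $\beta > 0$ one has $\lambda_n \geq 1/\beta$ for all sufficiently large $n$, so $e^{-\beta h_n} = p_n^{\beta \lambda_n} \leq p_n$, and the tail is dominated by $\sum_n p_n = 1$. A constant shift finally enforces $\inf \spec H = 0$ without affecting either estimate. The main obstacle is precisely this tension in $(b) \Rightarrow (a)$: the Gibbs hypothesis demands exponential decay of $e^{-\beta h_n}$ at \emph{every} temperature, which rules out $H = -\log \rho$, while the finite-energy requirement forbids any universal, $\rho$-independent growth rate for $h_n$. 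The tail-amplification trick exploits exactly the extra decay slack guaranteed by $S(\rho) < \infty$ to reconcile the two constraints.
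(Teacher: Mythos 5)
Your proof is correct and follows essentially the same route as the paper's: for (a)$\Rightarrow$(b), the maximum-entropy property of Gibbs states (equivalently, non-negativity of relative entropy), and for (b)$\Rightarrow$(a), the amplification of $-\log p_n$ by a divergent weight built from the tails of the convergent entropy series, followed by the same two-regime verification of the Gibbs hypothesis (finitely many indices with $\beta \lambda_n \leq 1$, and $p_n^{\beta\lambda_n}\leq p_n$ on the rest). The only difference is the explicit choice of weight --- the paper's auxiliary lemma uses $1-\log_2\left(\sum_{m\geq n}a_m\right)$ where you use the Abel--Dini weight $\left(\sum_{m\geq n}a_m\right)^{-1/2}$ --- and both instantiate the same classical fact about convergent positive series.
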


\begin{proof}
     We give a simplified and fully self-contained proof in Appendix~\ref{sec:proof_associated_hammltonian_lemma}.
\end{proof}

\subsection{Monotonicity of the entanglement of formation under separable operations} \label{subsec:monotonicity_EoF_SEP}

Throughout this section, we extend the result of Ref.~\cite[Sec.~III, item~v)]{Gheorghiu2008} on the monotonicity of the entanglement of formation under finite-dimensional separable operations to infinite-dimensional separable operations. 

In the finite-dimensional case, the possible ensembles of bipartite pure states produced when separable operations act on a single bipartite pure state are characterized in Ref.~\cite{Gheorghiu2008}. We start by generalizing this finite-dimensional result to the present infinite-dimensional setting. The analysis in Ref.~\cite{Gheorghiu2008} relies on a function defined as the sum of $N$ smallest eigenvalues of a positive semidefinite operator. In particular, for a $D$-dimensional positive semidefinite operator $M$ with eigenvalues $\lambda_0 \geq \lambda_1 \geq \cdots \geq \lambda_{D-1} \geq 0$ and spectral decomposition $M=\sum_{n=0}^{D-1}\lambda_n\ketbra{n}$, Ref.~\cite{Gheorghiu2008} defines
\bb
\label{eq:chi_finite}
    \chi_N\qty(M) &\coloneqq \sum_{n=D-N}^{D-1}\lambda_{n}\, ,
\ee
with the convention that $\chi_0\qty(M)\coloneqq 0$. The function $\chi_N$ played a central role in the finite-dimensional analysis of Ref.~\cite{Gheorghiu2008}, but, problematically, $\chi_N$ by itself is not well defined in the infinite-dimensional case, for the simple reason that there might be no ``smallest eigenvalue''. Instead, for any positive semidefinite trace-class operator with spectral decomposition
\bb
T=\sum_{n=0}^{\infty}\lambda_n\ketbra{n}\, ,\quad \lambda_0\geq \lambda_1\geq \lambda_2 \geq\cdots
\ee
and any finite $N>0$, we here use the $N$ largest eigenvalues $\lambda_0,\ldots,\lambda_{N-1}$ to introduce
\bb
\label{eq:chi}
\widetilde{\chi}_N\qty(T)&\coloneqq \Tr T -\sum_{n=0}^{N-1}\lambda_n =\sum_{n=N}^{\infty}\lambda_n\, ,
\ee
and $\widetilde{\chi}_0\qty(T)\coloneqq \Tr T$ for $N=0$. For a $D$-dimensional positive semidefinite operator $M$ with $\lambda_0\geq\lambda_1\geq\cdots\geq\lambda_{D-1}\geq 0$, we can define $\widetilde{\chi}_N(M)$ by taking formally $\lambda_{D}=\lambda_{D+1}=\cdots=0$. With our function $\widetilde{\chi}_N$ at hand, we are now ready to generalize of the key technical result~\cite[Lemma~1]{Gheorghiu2008} to our infinite-dimensional setting. 

\begin{lemma}[{(Generalization of Ref.~\cite[Lemma~1]{Gheorghiu2008} to infinite-dimensional systems)}] \label{lemma:lemma_1}
Let $\mathcal{H}$ be any separable Hilbert space.
For any finite $N>0$, any pair of bounded operators $A,B\in\mathcal{B}\qty(\mathcal{H})$, and any positive semidefinite compact operator
\begin{equation}
K = \sum_{n=0}^{\infty} k_n \ketbra{n} \in\mathcal{K}_+\qty(\mathcal{H})\, , \qquad k_0 \geq k_1 \geq\cdots\, ,
\end{equation}
define
\begin{equation}
K_N \coloneqq \sum_{n=N}^{\infty} k_n \ketbra{n}\, .
\end{equation}
Then, it holds that
\begin{equation}
\widetilde{\chi}_N\qty[A K B B^\dag K A^\dag]\leq\Tr\qty[A K_N B B^\dag K_N A^\dag],
\end{equation}
where $\widetilde{\chi}_N$ is defined by~\eqref{eq:chi}.
\end{lemma}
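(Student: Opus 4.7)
The plan is to realise the left-hand side as $\widetilde{\chi}_N(XX^\dag)$ with $X \coloneqq AKB$, and then invoke Ky Fan's variational characterisation of the top-$N$ eigenvalue sum together with a specifically chosen rank-$\leq N$ projection. Since $K$ is compact and $A, B$ are bounded, $X$ is compact, and hence so is $M \coloneqq XX^\dag = AKBB^\dag K A^\dag$; its eigenvalues $\lambda_n(M)$ are therefore well-defined and sum (possibly to $+\infty$) to $\Tr M$. Following the splitting of $K$ suggested by the statement, write $K = \Pi K + K_N$ with $\Pi \coloneqq \sum_{n<N}\ketbra{n}$, and set $X_1 \coloneqq A \Pi K B$ (of rank $\leq N$) and $X_2 \coloneqq A K_N B$, so that $X = X_1 + X_2$.

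The Ky Fan maximum principle for a compact positive operator $M$ reads $\sum_{n=0}^{N-1}\lambda_n(M) = \max_{\rk P \leq N}\Tr[PMP]$, which gives the pointwise bound $\widetilde{\chi}_N(M) \leq \Tr[(I-P)M]$ for any projection $P$ with $\rk P \leq N$. I would apply this with $P = Q$, the orthogonal projection onto the closure of $\ran(X_1)$, which satisfies $\rk Q \leq N$ and $(I-Q) X_1 = 0 = X_1^\dag(I-Q)$. Expanding $M = (X_1 + X_2)(X_1 + X_2)^\dag$ into its four pieces and inserting $(I-Q)$ makes the two terms containing $(I-Q)X_1$ vanish identically; the mixed term $\Tr[(I-Q) X_2 X_1^\dag]$ equals $\Tr[X_1^\dag (I-Q) X_2]$ by cyclicity (legitimate because $X_1$ is finite rank, hence every $X_1$-involving product is trace class), and this vanishes as well. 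The lone surviving contribution then gives
\bb
\widetilde{\chi}_N(M) \leq \Tr\!\left[(I-Q)\, X_2 X_2^\dag\right] = \Tr\!\left[X_2^\dag(I-Q)X_2\right] \leq \Tr\!\left[X_2^\dag X_2\right] = \Tr\!\left[A K_N B B^\dag K_N A^\dag\right] ,
\ee
which is the desired inequality.

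The main difficulty, as is typically the case in infinite dimensions, lies in trace-class bookkeeping rather than in algebra. Ky Fan's principle in the form used above rests on $M$ being a compact positive operator, which is why we rely on the assumed compactness of $K$ early on; and the cyclicity steps on the mixed cross terms are rescued by the finite rank of $X_1$. Should $X_2$ fail to be Hilbert--Schmidt, the right-hand side of the target inequality is already $+\infty$ and the claim is trivial, so we may and do assume throughout that $X_2$ is Hilbert--Schmidt, in which case $M$ is trace class and each trace manipulation above is justified in the usual sense.
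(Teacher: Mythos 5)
Your argument is correct and follows essentially the same route as the paper's: both proofs hinge on choosing a rank-$\leq N$ orthogonal projection whose complement annihilates the ``top part'' $A(K-K_N)$ (you project onto the closure of the range of $A(K-K_N)B$, the paper onto that of $A(K-K_N)$ --- an immaterial difference), and then applying the Ky Fan / infinite-dimensional Schur--Horn inequality $\sum_{n=0}^{N-1}\lambda_n \geq \Tr\qty[PMP]$ to bound $\widetilde{\chi}_N(M)$ by $\Tr\qty[(\mathds{1}-P)M(\mathds{1}-P)]$, after which the cross terms vanish and $\mathds{1}-P\leq \mathds{1}$ finishes the job. Your explicit handling of the case where $AK_NB$ fails to be Hilbert--Schmidt is a welcome extra detail, but the underlying idea is the same.
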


In our proof of Lemma~\ref{lemma:lemma_1}, an infinite-dimensional version of the Schur--Horn Theorem~\cite{KAFTAL20103115} plays an essential role. In both the finite- and the infinite-dimensional setting, this latter result characterizes the relationship between the diagonal and the spectrum of positive semidefinite compact operators in terms of majorization. 

\begin{lemma}[{(Infinite-dimensional Schur--Horn Theorem~\cite[Proposition~6.4]{KAFTAL20103115})}]
\label{lem: inf_dim_SH}
Let $\mathcal{H}$ be any separable Hilbert space. 
For any positive integer $N > 0$, any positive semidefinite compact operator $M$, and any set of orthonormal vectors $\{\ket{v_n} \in \mathcal{H}: n = 0,1,\dots,N-1\}$, it holds that 
\begin{equation}
\sum_{n = 0}^{N-1} \lambda_n \geq \sum_{n = 0}^{N-1} \bra{v_n}M\ket{v_n}, 
\end{equation}
where $\lambda_0 \geq \lambda_1 \geq \cdots \geq \lambda_{N-1}$ are the $N$ largest eigenvalues of $M$. 
\end{lemma}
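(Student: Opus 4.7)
The plan is to reduce this statement to the classical Ky Fan maximum principle, applied in the setting of positive semidefinite compact operators on a separable Hilbert space. The key observation is that although Kaftal--Weiss~\cite{KAFTAL20103115} prove a far more general majorization-type Schur--Horn theorem, for the inequality stated here one only needs the ``easy direction,'' which admits a direct proof from the spectral theorem plus Bessel/Parseval.

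First, I would invoke the spectral theorem for positive semidefinite compact operators to write $M = \sum_{k=0}^{\infty} \lambda_k \ketbra{e_k}$, where $\{\ket{e_k}\}_{k \geq 0}$ is an orthonormal basis of $\mathcal{H}$ (extending, if necessary, an ONB of $\overline{\ran M}$ by vectors in $\ker M$, assigned $\lambda_k = 0$), with eigenvalues listed in non-increasing order and $\lambda_k \to 0$. Expanding each diagonal matrix element yields
\begin{equation*}
\bra{v_n} M \ket{v_n} = \sum_{k=0}^{\infty} \lambda_k \abs{\braket{v_n}{e_k}}^2,
\end{equation*}
and summing over $n$ gives
\begin{equation*}
\sum_{n=0}^{N-1} \bra{v_n} M \ket{v_n} = \sum_{k=0}^{\infty} \lambda_k\, c_k, \qquad c_k \coloneqq \sum_{n=0}^{N-1} \abs{\braket{v_n}{e_k}}^2.
\end{equation*}

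Second, I would establish the two constraints on the coefficients $(c_k)$ that drive the argument. The upper bound $c_k \leq 1$ is Bessel's inequality applied to the unit vector $\ket{e_k}$ relative to the orthonormal set $\{\ket{v_n}\}_{n=0}^{N-1}$, while Parseval's identity applied to each $\ket{v_n}$ in the ONB $\{\ket{e_k}\}$ yields $\sum_k c_k = \sum_{n=0}^{N-1} \braket{v_n}{v_n} = N$. Hence $(c_k)$ lies in the simplex $\{c : 0 \leq c_k \leq 1,\ \sum_k c_k = N\}$, and the claim reduces to showing that on this set, $\sum_k \lambda_k c_k$ is maximized by the choice $c_k = \mathds{1}\{k < N\}$.

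Third, to verify this last step I would write
\begin{equation*}
\sum_{k=0}^{N-1} \lambda_k - \sum_{k=0}^{\infty} \lambda_k c_k = \sum_{k=0}^{N-1} \lambda_k (1 - c_k) - \sum_{k=N}^{\infty} \lambda_k\, c_k,
\end{equation*}
invoke the balance identity $\sum_{k=0}^{N-1}(1 - c_k) = \sum_{k=N}^{\infty} c_k$ (which follows from $\sum_k c_k = N$), and exploit the monotonicity of the $\lambda_k$ via the two pointwise bounds $\lambda_k (1-c_k) \geq \lambda_{N-1}(1-c_k)$ for $k < N$ and $\lambda_k c_k \leq \lambda_{N-1}\, c_k$ for $k \geq N$. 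Combining these yields
\begin{equation*}
\sum_{k=0}^{N-1} \lambda_k (1 - c_k) - \sum_{k=N}^{\infty} \lambda_k\, c_k \;\geq\; \lambda_{N-1}\!\left(\sum_{k=0}^{N-1}(1-c_k) - \sum_{k=N}^{\infty} c_k\right) = 0,
\end{equation*}
completing the proof.

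The only real obstacle is technical rather than conceptual: because $M$ is only assumed compact and not trace-class, the tail series $\sum_{k \geq N} \lambda_k$ need not converge, so one must never split the expression into divergent pieces. This is avoided by working throughout with $\sum_k \lambda_k c_k$, which converges absolutely since $\sum_k \lambda_k c_k = \sum_{n=0}^{N-1}\bra{v_n}M\ket{v_n} \leq N \|M\|_\infty < \infty$, and by comparing the $k < N$ and $k \geq N$ tails term-by-term only after applying the conservation identity above.
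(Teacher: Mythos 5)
Your proof is correct. Note, however, that the paper does not actually prove this lemma at all: it is imported wholesale as Proposition~6.4 of Kaftal--Weiss \cite{KAFTAL20103115}, which establishes a much more general Schur--Horn majorization theorem for compact operators. What you have done instead is observe, correctly, that only the ``easy'' direction is needed --- namely the Ky Fan maximum principle $\sum_{n<N}\bra{v_n}M\ket{v_n}\leq\sum_{n<N}\lambda_n$ for positive compact $M$ --- and that this follows from the spectral theorem plus Bessel and Parseval in a few lines. Your handling of the two delicate points is sound: the interchange of the sums over $n$ and $k$ is justified by non-negativity (Tonelli), and the constraints $0\leq c_k\leq 1$, $\sum_k c_k=N$ together with the monotone rearrangement step
\begin{equation*}
\sum_{k=0}^{N-1}\lambda_k(1-c_k)-\sum_{k=N}^{\infty}\lambda_k c_k\;\geq\;\lambda_{N-1}\left(\sum_{k=0}^{N-1}(1-c_k)-\sum_{k=N}^{\infty}c_k\right)=0
\end{equation*}
are exactly what is needed; all the series involved converge absolutely because $\sum_k\lambda_k c_k\leq N\|M\|_\infty$ and $\sum_{k\geq N}c_k\leq N$, so the non-trace-class tail never causes trouble. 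The trade-off between the two routes is the usual one: citing Kaftal--Weiss is shorter and situates the inequality within the full infinite-dimensional Schur--Horn theory, while your argument makes the paper self-contained and makes transparent that no deep operator-theoretic machinery is required for the specific inequality used in Lemma~\ref{lemma:lemma_1}.
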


Using Lemma~\ref{lem: inf_dim_SH}, we can now prove Lemma~\ref{lemma:lemma_1}. 

\begin{proof}[Proof of Lemma~\ref{lemma:lemma_1}]
Let $\Pi_N$ be the projector onto the orthogonal complement of the range of $A\qty(K-K_N)$. We have that
\bb \label{eq:rank_bound}
&\rank \qty(\mathds{1}-\Pi_N)=\dim\qty(\mathrm{range}\qty(A\qty(K-K_N)))\leq N\, ,\qquad \Pi_N A K=\Pi_N A K_N\, .
\ee
Let ${\qty(\lambda_n)}_{n=0,1,\ldots}$ denote the sequence of eigenvalues of $A K B B^\dag K A^\dag$ sorted in the descending order. Then, it holds that
\bb
\widetilde{\chi}_N[A K B B^\dag K A^\dag]\ &= \Tr\qty[A K B B^\dag K A^\dag]-\sum_{n=0}^{N-1}\lambda_n \\
&= \Tr\qty[\Pi_N A K B B^\dag K A^\dag\Pi_N]-\left(\sum_{n=0}^{N-1}\lambda_n - \Tr\qty[\qty(\mathds{1}-\Pi_N) A K B B^\dag K A^\dag\qty(\mathds{1}-\Pi_N)]\right) \\
&\leq\Tr\qty[\Pi_N A K B B^\dag K A^\dag\Pi_N] \\
&= \Tr\qty[\Pi_N A K_N B B^\dag K_N A^\dag\Pi_N] \\
&\leq \Tr\qty[AK_N B B^\dag K_N A^\dag]\, , 
\ee
where the first inequality follows from~\eqref{eq:rank_bound} due to the infinite-dimensional version of the Schur--Horn Theorem (Lemma~\ref{lem: inf_dim_SH}), and the last follows from the operator inequality $\Pi_N\leq\mathds{1}$. To apply Lemma~\ref{lem: inf_dim_SH}, we observed that $A K BB^\dagger K A^\dagger$ is a positive semidefinite operator by definition, and it is also a compact operator because $K$ is compact and $A$ and $B$ are bounded~\cite{RUDIN}.
\end{proof}

Using the above lemma, we can generalize Ref.~\cite[Theorem~2]{Gheorghiu2008} so as to obtain the following result. 

\begin{lemma}[(Majorization theorem for countably separable operations)] \label{lemma:prop_1}
Let $\mathcal{H}_A$ and $\mathcal{H}_B$ be two separable Hilbert spaces. For any positive integer $N\in \N_+$, any bipartite pure state $\psi_{AB} = \ketbra{\psi}_{AB}$ with local reduction $\psi_A = \Tr_B \psi_{AB}$, and any family $\qty(L_k\otimes M_k)_{k=1,2,\ldots}$ of product bounded operators acting on $\mathcal{H}_A \otimes \mathcal{H}_B$ with
\bb
\label{R_operator}
R\coloneqq\sum_{k=1}^\infty L_k^\dag L_k^{\vphantom{\dag}}\otimes M_k^\dag M_k^{\vphantom{\dag}}\,
\ee
satisfying $\left\|R\right\|_\infty<\infty$, it holds that
\begin{equation}
\label{eq:majorization_bound}
\sum_{k=1}^\infty \widetilde{\chi}_N\qty(\Tr_B \!\left[L_k\otimes M_k\, \psi_{AB}\, L_k^\dag\otimes M_k^\dag\right]) \leq \widetilde{\chi}_N\qty(\psi_A) \left\|R\right\|_\infty,
\end{equation}
where $\widetilde{\chi}_N$ is defined as~\eqref{eq:chi}, and $\left\| \cdot \right\|_\infty$ is the operator norm defined by~\eqref{eq:operator_norm}.
\end{lemma}

\begin{proof}
Let $\ket{\psi}=\sum_{n=0}^{\infty}\sqrt{p\qty(n)} \ket{n}\otimes\ket{n}$ be the Schmidt decomposition of $\ket{\psi}$. Then, setting
\begin{equation}
D \coloneqq \sum_{n=0}^{\infty} \sqrt{p(n)} \ketbra{n},
\end{equation}
due to the map-state duality, we have that
\bb
\Tr_B \!\left[L_k\otimes M_k\, \psi_{AB}\, L_k^\dag\otimes M_k^\dag\right] &= \sum_{n,m=0}^\infty \sqrt{p(n)\,p(m)} \,\Tr_B \!\left[L_k\otimes M_k\, \ketbraa{nn}{mm}\, L_k^\dag\otimes M_k^\dag\right] \\
&= \sum_{n,m=0}^\infty \sqrt{p(n)\,p(m)}\, \bra{m}M_k^\dag M_k^{\vphantom{\dag}} \ket{n}\, L_k^{\vphantom{\dag}}\! \ketbraa{n}{m} L_k^\dag \\
&= \sum_{n,m=0}^\infty \sqrt{p(n)\,p(m)}\, \bra{n}M_k^\intercal M_k^{*} \ket{m}\, L_k^{\vphantom{\dag}}\! \ketbraa{n}{m} L_k^\dag \\
&= L_k^{\vphantom{\dag}} D M_k^\intercal M_k^* D L_k^\dag\, ,
\label{lemma:prop_1_proof_eq1}
\ee
where $M_k^\intercal$ and $M_k^{*}$ are the operators defined by the relations $\bra{n}M_k^\intercal\ket{m}=\bra{m}M_k\ket{n}$ and $\bra{n}M_k^{*}\ket{m}=\bra{n}M_k\ket{m}^{*}$ for all $n$ and $m$, respectively, and $\bra{n}M_k\ket{m}^{*}$ is the complex conjugate of the complex number $\bra{n}M_k\ket{m}$.
Therefore
\bb
\sum_{k=1}^\infty \widetilde{\chi}_N\qty(\Tr_B \!\left[L_k\otimes M_k\, \psi_{AB}\, L_k^\dag\otimes M_k^\dag\right])\ &\eqt{(i)}\ \sum_{k=1}^\infty \widetilde{\chi}_N\qty(L_k^{\vphantom{\dag}} D M_k^\intercal M_k^* D L_k^\dag) \\
&\leqt{(ii)}\ \sum_{k=1}^\infty \Tr\qty[L_k^{\vphantom{\dag}} D_N M_k^\intercal M_k^* D_N L_k^\dag]  \\
&\eqt{(iii)}\ \sum_{k=1}^\infty \bra{\psi_N} L_k^\dag L_k^{\vphantom{\dag}} \otimes M_k^\dag M_k^{\vphantom{\dag}} \ket{\psi_N} \\
&\eqt{(iv)}\ \bra{\psi_N} R \ket{\psi_N} \\
&\leqt{(v)}\ \braket{\psi_N} \|R\|_\infty \\ 
&\eqt{(vi)}\ \widetilde{\chi}_N\qty(\psi_A)\, \|R\|_\infty\, .
\ee
Here: (i)~follows from~\eqref{lemma:prop_1_proof_eq1}; (ii)~from an application of Lemma~\ref{lemma:lemma_1} with $A = L_k$, $B=M_k^\intercal$, and $K=D$, which is compact because it is the square root of a trace-class operator; (iii)~can be verified with a calculation almost identical to~\eqref{lemma:prop_1_proof_eq1} except for the substitution $\ket{\psi}\mapsto \ket{\psi_N} \coloneqq \sum_{n=N}^{\infty}\sqrt{p\qty(n)} \ket{n}\otimes\ket{n}$ and $D\mapsto D_N \coloneqq \sum_{n=N}^{\infty}\sqrt{p\qty(n)} \ketbra{n}$; in~(iv), we remembered the definition of $R$, reported in~\eqref{R_operator}; (v)~holds because of the definition of operator norm; and finally in~(vi), we noted that $\braket{\psi_N} = \sum_{n=N}^\infty p(n) = \widetilde{\chi}_N\qty(\sumno_{n=0}^\infty p(n) \ketbra{n}) = \widetilde{\chi}_N\qty(\psi_A)$. This concludes the proof. 
\end{proof}

Using the above technical lemmas, we can characterize the probabilistic pure state transformations under infinite-dimensional countably separable operations in terms of a majorization condition, thereby generalizing the finite-dimensional result of Ref.~\cite{Gheorghiu2008}.

\begin{prop}[(Majorization condition for probabilistic pure state transformation under countably separable channels)] \label{prop:theorem_1}
Let $\mathcal{H}_A$ and $\mathcal{H}_B$ be separable Hilbert spaces. Given any bipartite pure state $\ket{\psi}_{AB}$ on $AB$, if there exists a countably separable instrument that maps $\ket{\psi}_{AB}$ into an ensemble of pure states $\qty{p\qty(x),\ket{\phi_x}_{AB}}_{x=1,2,\ldots}$, then for all positive integers $N$ it holds that
\begin{equation}
\widetilde{\chi}_N\qty(\psi_A) \geq \sum_{x=1}^\infty p\qty(x)\,\widetilde{\chi}_N\big(\phi_x^A\big)\, ,
\end{equation}
where $\widetilde{\chi}_N$ is defined by~\eqref{eq:chi}.
\end{prop}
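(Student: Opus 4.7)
The plan is to reduce the statement to a single application of Lemma~\ref{lemma:prop_1}. First, I would unpack Definition~\ref{def:separable_channels}: because the instrument $(\EE_x)_x$ is countably separable, the augmented channel $\rho\mapsto \sum_x \EE_x(\rho)\otimes \ketbra{x}_X$ admits a Kraus representation with tensor-product Kraus operators, with the classical register $X$ attached to one of the parties. Since $X$ is perfectly correlated with the outcome, each Kraus operator has the form $\ket{x(\alpha)}_X\otimes L_\alpha\otimes M_\alpha$ for some deterministic label $x(\alpha)$, so after sorting by outcome and peeling off the $X$ factor I obtain, for every $x$, a product Kraus decomposition
\bb
\EE_x(\rho) = \sum_{k}\qty(L_{x,k}\otimes M_{x,k})\,\rho\,\qty(L_{x,k}\otimes M_{x,k})^\dag ,
\ee
where the entire doubly-indexed family $\qty{L_{x,k}\otimes M_{x,k}}_{x,k}$ satisfies the completeness relation $\sum_{x,k} L_{x,k}^\dag L_{x,k}^{\vphantom{\dag}}\otimes M_{x,k}^\dag M_{x,k}^{\vphantom{\dag}} = \id_{AB}$. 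In particular, the operator $R$ of~\eqref{R_operator} built from this family is the identity, so $\|R\|_\infty = 1$.

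Second, I would exploit purity of the outputs. Since $\EE_x(\psi) = p(x)\,\phi_x$ has rank at most one, and each summand $\qty(L_{x,k}\otimes M_{x,k})\ketbra{\psi}\qty(L_{x,k}\otimes M_{x,k})^\dag$ is a positive semidefinite operator of rank at most one, all vectors $\qty(L_{x,k}\otimes M_{x,k})\ket{\psi}$ indexed by the same $x$ must lie in the one-dimensional range of $\phi_x$. Absorbing phases, this yields coefficients $\alpha_{x,k}\in\C$ with
\bb
\qty(L_{x,k}\otimes M_{x,k})\ket{\psi}_{AB} = \alpha_{x,k}\,\ket{\phi_x}_{AB} ,\qquad \sum_k \abs{\alpha_{x,k}}^2 = p(x) .
\ee
Taking the partial trace over $B$ of the rank-one operator $\qty(L_{x,k}\otimes M_{x,k})\psi_{AB}\qty(L_{x,k}\otimes M_{x,k})^\dag = \abs{\alpha_{x,k}}^2\,\phi_x^{AB}$ and using positive homogeneity of $\widetilde{\chi}_N$ then gives
\bb
\widetilde{\chi}_N\!\qty(\Tr_B\!\qty[\qty(L_{x,k}\otimes M_{x,k})\,\psi_{AB}\,\qty(L_{x,k}\otimes M_{x,k})^\dag]) = \abs{\alpha_{x,k}}^2\,\widetilde{\chi}_N\!\qty(\phi_x^A) .
\ee

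Third, I would apply Lemma~\ref{lemma:prop_1} to the full family $\qty{L_{x,k}\otimes M_{x,k}}_{x,k}$. Summing the displayed identity first in $k$ collapses $\sum_k \abs{\alpha_{x,k}}^2$ to $p(x)$; summing next in $x$ and using $\|R\|_\infty = 1$ produces exactly
\bb
\sum_{x=1}^\infty p(x)\,\widetilde{\chi}_N\!\qty(\phi_x^A) \;\leq\; \widetilde{\chi}_N\!\qty(\psi_A) ,
\ee
which is the desired inequality.

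The step I expect to require the most care is the passage from purity of $\EE_x(\psi)$ to parallelism of the individual Kraus vectors: while it is a direct spectral/rank argument in a separable Hilbert space, one must verify that the (possibly infinite) sum of rank-one positive operators being itself rank one truly forces each summand's range to lie inside that of the sum, which in turn justifies the factorization above. The remaining issues are routine bookkeeping around the countable Kraus series: interchange with the partial trace and with $\widetilde{\chi}_N$ is legitimate because all terms are positive and trace-class, and the scope of Lemma~\ref{lemma:prop_1}, as stated for arbitrary countable families of product bounded operators, already accommodates the doubly-indexed family $\qty{L_{x,k}\otimes M_{x,k}}_{x,k}$ without modification.
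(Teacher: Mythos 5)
Your proposal is correct and follows essentially the same route as the paper: slice the product Kraus operators of the augmented channel by the classical outcome (the paper does this via $M_{k,x}\coloneqq\bra{x}M_k$, which is the rigorous version of your ``deterministic label'' refinement), use purity of each $\EE_x(\psi)$ to force every Kraus vector to be proportional to $\ket{\phi_x}$, and then apply Lemma~\ref{lemma:prop_1} to the doubly-indexed family together with the completeness relation giving $\|R\|_\infty=1$. No substantive differences.
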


\begin{proof}
Let $\psi_{AB} = \ketbra{\psi}_{AB}$ and
$\phi_x^{AB} = \ketbra{\phi_x}_{AB}$.
By assumption, we can find a countably separable channel $\EE\in \sep_\N$ such that
\bb
\EE(\psi_{AB}) = \sum_x p(x)\, \phi_x^{AB} \otimes \ketbra{x}_X\, ,
\label{prop:theorem_1_proof_eq1}
\ee
where, by convention, we assign the classical system $X$ to Bob (the actual choice is immaterial). Countable separability means that we can find a Kraus decomposition $\EE (\cdot) = \sum_k L_k^{\vphantom{\dag}} \otimes M_k^{\vphantom{\dag}} (\cdot) L_k^\dag \otimes M_k^\dag$, where $L_k$ acts on $\mathcal{H}_A$, $M_k$ maps $\mathcal{H}_B$ into $\mathcal{H}_B\otimes \mathcal{H}_X$, and due to the trace-preserving condition
\bb
\sum_k L_k^\dag L_k^{\vphantom{\dag}} \otimes M_k^\dag M_k^{\vphantom{\dag}} = \id_{AB}.
\label{prop:theorem_1_proof_eq1.5}
\ee
For any fixed $x=1,2,\ldots$, consider the operators
$M_{k,x}\coloneqq 
\bra{x} M_k$
acting on $\mathcal{H}_B$ and defined by the relation $\bra{\xi}M_{k,x}\ket{\zeta}_B \coloneqq \big(\bra{\xi}\otimes\bra{x}\big) M_k \ket{\zeta}$, for all $\ket{\xi}_B,\ket{\zeta}_B\in \mathcal{H}_B$. Then
\bb
\sum_k L_k^{\vphantom{\dag}} \otimes M_{k,x}^{\vphantom{\dag}}\, \psi_{AB}\, L_k^\dag \otimes M_{k,x}^\dag = \qty(\id_{AB}\otimes\bra{x}_X) \EE\qty(\psi_{AB}) \qty(\id_{AB}\otimes\ket{x}_X) = p(x)\, \phi_x^{AB}\, ,
\label{prop:theorem_1_proof_eq2}
\ee
which implies that
\bb
L_k \otimes M_{k,x} \ket{\psi}_{AB} = \sqrt{p(x)}\, c(k|x)\, \ket{\phi_x}_{AB}\qquad \forall\ k=1,2,\ldots ,
\label{prop:theorem_1_proof_eq3}
\ee
where $c(k|x)\in \C$ is a sequence of complex numbers such that
\bb
\sum_k \left|c(k|x)\right|^2 = 1\, .
\label{prop:theorem_1_proof_eq4}
\ee
We can now write
\bb
\sum_x p\qty(x)\,\widetilde{\chi}_N\big(\phi_x^A\big)\ &\eqt{(i)}\ \sum_{x,k} p\qty(x) \left|c(k|x)\right|^2 \,\widetilde{\chi}_N\big(\phi_x^A\big) \\
&\eqt{(ii)}\ \sum_{x,k} \widetilde{\chi}_N\qty(\Tr_B \!\left[ L_k^{\vphantom{\dag}} \otimes M_{k,x}^{\vphantom{\dag}}\, \psi_{AB}\, L_k^\dag \otimes M_{k,x}^\dag \right]) \\
&\leqt{(iii)}\ \widetilde{\chi}_N \qty(\psi_A)\, .
\label{prop:theorem_1_proof_eq5}
\ee
Here, (i)~comes from~\eqref{prop:theorem_1_proof_eq4}, (ii)~from~\eqref{prop:theorem_1_proof_eq3}, (iii)~from Lemma~\ref{lemma:prop_1} with the observation, based on~\eqref{prop:theorem_1_proof_eq1.5}, that
\bb
\left\|\sumno_{x,k} L_k^\dag L_k^{\vphantom{\dag}}\otimes M_{k,x}^\dag M_{k,x}^{\vphantom{\dag}}\right\|_\infty &= \left\|\sumno_{k} L_k^\dag L_k^{\vphantom{\dag}}\otimes M_{k}^\dag \left(\sumno_x \ketbra{x}\right) M_{k}^{\vphantom{\dag}}\right\|_\infty\\
&= \left\|\sumno_{k} L_k^\dag L_k^{\vphantom{\dag}}\otimes M_{k}^\dag M_{k}^{\vphantom{\dag}}\right\|_\infty\\
&= \left\|\id_{AB}\right\|_\infty\\
&=1<\infty
\ee
This completes the proof. 
\end{proof}

To say something about the monotonicity of the entanglement of formation, we need to work with the entropy rather than with the function $\widetilde{\chi}_N$ for representing majorization. To translate the relations given by Proposition~\ref{prop:theorem_1} into statements involving entropy, we here develop a strategy that, to the best of our knowledge, has not been worked out in the literature so far. The linchpin of it is the following integral representation of the quantum entropy in finite- or infinite-dimensional systems, which we believe is of independent interest. It was inspired by recent work done in this direction --- but in finite dimension --- in Refs.~\cite{Frenkel2023,Jencova2023,Hirche2023}.

\begin{lemma}[(Integral representation of the quantum entropy in terms of the function for representing majorization)] \label{lemma:integral_representation_entropy}
Let $\rho\in \mathcal{D}(\mathcal{H})$ be a quantum state over a (possibly infinite-dimensional) separable Hilbert space $\mathcal{H}$. Then its quantum entropy $S(\rho) = -\Tr \rho \log_2\rho$ defined by~\eqref{entropy}, be it finite or infinite, admits the integral representation
\bb
S(\rho) = \frac{1}{\ln 2} \left(\int_0^1 \frac{\dd \mu}{\mu}\, \min_{N\in \N} \big\{ \widetilde{\chi}_N(\rho) + N\mu\big\} - 1\right) ,
\label{integral_representation_entropy}
\ee
where $N$ runs on the non-negative integers, and $\widetilde{\chi}_N$ is defined by~\eqref{eq:chi}.
\end{lemma}

Note that the function $\N \ni N \mapsto \widetilde{\chi}_N(\rho) + N\mu$ admits a minimum for all values of $\mu > 0$, simply because $\lim_{N\to\infty} \left\{ \widetilde{\chi}_N(\rho) + N\mu \right\} = +\infty$. The proof of Lemma~\ref{lemma:integral_representation_entropy} is relegated to Appendix~\ref{app:proof_integral_representation_entropy}. We use the above integral formula to deduce the following corollary.

\begin{cor}[(Conversion of conditions from majorization to quantum entropy)] \label{cor:from_chi_to_S}
Let $\mathcal{H},\mathcal{H}'$ be separable Hilbert spaces. Consider a state $\rho\in \mathcal{D}(\mathcal{H})$ and a countable ensemble $\left\{ p(x),\, \omega_x\right\}_{x=1,2,\ldots}$ of states $\omega_x\in \mathcal{D}(\mathcal{H}')$ on $\mathcal{H}'$. Here, $p$ is a probability distribution. Assume that
\bb
\widetilde{\chi}_N (\rho) \geq \sum_{x=1}^\infty p(x)\, \widetilde{\chi}_N(\omega_x)\qquad \forall\ N=1,2,\ldots\,,
\label{chi_inequalities}
\ee
where $\widetilde{\chi}_N$ is defined by~\eqref{eq:chi}. Then, it holds that
\bb
S(\rho) \geq \sum_{x=1}^\infty p(x)\, S(\omega_x)\, ,
\label{S_inequality}
\ee
where the quantum entropy $S$ is defined by~\eqref{entropy}.
\end{cor}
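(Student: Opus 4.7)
The plan is to push the majorization-type hypothesis through the integral representation of the entropy provided by Lemma~\ref{lemma:integral_representation_entropy}. Concretely, for each fixed $\mu\in(0,1]$ and each fixed $N\in\N_+$, the assumption~\eqref{chi_inequalities} together with $\sum_x p(x)=1$ yields
\bb
\widetilde{\chi}_N(\rho)+(N-1)\mu \geq \sum_{x=1}^\infty p(x)\left[\widetilde{\chi}_N(\omega_x)+(N-1)\mu\right] \geq \sum_{x=1}^\infty p(x)\,\inf_{M\in\N_+}\!\big\{\widetilde{\chi}_M(\omega_x)+(M-1)\mu\big\}.
\ee
Since the bound on the right is independent of $N$, taking the infimum over $N$ on the left gives, for every $\mu\in(0,1]$,
\bb
\inf_{N\in\N_+}\!\big\{\widetilde{\chi}_N(\rho)+(N-1)\mu\big\} \;\geq\; \sum_{x=1}^\infty p(x)\,\inf_{M\in\N_+}\!\big\{\widetilde{\chi}_M(\omega_x)+(M-1)\mu\big\}.
\ee

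Next, I would integrate both sides against $d\mu/\mu$ over $(0,1]$. Because $\widetilde{\chi}_N(\sigma)\geq 0$ and $(N-1)\mu\geq 0$ for $N\geq 1$, every integrand is a non-negative measurable function of $\mu$, so Tonelli's theorem legitimately swaps the sum and the integral on the right. Applying Lemma~\ref{lemma:integral_representation_entropy} to each side then gives
\bb
\ln 2\cdot S(\rho)+1 \;\geq\; \sum_{x=1}^\infty p(x)\bigl(\ln 2\cdot S(\omega_x)+1\bigr) \;=\; \ln 2\sum_{x=1}^\infty p(x)\,S(\omega_x)+1,
\ee
where I used $\sum_x p(x)=1$. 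Subtracting $1$ and dividing by $\ln 2$ delivers~\eqref{S_inequality}.

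The only delicate point is the bookkeeping of possibly infinite quantities: if $\sum_x p(x)S(\omega_x)=\infty$ then the right-hand integral is also $+\infty$ by the same Tonelli argument, forcing the left-hand integral — and hence $S(\rho)$ — to be $+\infty$, so~\eqref{S_inequality} still holds trivially. I do not expect any serious obstacle here beyond verifying this measurability/non-negativity step; all the substantive work has already been done in Lemma~\ref{lemma:integral_representation_entropy}, which packages the quantum entropy into an integral of a quantity that is manifestly monotone under the pointwise inequalities~\eqref{chi_inequalities}.
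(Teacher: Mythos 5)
Your proof is correct and follows essentially the same route as the paper's: push the hypothesis through the integral representation of Lemma~\ref{lemma:integral_representation_entropy} pointwise in $\mu$, then swap sum and integral by Tonelli. The only cosmetic difference is that the paper fixes an explicit minimizer $N_\mu$ for the left-hand infimum, whereas you work directly with the infimum inequality, which is if anything slightly cleaner.
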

\begin{proof}
For each $\mu\in (0,1)$, let $N_\mu\in \N$ be such that $\min_{N\in \N} \left\{ \widetilde{\chi}_N(\rho) + N\mu\right\} = \widetilde{\chi}_{N_\mu}(\rho) + N_\mu\mu$. (As discussed above, such $N_\mu$ must exist.) Then 
\bb
\min_{N\in \N} \left\{ \widetilde{\chi}_N(\rho) + N\mu \right\} &= \widetilde{\chi}_{N_\mu}(\rho) + N_\mu \mu \\
&\geqt{(i)} \sum_x p(x)\, \widetilde{\chi}_{N_\mu}(\omega_x) + N_\mu \mu \\
&= \sum_x p(x) \left( \widetilde{\chi}_{N_\mu}(\omega_x) + N_\mu\mu\right) \\
&\geq \sum_x p(x) \min_{N\in \N} \left\{ \widetilde{\chi}_{N}(\omega_x) + N \mu\right\} ,
\label{from_chi_to_S_proof_eq1}
\ee
where in~(i) we used the assumption~\eqref{chi_inequalities}. For all $x$, the function $\mu \mapsto \min_{N\in \N} \left\{ \widetilde{\chi}_{N}(\omega_x) + N \mu\right\}$ is measurable due to Lemma~\ref{lemma:integral_representation_entropy}; furthermore, the series $\sum_x p(x) \min_{N\in \N} \left\{ \widetilde{\chi}_{N}(\omega_x) + N \mu\right\}$ converges for every $\mu$, as it is composed of positive terms and $\sum_x p(x) \min_{N\in \N} \left\{ \widetilde{\chi}_{N}(\omega_x) + N \mu\right\} \leq \sum_x p(x) \widetilde{\chi}_{0}(\omega_x) = 1$. This implies that the sum of the series is itself a measurable function of $\mu$. 

Integrating both sides of~\eqref{from_chi_to_S_proof_eq1} thus yields
\begin{align}
1 + (\ln2)\, S(\rho) &\eqt{(ii)} \int_0^1 \frac{\dd \mu}{\mu} \min_{N\in \N} \left\{ \widetilde{\chi}_N(\rho) + N\mu\right\} \nonumber \\
&\geq \int_0^1 \frac{\dd \mu}{\mu} \sum_x p(x) \min_{N\in \N} \big\{ \widetilde{\chi}_N(\omega_x) + N\mu\big\} \nonumber \\
&\eqt{(iii)}\, \sum_x p(x)\, \int_0^1 \frac{\dd \mu}{\mu} \min_{N\in \N} \big\{ \widetilde{\chi}_N(\omega_x) + N\mu\big\} \\
&\eqt{(iv)}\, \sum_x p(x) \big( 1+ (\ln2)\, S(\omega_x)\big) \nonumber \\
&=\, 1 + (\ln 2) \sum_x p(x)\, S(\omega_x)\, . \nonumber
\end{align}
Here, (ii)~and~(iv) follow from Lemma~\ref{lemma:integral_representation_entropy}, and in~(iii) we exchanged sum and integral using Tonelli's theorem, which is applicable because $p(x)\geq 0$ and $\min_{N\in \N} \big\{ \widetilde{\chi}_N(\omega_x) + N\mu\big\}\geq 0$. Simple algebraic manipulations lead to~\eqref{S_inequality}, thereby completing the proof.
\end{proof}

Combining the above result with Proposition~\ref{prop:theorem_1}, we obtain the following key statement. Note that if the target ensemble is composed of a single pure state only, then the deduction of the monotonicity of entanglement entropy from the majorization relation would be an immediate corollary of Ref.~\cite[Theorem~2.2]{LI2013384}. The main point, of course, is that we consider arbitrary pure-state ensembles as targets. 
We also note that infinite-dimensional bipartite pure state transformations under LOCC and stochastic LOCC (SLOCC) have been studied in Refs.~\cite{PhysRevA.70.050301,DBLP:journals/qic/OwariBNM08,AsaD:2017}, but the techniques found there are not directly applicable here since our focus is on a different family of operations, i.e., 
separable operations.

\begin{prop}[(Monotonicity of the entanglement of formation under 
separable operations)] \label{prop:monotonicity_entanglement_formation_under_sep}
Let $\mathcal{H}_A$ and $\mathcal{H}_B$ be any (possibly infinite-dimensional) separable Hilbert spaces. Then, the entanglement of formation defined on $\mathcal{D}\qty( \mathcal{H}_A\otimes \mathcal{H}_B)$ by~\eqref{eof} is monotonic under general separable channels (Definition~\ref{def:separable_channels}), i.e.,
\bb
E_f\big(\EE(\rho)\big) \leq E_f(\rho)\qquad \forall\ \rho = \rho_{AB} \in \mathcal{D}\qty(\mathcal{H}_A \otimes \mathcal{H}_B)\, , \qquad \forall\ \EE\in \sep\, .
\label{monotonicity_entanglement_formation_under_sep}
\ee
\end{prop}

\begin{proof}
We start by observing that given any bipartite pure state $\ket{\psi}_{AB}$ on $AB$, if there exists a countably separable instrument that transforms $\ket{\psi}_{AB}$ into an ensemble of pure states $\qty{p\qty(x),\ket{\phi_x}_{AB}}_{x=1,2,\ldots}$, then it holds that
\begin{equation}
S\qty(\psi_A) \geq \sum_{x=1}^\infty p\qty(x)S\big(\phi_x^A\big)\, .
\label{monotonicity_entropy_under_sep}
\end{equation}
Indeed, the above inequality follows directly by combining Proposition~\ref{prop:theorem_1} and Corollary~\ref{cor:from_chi_to_S}. 

The next step is to prove that~\eqref{monotonicity_entanglement_formation_under_sep} holds when the continuous version of the entanglement of formation $E_f$ (defined in~\eqref{eof}) is replaced by its discrete version $\widetilde{E}_f$ (defined in~\eqref{EoF_tilde}), and $\EE$ is assumed to be \emph{countably} separable (Definition~\ref{def:separable_channels}). To this end, one can employ a standard argument. Let $\qty(L_k\otimes M_k)_{k=1,2,\ldots}$ be a countable family of Kraus operators for $\EE\in \sep_\N$. Then, given an arbitrary pure-state decomposition of $\rho$ into pure states, say $\rho_{AB} = \sum_x p(x)\, \psi_x^{AB}$, we can define 
\bb
\ket{\phi_{x,k}}\coloneqq \frac{1}{\sqrt{q(k|x)}}\, \qty(L_k\otimes M_k) \ket{\psi_x}\, , \qquad q(k|x)\coloneqq \bra{\psi_x} L_k^\dag L_k^{\vphantom{\dag}} \otimes M_k^\dag M_k^{\vphantom{\dag}} \ket{\psi_x}\, .
\ee
By construction, there is a countably separable instrument that maps each $\ket{\psi_x}$ into the ensemble $\big\{q(k|x),\, \ket{\phi_{x,k}}\big\}_{k=1,2,\ldots}$, and hence by~\eqref{monotonicity_entropy_under_sep}, we have that
\bb
S\big(\psi_x^A\big) \geq \sum_{k=1}^\infty q(k|x)\, S\big(\phi_{x,k}^A\big)\, .
\ee
Multiplying by $p(x)$ and summing over $x$, we obtain
\bb
\sum_x p(x)\, S\big(\psi_x^A\big) \geq \sum_{x,k} p(x)\, q(k|x)\, S\big(\phi_{x,k}^A\big) \geq \widetilde{E}_f\qty(\mathcal{E}\qty(\rho))\, ,
\label{monotonicity_entanglement_formation_under_sep_proof_eq2}
\ee
where the last inequality follows because
\bb
\EE(\rho) = \sum_{x,k} p(x)\, q(k|x)\, \phi_{x,k}
\ee
is a valid discrete decomposition of $\EE(\rho)$ into pure states. Since~\eqref{monotonicity_entanglement_formation_under_sep_proof_eq2} holds for any discrete decomposition of $\rho$ into pure states, we can take the infimum over those, which yields $\widetilde{E}_f(\rho)$ on the leftmost side and thereby completes the proof of the monotonicity of $\widetilde{E}_f$ under countably separable channels.

We are now ready to prove~\eqref{monotonicity_entanglement_formation_under_sep} in full generality.\footnote{The argument that follows is courtesy of Maksim E.\ Shirokov (personal communication).} Let $\rho = \rho_{AB}$ be an arbitrary state. If $E_f(\rho) = +\infty$ there is nothing to prove; therefore, let us assume that $E_f(\rho)<\infty$. We now invoke Ref.~\cite[Lemma~3]{Shirokov2005} (with the substitutions $\Phi \mapsto \Tr_B$ and hence $\hat{H}_\Phi \mapsto E_f$, cf.\ definition of $\hat{H}_\Phi$ in Ref.~\cite[p.~9]{Shirokov2005}) to conclude the existence of a sequence $(\rho_n)_n$ of states $\rho_n = \rho_n^{AB}$ such that $S\qty(\rho_n^A) < \infty$ for all $n$, $\lim_{n\to\infty} \left\|\rho_n - \rho\right\|_1 = 0$, and furthermore $\lim_{n\to\infty} E_f(\rho_n) = E_f(\rho)$. In Ref.~\cite[Lemma~3]{Shirokov2005} it is further shown that the $\rho_n$ can be taken of finite rank, but we shall not make use of this particular fact here. Now, given a separable channel $\EE \in \sep$, consider a sequence $\big(\EE_n\big)_n$ of countably separable channels $\EE_n\in \sep_\N$ such that $\EE_n \tendsn{s} \EE$ with respect to the strong operator topology. Then
\bb
E_f\big(\EE(\rho)\big) &\leqt{(i)} \liminf_{n\to\infty} E_f\big(\EE_n(\rho_n)\big) \\
&\leqt{(ii)} \liminf_{n\to\infty} \widetilde{E}_f\big(\EE_n(\rho_n)\big) \\
&\leqt{(iii)} \liminf_{n\to\infty} \widetilde{E}_f(\rho_n) \\
&\eqt{(iv)} \liminf_{n\to\infty} E_f(\rho_n) \\
&\eqt{(v)} E_f(\rho)\, .
\label{monotonicity_entanglement_formation_under_sep_proof_eq4}
\ee
The justification of the derivation is as follows: (i)~is due to the lower semi-continuity of $E_f$, recalled in~\eqref{lsc_EoF}, which is applicable because
\bb
\left\|\EE_n(\rho_n) - \EE(\rho)\right\|_1 &\leq \left\|\EE_n(\rho_n) - \EE_n(\rho)\right\|_1 + \left\|\EE_n(\rho) - \EE(\rho) \right\|_1 \\
&\leq \left\|\rho_n - \rho\right\|_1 + \left\|\EE_n(\rho) - \EE(\rho) \right\|_1 \\
&\tendsn{} 0\, ;
\ee
here, we observed that the first term on the second line goes to zero due to the aforementioned~\cite[Lemma~3]{Shirokov2005}, and the second does the same because of strong convergence of $\EE_n$ to $\EE$; continuing, (ii)~follows by the elementary inequality~\eqref{EoF_smaller_EoF_tilde}; (iii)~is an application of the monotonicity of $\widetilde{E}_f$ under countably separable channels, established above; (iv)~is due to the equality in~\eqref{eq:min_S_A_B}, which holds because $\rho_n$ has finite local entropy for all $n$; and finally~(v) is again due to Ref.~\cite[Lemma~3]{Shirokov2005}. This concludes the justification of~\eqref{monotonicity_entanglement_formation_under_sep_proof_eq4} and, therefore, the proof.
\end{proof}

\begin{rem}
\label{rem:monotonicity}
While the above proof shows along the way the monotonicity of the discrete version $\widetilde{E}_f$ of the entanglement of formation in~\eqref{EoF_tilde} under countably separable operations, it remains open whether $\widetilde{E}_f$ is also monotonic under general separable operations. Our contribution here is to prove that the continuous version $E_f$ in~\eqref{eof} also has this latter property. 
\end{rem}

\subsection{Proof of the converse part}
\label{subsec:proof_of_converse}

We are finally ready to state and prove the converse part of our main result, Theorem~\ref{cost_thm}.

\begin{prop}[(Converse part of entanglement cost in infinite-dimensional systems)] \label{cost_prop_under_sep}
Let $A$ and $B$ be quantum systems represented by separable Hilbert spaces. For any bipartite state $\rho_{AB}\in\mathcal{D}\qty(\HH_A \otimes \HH_B)$ with finite local quantum entropy $\min\qty{S\qty(\rho_{A}),S\qty(\rho_{B})}<\infty$, it holds that
\bb
E_{c,\,\sep}\qty(\rho)\geq E_f^\infty\qty(\rho)\, ,
\ee
where $\sep$ denotes the set of separable operations (Definition~\ref{def:separable_channels}), and $E_f^\infty$ is the regularized entanglement of formation given by~\eqref{regularized_eof}.
\end{prop}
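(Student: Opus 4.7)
My plan is to mimic the finite-dimensional converse of Ref.~\cite{Hayden-EC}, substituting the two finite-dimensional ingredients it uses --- LOCC-monotonicity and asymptotic continuity of $E_f$ --- with the two infinite-dimensional surrogates developed in Sections~\ref{subsec:one_sided_continuity_EoF} and~\ref{subsec:monotonicity_EoF_SEP}. First, I would apply Lemma~\ref{same_cost_lemma} to reduce the claim to the bound $E_{c,\sep_\N}(\rho) \geq E_f^\infty(\rho)$. Given any achievable rate $r > E_{c,\sep_\N}(\rho)$, there is a sequence $\EE_n\in \sep_\N$ such that $\rho_n \coloneqq \EE_n\big(\Phi^{\otimes \lfloor rn\rfloor}\big)$ satisfies $\epsilon_n \coloneqq \tfrac{1}{2}\big\|\rho_n - \rho^{\otimes n}\big\|_1 \to 0$. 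The target is then to show $E_f^\infty(\rho) \leq r$.

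For the second step, I would invoke monotonicity of $E_f$ under countably separable operations (Proposition~\ref{prop:monotonicity_entanglement_formation_under_sep}) together with subadditivity and $E_f(\Phi)=1$ to obtain $E_f(\rho_n) \leq E_f\big(\Phi^{\otimes \lfloor rn \rfloor}\big) \leq rn$. The third and decisive step is to close the trace-distance gap by applying the one-sided continuity bound of Lemma~\ref{alsc_formation_lemma}. Since that bound requires an energy constraint on the \emph{reference} state $\rho^{\otimes n}$, I would first, assuming without loss of generality $S(\rho_A)<\infty$, use Lemma~\ref{associated_Hamiltonian_lemma} to produce a grounded Hamiltonian $H$ on $\HH_A$ satisfying the Gibbs hypothesis with $E\coloneqq \Tr[\rho_A H] < \infty$. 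Then form the additive tensor Hamiltonian
\bb
H^{(n)} \coloneqq \sum_{i=1}^n \id^{\otimes(i-1)}\otimes H\otimes \id^{\otimes(n-i)}
\ee
on $A^n$. The factorization $e^{-\beta H^{(n)}} = \big(e^{-\beta H}\big)^{\otimes n}$ shows at once that $H^{(n)}$ is grounded and obeys the Gibbs hypothesis, and $\Tr[\rho_A^{\otimes n} H^{(n)}] = nE$. Lemma~\ref{alsc_formation_lemma} then produces, with $\epsilon_n' \coloneqq \sqrt{\epsilon_n(2-\epsilon_n)}$,
\bb
E_f\big(\rho^{\otimes n}\big) - E_f(\rho_n) \leq \epsilon_n'\, F_{H^{(n)}}\!\left(\tfrac{nE}{\epsilon_n'}\right) + g(\epsilon_n')\, .
\ee

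The final step is to show that this error is $o(n)$. The key identity is that the Gibbs state of $H^{(n)}$ factorizes as $\gamma_{H,\beta}^{\otimes n}$, so its entropy and mean energy are both additive, yielding $F_{H^{(n)}}(ne) = n F_H(e)$ for every $e\geq 0$. Substituting $e = E/\epsilon_n'$, dividing by $n$, and combining with the bound from the second step gives
\bb
\frac{1}{n} E_f\big(\rho^{\otimes n}\big) \leq r \,+\, \epsilon_n'\, F_H\!\left(\tfrac{E}{\epsilon_n'}\right) + \frac{g(\epsilon_n')}{n}\, .
\ee
Since $\epsilon_n' \to 0$ we have $E/\epsilon_n' \to \infty$, and Lemma~\ref{F_H_o_of_E_lemma} gives $\epsilon_n' F_H(E/\epsilon_n') = E\cdot F_H(E/\epsilon_n') / (E/\epsilon_n') \to 0$; the $g(\epsilon_n')/n$ term manifestly vanishes. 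Passing to the limit $n\to\infty$ and then to the infimum over achievable $r$ delivers $E_f^\infty(\rho) \leq E_{c,\sep_\N}(\rho)$, which together with Lemma~\ref{same_cost_lemma} concludes the argument.

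The main obstacle will be exactly the last step, i.e., making the error term in the one-sided continuity bound $o(n)$ uniformly in the growing composite dimension. One cannot simply impose an energy constraint on $\rho_n$, since an arbitrary separable protocol may inject unbounded energy; the constraint must be placed on the reference state $\rho^{\otimes n}$ and has to survive the thermodynamic limit. It is precisely the additivity $F_{H^{(n)}}(ne) = n F_H(e)$ coupled with the sublinearity $F_H(x) = o(x)$ from Lemma~\ref{F_H_o_of_E_lemma} that turns what would otherwise be a divergent error into a genuine $o(n)$ correction --- this is the conceptual input that goes beyond the finite-dimensional proof and makes the argument work in our setting.
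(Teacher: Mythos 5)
Your proposal is correct and follows essentially the same route as the paper's proof: reduction to $\sep_\N$ via Lemma~\ref{same_cost_lemma}, monotonicity of $E_f$ under countably separable channels, the one-sided continuity bound with the additive Hamiltonian $H_{A^n}$ built from Lemma~\ref{associated_Hamiltonian_lemma}, the factorization $F_{H_{A^n}}(nE')=nF_{H_A}(E')$, and the sublinearity $F_H(E)=o(E)$ to kill the error term. The only cosmetic difference is that you let the error $\epsilon_n$ vanish along the sequence while the paper fixes $\epsilon$ before taking $n\to\infty$ and sends $\epsilon\to 0^+$ afterwards; both orderings are valid.
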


\begin{proof}
Without loss of generality, we are going to assume that $S(\rho_A) <\infty$. (The same argument also works when $S(\rho_B) <\infty$, up to exchanging the roles of $A$ and $B$.) By Lemma~\ref{associated_Hamiltonian_lemma}, we can construct a grounded Hamiltonian $H_A$ on $A$ that satisfies the Gibbs hypothesis and such that $E \coloneqq \Tr \qty[\rho_{A} H_A] < \infty$. Let us now extend this to a Hamiltonian on $\mathcal{H}_A^{\otimes n} = \mathcal{H}_{A^n}$ by setting
\bb
H_{A^n} \coloneqq H_A\otimes \id \otimes \ldots \id\ +\ \id\otimes H_A \otimes \id\otimes \ldots \otimes \id\ +\ \ldots\ +\ \id \otimes \ldots\otimes \id\otimes H_A\, .
\label{n_copy_Hamiltonian}
\ee
It is straightforward to see that $H_{A^n}$, just like $H_A$, is a grounded Hamiltonian that satisfies the Gibbs hypothesis.

With the definition of entanglement cost in~\eqref{eq:entanglement_cost}, fix any $\delta>0$. Let $r$ be an achievable rate for entanglement dilution of $\rho_{AB}$ under separable operations, satisfying
\bb
\label{eq:r_E_c}
r = E_{c,\,\sep}\qty(\rho) + \delta\, .
\ee
Then, by definition, for any $\epsilon>0$ there exists a sufficiently large integer $n_0$ and a sequence of separable protocols $\EE_n \in \sep$ 
with the property that for all integers $n>n_0$
\bb
\frac12 \left\|\rho_n - \rho^{\otimes n}\right\|_1 \leq \e\, , \qquad \rho_n \coloneqq \EE_n\qty(\Phi^{\otimes \lfloor rn\rfloor}).
\ee
Following Lemma~\ref{alsc_formation_lemma}, we set $\epsilon^\prime \coloneqq \sqrt{\e \qty(2-\e)}$.

With all the above techniques in hand, the crux of the proof reduces to the following chain of inequalities, joining into a similar route to the finite-dimensional case of Ref.~\cite{Hayden-EC}: for all $n>n_0$,
\bb
\floor*{rn}\ &\eqt{(i)}\ E_f\Big(\Phi^{\otimes \floor*{rn}}\Big) \\
&\geqt{(ii)}\ E_f 
\big(\rho_n\big)\\
&\geqt{(iii)}\ E_f\big(\rho^{\otimes n}\big) - \epsilon^\prime F_{H_{A^n}}\left( \frac{n E}{\epsilon^\prime} \right) - g(\epsilon^\prime) \\
&\eqt{(iv)}\ E_f\big(\rho^{\otimes n}\big) - n \epsilon^\prime F_{H_A}\left( \frac{E}{\epsilon^\prime} \right) - g(\epsilon^\prime)\, .
\label{cost_proof_crux}
\ee
Here, (i)~is simply the normalization of the entanglement of formation for (finite-dimensional) maximally entangled states; (ii)~follows from the monotonicity under the separable channel $\EE_n$, i.e., Proposition~\ref{prop:monotonicity_entanglement_formation_under_sep}; in~(iii), we used Lemma~\ref{alsc_formation_lemma} with respect to the Hamiltonian $H_{A^n}$, observing that $\Tr\qty[ \rho^{\otimes n} H_{A^n}] = n \Tr\qty[ \rho_{A} H_A] = nE$; finally in~(iv), we noticed that since by~\eqref{n_copy_Hamiltonian} $H_{A^n}$ models a non-interacting system, it holds that
\bb
\gamma_{H_{A^n},\,\beta} = \gamma_{H_A,\,\beta}^{\otimes n}\, ,
\ee
and hence $\beta_{H_{A^n}}(nE') = \beta_{H_A}(E')$ for all $E'\geq 0$, implying that
\bb
F_{H_{A^n}}(n E') &= S\big(\gamma_{H_{A^n},\, \beta_{H_{A^n}}(nE')}\big)\\
&= S\big(\gamma_{H_{A^n},\, \beta_{H_A}(E')}\big) = S\big(\gamma_{H_{A},\, \beta_{H_{A}}(E')}^{\otimes n}\big)\\
&= n\, S\big(\gamma_{H_{A},\, \beta_{H_{A}}(E')}\big)\\
&= n\, F_{H_A}(E')\, .
\ee
This completes the justification of~\eqref{cost_proof_crux}. Dividing both sides of the relation~\eqref{cost_proof_crux} by $n$ and taking the limit $n\to\infty$, we have that
\bb
r = \lim_{n\to\infty} \frac{\floor*{rn}}{n} \geq E_f^\infty(\rho) - \epsilon^\prime F_{H_A}\left( \frac{E}{\epsilon^\prime} \right) .
\ee
We are now ready to take the limit $\e\to 0^+$, i.e., $\e^\prime\to 0^+$, in the above inequality. Invoking Lemma~\ref{F_H_o_of_E_lemma} yields immediately
\bb
r \geq  E_f^\infty(\rho) - \lim_{\e^\prime\to 0^+} \epsilon^\prime F_{H_A}\left( \frac{E}{\epsilon^\prime} \right) = E_f^\infty(\rho)
\ee
for every fixed $E\geq 0$. Therefore, due to~\eqref{eq:r_E_c}, we have that
\begin{equation}
E_{c,\,\sep}\qty(\rho)
\geq E_f^\infty\qty(\rho) - \delta\, .
\end{equation}
Since this holds for an arbitrarily small choice of $\delta>0$, it follows that
\begin{equation}
E_{c,\,\sep}\qty(\rho) \geq E_f^\infty\qty(\rho)\, ,
\end{equation}
which concludes the proof.
\end{proof}

\begin{rem}
For readers who would prefer to work with the discrete version of the entanglement of formation $\widetilde{E}_f$ instead of its continuous counterpart $E_f$, it is possible to prove the above statement also by using $\widetilde{E}_f$. Indeed, the above proof works in the same way by replacing $E_{c,\,\sep}$ with $E_{c,\,\sep_\N}$, as these two quantities are anyway equal due to Lemma~\ref{same_cost_lemma}. The semi-continuity bound in Lemma~\ref{alsc_formation_lemma} holds also for $\widetilde{E}_f$~\cite[Proposition~4.B]{Shirokov2022}. Moreover, the advantage of working with $\widetilde{E}_f$ and $\sep_\N$ may be that the monotonicity statement in Proposition~\ref{prop:monotonicity_entanglement_formation_under_sep} becomes considerably simpler to establish, as an inspection of the proof of Proposition~\ref{prop:monotonicity_entanglement_formation_under_sep} reveals. 
But, as discussed in Remark~\ref{rem:monotonicity}, our contribution here is to establish the desired monotonicity of $E_f$ under separable rather than countably separable operations, and use this property to construct our proof.
\end{rem}

\section{Conclusion} \label{sec:conclusion}

Infinite-dimensional systems, whether natural or engineered, are common in physics. We commonly manipulate entanglement in such infinite-dimensional physical systems in experiments.
However, most of the significant results on asymptotic entanglement transformation established in the history of entanglement theory are simply lost as soon as one transitions to infinite-dimensional systems, due to the unavailability of key mathematical techniques that the finite-dimensional analyses rely on. 

Here we have solved the fundamental problem of characterizing an operational measure of entanglement, the entanglement cost~\cite{Bennett-distillation, Bennett-distillation-mixed, Bennett-error-correction,Hayden-EC, Buscemi2011}, in the general infinite-dimensional setting. To wit, we have proved that the entanglement cost of any infinite-dimensional state $\rho_{AB}$ with finite quantum entropy on either $A$ or $B$ 
is given by the regularized entanglement of formation. This extends the foundational finite-dimensional result of Refs.~\cite{Hayden-EC, Buscemi2011} to the case of infinite-dimensional quantum systems. We believe that the restriction to states with finite local entropy is physically justified, as states with finite local energy always obey it, for any choice of Hamiltonian that satisfies the Gibbs hypothesis.

The techniques used in the finite-dimensional analysis in Ref.~\cite{Hayden-EC}, such as strong typicality, monotonicity, and asymptotic continuity, are no longer directly applicable if we allow manipulation of infinite-dimensional systems. Nevertheless, we constructed an optimal protocol for entanglement dilution of infinite-dimensional mixed states by employing a known notion of typicality for countably infinite alphabets and a new ``wasteful'' protocol that deals specifically with the ``rare'' part of the state. 
Remarkably, the protocol is implementable with one-way LOCC and a finite amount of classical communication assisted by a finite number of ebits, yet it still approximately outputs copies of the infinite-dimensional target state.

We also proved the converse bound, which implies that this protocol is optimal even among all protocols that employ infinite-dimensional separable operations. To establish the converse, we have introduced techniques for proving the monotonicity of the entanglement of formation under countably separable operations, generalizing the results on finite-dimensional LOCC and SEP presented in Refs.~\cite{PhysRevLett.83.1455, PhysRevLett.84.4781, Vidal2000, Gheorghiu2008}. Among these techniques there is a new integral representation for the quantum entropy in infinite-dimensional systems, which was inspired by recent works on integral representations of quantum $f$-divergences. The other key piece of techniques we have used is the recent semi-continuity bound of the entanglement of formation for infinite-dimensional states in Ref.~\cite{Shirokov2022}, which is used in place of the conventional asymptotic continuity, appropriate only for finite-dimensional states~\cite{Nielsen2000, tightuniform, Mark2020}.

A fundamental open question that we leave for future investigation is whether other operational measures of more general quantum resources on infinite-dimensional systems~\cite{Kuroiwa2020, Kuroiwa2021, taming-PRL, taming-PRA} can also enjoy a characterization in terms of entropic quantities, in a similar way to what we did here. Also, from a more practical perspective, it would be interesting to consider infinite-dimensional multipartite entanglement manipulation on
quantum networks~\cite{PhysRevA.96.032330, YamasakiPhD, spee2021transformations}.

\section*{Acknowledgments}
L.L.\ thanks Maksim E.\ Shirokov and Mark M.\ Wilde for useful comments on the first version of this manuscript, and in particular the former for providing an insightful argument to complete the last part of the proof of Proposition~\ref{prop:monotonicity_entanglement_formation_under_sep}. H.Y.\ acknowledges JST PRESTO Grant Number JPMJPR201A, JPMJPR23FC, JSPS KAKENHI Grant Number JP23K19970, and MEXT Quantum Leap Flagship Program (MEXT QLEAP) JPMXS0118069605, JPMXS0120351339\@. K.K.\ was supported by a Mike and Ophelia Lazaridis Fellowship, a Funai Overseas Scholarship, and a Perimeter Residency Doctoral Award. P.H.\ acknowledges support from ARO (award W911NF2120214), DOE (Q-NEXT), CIFAR and the Simons Foundation. LL acknowledges financial support from the European Union under the European Research Council (ERC Grant Agreement No.~101165230) and from MIUR (Ministero dell'Istruzione, dell'Universit\`a e della Ricerca) through the project `Dipartimenti di Eccellenza 2023--2027' of the `Classe di Scienze' department at the Scuola Normale Superiore. 

Part of this work was carried out during the workshop ``Quantum resources: from mathematical foundations to operational characterisation'' held in Singapore in December~2022. Other key discussions happened at the conference ``It from Qubit'', held at the Perimeter Institute for Theoretical Physics in July--August~2023.

\bigskip
\noindent \textbf{Data availability statement.} No data sets were generated or analyzed during this study.

\medskip
\noindent \textbf{Competing interests.} The authors declare no competing interests.

\appendix

\section{Proof of lemma on Hamiltonian associated with a state}
\label{sec:proof_associated_hammltonian_lemma}

In this appendix, we present a simplified and fully self-contained proof of the very handy Lemma~\ref{associated_Hamiltonian_lemma}, derived by us before we became aware of Ref.~\cite[Proposition~4]{Shirokov-1}, which features a slightly more general statement.

\begin{lemma} \label{series_lemma}
Let $\qty(a_n)_{n\in\N}$ be a sequence of non-negative numbers $a_n\geq 0$ such that $\sum_{n=0}^\infty a_n < \infty$. Pick some $c > 1$. Then there exists another sequence $\qty(b_n)_{n\in \N}$ such that $b_n\geq 1$ for all $n$, $\lim_{n\to\infty} b_n = +\infty$ but
\bb
\sum_{n=0}^\infty a_n b_n \leq c \sum_{n=0}^\infty a_n < \infty\, .
\ee
\end{lemma}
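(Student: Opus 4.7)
\textbf{Proof plan for Lemma~\ref{series_lemma}.} The idea is to let $b_n$ be a piecewise constant sequence that increases by one unit each time the tail of $\sum a_n$ shrinks geometrically; a geometric schedule guarantees summability of $\sum a_n (b_n-1)$ against a convergent series $\sum k\, 2^{-k}$. A single scaling parameter is then used to match the constant $c$.

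The concrete steps I would carry out are as follows. Set $S \coloneqq \sum_{n=0}^\infty a_n$; the case $S = 0$ is trivial (take $b_n = n+1$), so assume $S > 0$. Let $\varepsilon \coloneqq c - 1 > 0$. By summability, the tails $R_N \coloneqq \sum_{n \geq N} a_n$ tend to $0$, so I can choose integers $0 = N_0 < N_1 < N_2 < \cdots$ with
\begin{equation*}
R_{N_k} \;=\; \sum_{n \geq N_k} a_n \;\leq\; 2^{-k} S \qquad (k \geq 1).
\end{equation*}
Now define $b_n \coloneqq 1 + \alpha\, k$ whenever $N_k \leq n < N_{k+1}$, where $\alpha > 0$ will be fixed at the end. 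Clearly $b_n \geq 1$ for all $n$ and $b_n \to \infty$ as $n \to \infty$.

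To estimate the weighted sum, split it according to the blocks $[N_k, N_{k+1})$:
\begin{equation*}
\sum_{n=0}^\infty a_n b_n \;=\; S \;+\; \alpha \sum_{k=0}^\infty k \sum_{n=N_k}^{N_{k+1}-1} a_n \;\leq\; S \;+\; \alpha \sum_{k=1}^\infty k \cdot R_{N_k} \;\leq\; S \;+\; \alpha S \sum_{k=1}^\infty k\, 2^{-k} \;=\; (1 + 2\alpha)\, S,
\end{equation*}
using $\sum_{k \geq 1} k\, 2^{-k} = 2$. Choosing $\alpha \coloneqq \varepsilon/2 = (c-1)/2$ yields $\sum_n a_n b_n \leq c\, S$, which is the desired bound.

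I do not anticipate a real obstacle here: the argument is entirely elementary, and the only mild point is to ensure that the growth rate of $b_n$ is slow enough to be offset by the geometric decay of the tails, which is precisely what the factor $\sum_k k\, 2^{-k} < \infty$ guarantees.
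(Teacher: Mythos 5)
Your proof is correct and follows essentially the same strategy as the paper's: both exploit the fact that the tails of $\sum_n a_n$ eventually drop below $2^{-k}S$ and control the excess $\sum_n a_n(b_n-1)$ against the convergent series $\sum_k k\,2^{-k}$ (your $b_n$ is just a piecewise-constant discretization of the paper's choice $b_n = 1-\log_2\big(\sum_{m\geq n}a_m\big)$). The only cosmetic difference is that you build the scaling parameter $\alpha=(c-1)/2$ directly into the construction, whereas the paper first proves the claim for the fixed constant $c=5$ and then rescales via $b_n'=p+(1-p)b_n$; both routes are equally valid.
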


\begin{proof}
It suffices to prove the claim for a fixed $c > 1$. Indeed, suppose that this has been done and that a sequence $\qty(b_n)_{n\in \N}$ with the required properties has been constructed. Given some other $c' < c$, writing $c' = p + \qty(1-p) c$ for some $p\in \qty(0,1)$, it suffices to construct the new sequence $b'_n \coloneqq p + \qty(1-p) b_n$, which satisfies the claim for $c'$ instead of $c$.

We will, therefore, fix $c=5$. Due to homogeneity we can also restrict to sequences $\qty(a_n)_{n\to\infty}$ satisfying $\sum_{n=0}^\infty a_n=1$. Finally, without loss of generality, we can assume that $\qty(a_n)_{n\in \N}$ does not become identically $0$ eventually in $n$ --- otherwise, the claim trivializes. Now, set
\bb
b_n \coloneqq 1 - \log_2 \left( \sumno_{m=n}^\infty a_m \right) .
\ee
Clearly, $b_n\geq 1$ and $\lim_{n\to\infty} b_n=+\infty$. Moreover, defining the sequence of non-decreasing numbers $\qty(n_k)_{k\in \N}$ such that 
\bb
n_k \coloneqq \min\left\{ n\in \N:\ \sumno_{m=n}^\infty a_m \leq 2^{-k} \right\} ,
\ee
so that $n_0=0$, we see that
\bb
\sum_{n=0}^\infty a_n \qty(b_n-1) &= \sum_{k=0}^\infty \sum_{n=n_k}^{n_{k+1}-1} a_n \left( - \log_2 \left( \sumno_{m=n}^\infty a_m \right)\right) \\
&\leq \sum_{k=0}^\infty \sum_{n=n_k}^{n_{k+1}-1} a_n \left( - \log_2 \left( \sumno_{m=n_{k+1}-1}^\infty a_m \right)\right) \\
&\leq \sum_{k=0}^\infty \qty(k+1) \sum_{n=n_k}^{n_{k+1}-1} a_n \\
&\leq \sum_{k=0}^\infty \qty(k+1) \sum_{n=n_k}^{\infty} a_n \\
&\leq \sum_{k=0}^\infty \qty(k+1)\, 2^{-k} \\
&= 4\, ,
\ee
so that indeed $\sumno_{n=0}^\infty a_n b_n \leq 5$, completing the proof.
\end{proof}

We are now ready to provide a full proof of Lemma~\ref{associated_Hamiltonian_lemma}.

\begin{proof}[Proof of Lemma~\ref{associated_Hamiltonian_lemma}]
It is well known that~(a) implies~(b), simply because the Gibbs state $\gamma_\beta = e^{-\beta H}/\Tr [e^{-\beta H}]$ maximizes the entropy for a given energy constraint, and hence $S(\rho) \leq S(\gamma_\beta)$, where $\beta$ is such that $\Tr \gamma_\beta H = \Tr \rho H$. (A solution to this equation must exist, because in infinite-dimensional systems $\lim_{\beta\to 0^+} \Tr \gamma_\beta H = +\infty$.) Whatever the value of $\beta$ obtained in this way, by expanding the logarithm it is immediate to check that $S(\gamma_\beta) < \infty$.

The converse statement (that (b)~implies~(a)) is less clear. To prove it, write the spectral decomposition of $\rho$ as $\rho = \sum_{n=0}^\infty p\qty(n) \ketbra{n}$, where $\qty{\ket{n}}_{n\in \N}$ is an orthonormal basis of $\mathcal{H}$. Since $(\ln 2)\, S\qty(\rho) = \sum_{n=0}^\infty p\qty(n) \ln \frac{1}{p\qty(n)}<\infty$, using Lemma~\ref{series_lemma} we can construct a sequence $\qty(b_n)_{n\in \N}$ such that $b_n\geq 1$ for all $n$, $\lim_{n\to\infty} b_n = +\infty$, and $\sum_{n=0}^\infty b_n p\qty(n) \ln \frac{1}{p\qty(n)} < \infty$. We now define the operator
\bb
H\coloneqq \sum_{n=1}^\infty b_n \ln \frac{1}{p\qty(n)} \ketbra{n}
\ee
with domain 
\bb
\dom\qty(H) \coloneqq \left\{ \ket{\psi} = \sumno_{n=0}^\infty \psi_n \ket{n}\in \mathcal{H}:\ \ \sumno_{n=1}^\infty b_n^2 \left(\ln \frac{1}{p\qty(n)}\right)^2 |\psi_n|^2 < \infty \right\} .
\ee
It is elementary to verify that $H$ is indeed self-adjoint, and that $\min\spec\qty(H) = \bra{0}H\ket{0} = 0$. Furthermore, for every $\beta>0$ it holds that
\bb
\Tr [e^{-\beta H}] &= 1 + \sum_{n=1}^\infty p\qty(n)^{\beta b_n} \\
&= 1 + \sum_{n\geq 1,\ \beta b_n \leq 1} p\qty(n)^{\beta b_n} + \sum_{n\geq 1,\ \beta b_n > 1} p\qty(n)^{\beta b_n} \\
&\leq 1 + \left|\left\{n\geq 1:\ \beta b_n \leq 1\right\}\right| + \sum_{n=1}^\infty p\qty(n) \\
&= 2 + \left|\left\{n\geq 1:\ \beta b_n \leq 1\right\}\right| < \infty\, ,
\ee
where the finiteness of $\left|\left\{n\geq 1:\ \beta b_n \leq 1\right\}\right|$ follows from the fact that $\lim_{n\to\infty} \beta b_n = +\infty$. This shows that $H$ satisfies the Gibbs hypothesis. Finally, it holds by construction that
\bb
\Tr [\rho H] = \sum_{n=1}^\infty b_n p\qty(n) \ln \frac{1}{p\qty(n)} < \infty\, ,
\ee
completing the proof.
\end{proof}

\section{Proof of the integral representation of the entropy}

\label{app:proof_integral_representation_entropy}

We give a proof of Lemma~\ref{lemma:integral_representation_entropy} on the integral representation of the entropy.

\begin{manuallemma}{\ref{lemma:integral_representation_entropy}}
Let $\rho\in \mathcal{D}(\mathcal{H})$ be a quantum state over a (possibly infinite-dimensional) separable Hilbert space $\mathcal{H}$. Then, its quantum entropy $S(\rho) = -\Tr \rho \log_2\rho$ defined by~\eqref{entropy}, be it finite or infinite, admits the integral representation
\bb
S(\rho) = \frac{1}{\ln 2} \left(\int_0^1 \frac{\dd \mu}{\mu}\, \min_{N\in \N} \big\{ \widetilde{\chi}_N(\rho) + N\mu\big\} - 1\right) ,
\label{integral_representation_entropy_appendix}
\ee
where $N$ runs on the non-negative integers, and $\widetilde{\chi}_N$ is defined by~\eqref{eq:chi}.
\end{manuallemma}

\begin{proof}
Without loss of generality,
we assume that the eigenvalues of $\rho$ are sorted in non-increasing order, i.e., 
\bb
p_0\geq p_1\geq p_2\geq \ldots. 
\label{integral_representation_entropy_proof_eq1}
\ee
Note that $\sum_n p_n = 1$, hence $p_n \tendsn{} 0$. In fact, it even holds that 
\bb
n\,p_n \tendsn{} 0\, .
\label{integral_representation_entropy_proof_eq2}
\ee
We learned this simple fact from Ref.~\cite{na_n_goes_to_0}, and we repeat here the proof found there for completeness. By the Cauchy condensation test, since $\sum_{n=0}^{\infty} p_n$ converges the same must be true for $\sum_{k=0}^{\infty}2^k\, p_{2^k}$, and therefore $2^k p_{2^k} \tends{}{k\to\infty} 0$. Now, for any $n\in \N_+$ we define $k_n \coloneqq \floor*{\log_2 n}$. Since $2^{k_n} \leq n \leq 2^{k_n+1}$, by monotonicity~\eqref{integral_representation_entropy_proof_eq1} we deduce that
\bb
p_{2^{k_n}} \leq p_n \leq p_{2^{k_n+1}}\, .
\ee
Multiplying by $n$, taking the limit $n\to\infty$, and using the fact that $k_n \tendsn{} \infty$ completes the proof of~\eqref{integral_representation_entropy_proof_eq2}.

We now go back to the proof of the integral formula. We start by observing that the function $\mu \mapsto \min_{N\in \N} \big\{ \widetilde{\chi}_N(\rho) + N\mu\big\}$, being the pointwise minimum of continuous function, is upper semi-continuous and in particular Lebesgue measurable. Hence, the integral on the right-hand side of~\eqref{integral_representation_entropy_appendix} is well defined.

Remember Abel's formula for summation by parts, which states that
\bb
\sum_{n=0}^N a_n b_n = a_N B_N + \sum_{n=0}^{N-1} \big(a_n - a_{n+1}\big) B_n
\label{integral_representation_entropy_proof_eq3}
\ee
for any two sets of numbers $a_1,\ldots, a_N$ and $b_1,\ldots, b_N$, where $B_m \coloneqq \sum_{k=0}^m b_k$. Setting $a_n = p_n$ and $b_n=1$ yields
\bb
\sum_{n=0}^N p_n = (N+1)\, p_N + \sum_{n=0}^{N-1} (n+1) \big(p_n - p_{n+1}\big)\, .
\ee
Since $(N+1)\, p_N\tends{}{N\to\infty} 0$ by~\eqref{integral_representation_entropy_proof_eq2}, we deduce that
\bb
\sum_{n=0}^{\infty} (n+1) \big(p_n - p_{n+1}\big) = \sum_{n=0}^\infty p_n = 1\, .
\label{integral_representation_entropy_proof_eq4}
\ee
By definition of $\widetilde{\chi}_n$ in~\eqref{eq:chi}, setting $a_n = \log_2 \frac{1}{p_n}$ and $b_n = p_n$ gives instead
\bb
\sum_{n=0}^N p_n \log_2 \frac{1}{p_n} &= \left(\log_2 \frac{1}{p_N}\right) \sum_{n=0}^N p_n + \sum_{n=0}^{N-1} \left(\log_2 \frac{p_{n+1}}{p_n} \right) \sum_{k=0}^n p_k \\
&= \left(\log_2 \frac{1}{p_N}\right) \sum_{n=0}^N p_n + \sum_{n=0}^{N-1} \left(\log_2 \frac{p_{n+1}}{p_n} \right) \left( 1 - \widetilde{\chi}_{n+1}(\rho)\right) \\
&= \left(\log_2 \frac{1}{p_N}\right) \sum_{n=0}^N p_n + \log_2 \frac{p_N}{p_0} + \sum_{n=0}^{N-1} \left(\log_2 \frac{p_{n}}{p_{n+1}} \right) \widetilde{\chi}_{n+1}(\rho) \\
&= -\left(\log_2 \frac{1}{p_N}\right) \widetilde{\chi}_{N+1}(\rho) + \log_2\frac{1}{p_0} + \sum_{n=0}^{N-1} \left(\log_2 \frac{p_{n}}{p_{n+1}} \right) \widetilde{\chi}_{n+1}(\rho)\, .
\label{integral_representation_entropy_proof_eq5}
\ee
To take the limit $N\to\infty$ in~\eqref{integral_representation_entropy_proof_eq5}, consider first the case where $S(\rho)<\infty$. Then, it holds that
\bb
0\leq \left(\log_2 \frac{1}{p_N}\right) \widetilde{\chi}_{N+1}(\rho) = \sum_{n=N+1}^\infty p_n \log_2 \frac{1}{p_N} \leq \sum_{n=N+1}^\infty p_n \log_2 \frac{1}{p_n} \tends{}{N\to\infty} 0\, ,
\ee
implying via~\eqref{integral_representation_entropy_proof_eq5} that
$\log_2\frac{1}{p_0}+\sum_{n=0}^{\infty} \left(\log_2 \frac{p_{n}}{p_{n+1}} \right) \widetilde{\chi}_{n+1}(\rho) = S(\rho)$.
If $S(\rho) = +\infty$ instead, then again due to~\eqref{integral_representation_entropy_proof_eq5}
\bb
\sum_{n=0}^{N-1} \left(\log_2 \frac{p_{n}}{p_{n+1}} \right) \widetilde{\chi}_{n+1}(\rho) = \sum_{n=0}^N p_n \log_2 \frac{1}{p_n} + \left(\log_2 \frac{1}{p_N}\right) \widetilde{\chi}_{N+1}(\rho) \geq \sum_{n=0}^N p_n \log_2 \frac{1}{p_n} \tends{}{N\to\infty} \infty\, .
\ee
This proves that the identity
\bb
S(\rho) = \log_2\frac{1}{p_0} + \sum_{n=0}^{\infty} \left(\log_2 \frac{p_{n}}{p_{n+1}} \right) \widetilde{\chi}_{n+1}(\rho)
\label{integral_representation_entropy_proof_eq6}
\ee
holds for all quantum states $\rho$, both of finite and of infinite quantum entropy. 

\begin{note}
If only finitely many eigenvalues of $\rho$ are nonzero, so that, say, $p_d = p_{d+1} = \ldots = 0$, then the correct way to interpret the above equation is by setting $\left(\log_2 \frac{p_{n}}{p_{n+1}} \right) \widetilde{\chi}_{n+1}(\rho) = 0$ for all $n\geq d-1$, which is only natural since $\widetilde{\chi}_{n+1}(\rho) = 0$ for these values of $n$. This is related to the fact that we set the value of the function $-x \log_2 x$ to be $0$ for $x=0$.
\end{note}

We can now write
\bb
\int_0^1 \frac{\dd \mu}{\mu}\, \min_{N\in \N} \big\{ \widetilde{\chi}_N(\rho) + N\mu\big\} \ &\eqt{(i)}\ \int_{p_0}^{1} \frac{\dd \mu}{\mu}\, \min_{N\in \N} \big\{ \widetilde{\chi}_N(\rho) + N\mu\big\} + \sum_{n=0}^\infty \int_{p_{n+1}}^{p_n} \frac{\dd \mu}{\mu}\, \min_{N\in \N} \big\{ \widetilde{\chi}_N(\rho) + N\mu\big\} \\
&\eqt{(ii)}\ \ln\frac{1}{p_0} + \sum_{n=0}^\infty \int_{p_{n+1}}^{p_n} \frac{\dd \mu}{\mu}\, \big( \widetilde{\chi}_{n+1}(\rho) + (n+1)\mu\big) \\
&=\ \ln\frac{1}{p_0} + \sum_{n=0}^\infty \left( (n+1) \left(p_n - p_{n+1}\right) + \widetilde{\chi}_{n+1}(\rho) \ln\frac{p_n}{p_{n+1}}\right) \\
&\eqt{(iii)}\ 1 + (\ln 2)\, S(\rho)\, .
\ee
We now justify the above derivation: (i)~is proved by noting that since the integrand is non-negative, we have that
\bb
\int_0^{p_0} \frac{\dd \mu}{\mu}\, \min_{N\in \N} \big\{ \widetilde{\chi}_N(\rho) + N\mu\big\} &= \lim_{N\to\infty} \int_{p_N}^{p_0} \frac{\dd \mu}{\mu}\, \min_{N\in \N} \big\{ \widetilde{\chi}_N(\rho) + N\mu\big\} \\
&= \lim_{N\to\infty} \sum_{n=0}^{N-1} \int_{p_{n+1}}^{p_n} \frac{\dd \mu}{\mu}\, \min_{N\in \N} \big\{ \widetilde{\chi}_N(\rho) + N\mu\big\}\, ;
\ee
in~(ii), we noted that for all $n\in \N$ and $p_{n+1}\leq \mu\leq p_n$, the difference
\bb
\widetilde{\chi}_{N+1}(\rho) + (N+1)\mu - \left(\widetilde{\chi}_N(\rho) + N\mu\right) = \mu - p_N
\ee
is non-positive for $N\leq n$, and non-negative for $N\geq n+1$; consequently,
\bb
\label{eq:inf_n_explicit}
\min_{N\in \N} \big\{ \widetilde{\chi}_N(\rho) + N\mu\big\} = \widetilde{\chi}_{n+1}(\rho) + (n+1)\mu\, ;
\ee
moreover, for the same reasons, it is clear that for $\mu\geq p_0$ we have that $\min_{N\in \N} \big\{ \widetilde{\chi}_N(\rho) + N\mu\big\} = \widetilde{\chi}_0(\rho) = 1$; finally, in~(iii) we used~\eqref{integral_representation_entropy_proof_eq4} and~\eqref{integral_representation_entropy_proof_eq6}.
This completes the proof.
\end{proof}

\bibliographystyle{unsrt}
\bibliography{biblio}

\end{document}